\PassOptionsToPackage{hypertexnames=false}{hyperref}
\documentclass[acmsmall,screen,natbib]{acmart}

\usepackage[T1]{fontenc}

\usepackage{xparse}
\usepackage{xkeyval}

\usepackage{algorithm2e}
\usepackage{adjustbox}
\usepackage{colortbl}
\usepackage{doi}
\usepackage[inline]{enumitem}
\usepackage{graphicx}
\usepackage{listings}
\usepackage{siunitx}
\usepackage{subcaption}
\usepackage{soul}
\usepackage[many]{tcolorbox}
\usepackage[dvipsnames,table]{xcolor}

\usepackage{diagbox}
\usepackage{scalerel}

\usepackage{amsmath}
\usepackage{amssymb}
\usepackage{amsthm}
\usepackage{mathtools}
\usepackage{thmtools}
\usepackage[mathscr]{eucal}
\usepackage{textcomp} 

\usepackage{wasysym}
\usepackage{stmaryrd}
\usepackage{quiver}
\usepackage[bb=boondox]{mathalpha}

\usepackage{tikz}
\usepackage{tikz-qtree}
\usetikzlibrary{shapes}
\usetikzlibrary{arrows.meta}
\usetikzlibrary{calc}
\usetikzlibrary{positioning}
\usetikzlibrary{decorations.pathmorphing}
\usetikzlibrary{decorations.markings}
\usetikzlibrary{decorations.shapes}
\usetikzlibrary{backgrounds}

\usepackage{multibib}
\newcites{app}{Supplemental References}



\DeclarePairedDelimiter{\set}{\lbrace}{\rbrace}

\SetKwIF{If}{ElseIf}{Else}{$\codeIf$}{$\codeThen$}{$\elseif$}{$\codeElse$}{}
\SetKwBlock{Block}{}{}
\LinesNumbered
\DontPrintSemicolon
\SetNlSty{}{\color{gray}}{}
\SetKwComment{Comment}{\# }{} 
\SetInd{0.25em}{0.25em}

\usepackage{newfloat}
\DeclareFloatingEnvironment{listing}

\makeatletter
\renewcommand{\nllabel}[1]
 {{\let\@currentlabel\algocf@currentlabel
  \let\@currentcounter\algocf@currentcounter
  \label{#1}}}%

\renewcommand{\algocf@nl@sethref}[1]{%
  \renewcommand{\theHAlgoLine}{\thealgocfproc.#1}%
  \hyper@refstepcounter{AlgoLine}%
  \gdef\@currentlabel{#1}%
  \gdef\@currentcounter{AlgoLine}%
  \gdef\algocf@currentlabel{#1}%
  \gdef\algocf@currentcounter{AlgoLine}%
 }%
\makeatother

\usepackage[capitalize]{cleveref}
\crefalias{AlgoLine}{line}
\newcommand{\crefrangeconjunction}{--}
\makeatletter
\DeclareRobustCommand{\labelcrefrange}[2]{\@crefrangenostar{labelcref}{#1}{#2}}
\makeatother
\crefname{equation}{}{}
\Crefname{equation}{Eq.}{Eqs.}

\crefname{footnote}{footnote}{footnotes}
\Crefname{footnote}{Footnote}{Footnotes}
\Crefname{definition}{Definition}{Definitions}
\crefname{definition}{Def.}{Defs.}
\Crefname{line}{Line}{Lines}
\crefname{line}{line}{lines}
\crefformat{section}{#2\S#1#3}
\crefmultiformat{section}{#2\S#1#3}{ and~#2\S#1#3}{, #2\S#1#3}{, and~#2\S#1#3}
\crefrangeformat{section}{#3\S#1#4\crefrangeconjunction{}#5\S#2#6}
\Crefformat{section}{Section~#2#1#3}
\Crefmultiformat{section}{Sections~#2#1#3}{ and~#2#1#3}{, #2#1#3}{, and~#2#1#3}
\Crefrangeformat{section}{Sections~#3#1#4\crefrangeconjunction{}#5#2#6}

\Crefname{sublisting}{Listing}{Listings}

\AddToHook{cmd/appendix/before}{%
  \crefalias{section}{appendix}%
  \crefalias{subsection}{appendix}
}



\numberwithin{equation}{section}

\newcommand{\defnstyle}{acmdefinition}
\newcommand{\plainstyle}{acmplain}

\declaretheorem[style=\defnstyle,sibling=definition,numberwithin=section]{definition}
\declaretheorem[style=\defnstyle,sibling=definition]{notation}
\declaretheorem[style=\defnstyle,sibling=definition]{remark}

\declaretheorem[style=\plainstyle,sibling=definition]{lemma}

\makeatletter
\declaretheoremstyle
    [headformat={\NOTE}, 
    notebraces={}{}, 
    notefont=\bfseries, 
    preheadhook=\def\thmt@space{}, 
    numbered=no,
    ]{expandedDef}
\makeatother

\makeatletter
\declaretheoremstyle
    [headformat={\NOTE}, 
    notebraces={}{}, 
    notefont=\bfseries, 
    preheadhook=\def\thmt@space{}, 
    bodyfont     = \normalfont\itshape,  
    numbered=no
    ]{expandedThm}
\makeatother

\newcommand{\highlightBox}[2]{%
  \tcbox[
    on line,
    colback=#1,
    colframe=#1,
    boxrule=0mm,
    boxsep=0.5mm,
    left=0mm,
    right=0mm,
    top=0mm,
    bottom=0mm]{\ensuremath{#2}}%
}

\tcbset{
  plaincode-base/.style={
    enhanced,
    sharp corners,
    colback=white,
    colframe=white,
    left=-0.5mm,
    right=-8mm,
    top=-1mm,
    bottom=-1mm,
    boxsep=1.0mm,
    toptitle=0.0mm,
  }
}

\newtcolorbox{plainCodeBox}[2][]{
  plaincode-base,
  width=#2,
  #1
}

\newtcolorbox{plainCodeBoxFramed}[2][]{
  plaincode-base,
  width=#2,
  overlay={
    \draw[line width=0.8pt, black]
      ([xshift=4mm]frame.north west) -- ([xshift=-1.5mm]frame.north east);
    \draw[line width=0.8pt, black]
      ([xshift=4mm]frame.south west) -- ([xshift=-1.5mm]frame.south east);
  },
  #1
}

\newtcolorbox{syntaxbox}[2][]{
  enhanced,
  sharp corners,
  colbacktitle=gray!20!white,
  coltitle=black,
  colback=white,
  colframe=white,
  boxrule=0.5pt,
  left=0mm,
  right=0mm,
  top=0mm,
  bottom=-0.5mm,
  boxsep=1.0mm,
  toptitle=0.0mm,
  nobeforeafter,
  box align=top,
  halign title=flush center,
  subtitle style={halign=center},
  width=#2,
  #1
}

\newtcolorbox{titlecodebox}[2][]{
  enhanced,
  sharp corners,
  colback=white,
  colframe=black!50,      
  boxrule=0.4pt,
  colbacktitle=gray!20!white,   
  coltitle=black,
  left=-1mm,
  right=-4mm,
  top=0.5mm,
  bottom=0.5mm,
  boxsep=1mm,
  toptitle=0.5mm,
  titlerule=0pt,          
  nobeforeafter,
  box align=top,
  halign title=flush center,
  width=#2,
  #1
}

\tikzset{
  decoration={
    shape backgrounds,
    shape=circle,
    shape size=0.3pt,
    shape sep=2.0pt,
    },
  frameDotted/.style = {decorate, fill=black}
}

\newcommand{\transform}{}
\newcommand{\transformCoupling}{}
\newcommand{\transformModCoupling}{}
\newcommand{\transformMerge}{}
\newcommand{\transformModMerge}{}
\newcommand{\transformErasure}{}
\newcommand{\transformPPEval}{}
\newcommand{\transformAD}{}
\newcommand{\transformInference}{}
\newcommand{\transformGradInfFD}{}
\newcommand{\transformGradInfAD}{}

\RenewDocumentCommand{\transform}{g}{\mathcal{T}\IfValueT{#1}{\set{#1}}}
\RenewDocumentCommand{\transformCoupling}{g}{\mathcal{C}\IfValueT{#1}{\set{#1}}}
\RenewDocumentCommand{\transformModCoupling}{g}{\mathcal{C}^*\IfValueT{#1}{\set{#1}}}
\RenewDocumentCommand{\transformMerge}{g}{\mathcal{M}\IfValueT{#1}{\set{#1}}}
\RenewDocumentCommand{\transformModMerge}{g}{\mathcal{M}^*\IfValueT{#1}{\set{#1}}}
\RenewDocumentCommand{\transformErasure}{g}{\mathcal{E}\IfValueT{#1}{\set{#1}}}
\RenewDocumentCommand{\transformPPEval}{g}{\mathcal{S}\IfValueT{#1}{\set{#1}}}
\RenewDocumentCommand{\transformAD}{g}{\mathcal{D}\IfValueT{#1}{\set{#1}}}
\RenewDocumentCommand{\transformInference}{g}{\mathcal{I}\IfValueT{#1}{\set{#1}}}
\RenewDocumentCommand{\transformGradInfFD}{g}{\mathcal{G}_{\Delta}\IfValueT{#1}{\set{#1}}}
\RenewDocumentCommand{\transformGradInfAD}{g}{\mathcal{G}\IfValueT{#1}{\set{#1}}}

\newcommand{\symbolProofRelevant}{\scriptscriptstyle\square}
\newcommand{\relationCoupling}{R_\mathcal{C}}
\newcommand{\proofCoupling}{\mathrm{Prf}_{\mathcal{C}}}
\newcommand{\relationCouplingProof}{R_\mathcal{C}^{\symbolProofRelevant}}

\newcommand{\relationPPEval}{R_{\mathcal{S}}}
\newcommand{\relationPPEvalProof}{R_{\mathcal{S}}^{\symbolProofRelevant}}
\newcommand{\proofPPEval}{\mathrm{Prf}_{\mathcal{S}}}

\newcommand{\systemname}[0]{\textsf{GradInf}}

\newcommand{\diff}{\mathrm{d}}%
\newcommand{\expect}{\mathbb{E}}%
\newcommand{\variance}{\mathbb{V}}%
\newcommand{\ter}[0]{t}

\newcommand{\type}[0]{\tau}
\newcommand{\base}[0]{\sigma}
\newcommand{\var}[0]{x}

\newcommand{\emptyEnv}{\mathbin{\vcenter{\hbox{$\scriptstyle\bullet$}}}}
\newcommand{\extend}[3]{#1, #2 : #3}

\newcommand{\lang}{\lambda}
\newcommand{\langP}{\lang_P}
\newcommand{\langPP}{\lang_{PP}}

\newcommand{\sem}[1]{\llbracket #1\rrbracket}
\newcommand{\semP}[1]{\llbracket #1\rrbracket}

\newcommand{\bind}[0]{\gg\mkern-10mu\scalebox{1}[1]{=}}
\newcommand{\defas}{\coloneqq}
\newcommand{\asdef}{\eqcolon}
\newcommand{\setvar}{\coloneqq}

\newcommand{\colforsyntax}[1]{\mathbf{\color{syntaxcolor}#1}}

\colorlet{highlightColor}{blue!5}
\colorlet{highlightColorDeep}{blue!10}
\colorlet{highlightAltColor}{gray!10}
\colorlet{highlightAltColorDeep}{gray!18}
\colorlet{highlightAltAltColor}{yellow!12}
\colorlet{highlightAltAltColorDeep}{yellow!20}

\colorlet{highlightColorPurpose}{highlightAltAltColor}
\colorlet{highlightColorSemantics}{highlightAltColor}

\colorlet{highlightColorPrimitive}{highlightAltAltColorDeep}
\colorlet{highlightColorCoupling}{highlightAltAltColorDeep}
\colorlet{highlightColorPartialEval}{highlightAltAltColorDeep}
\colorlet{highlightColorInferenceOverview}{highlightColorDeep}

\colorlet{highlightColorTheorem}{highlightAltAltColorDeep}

\colorlet{highlightColorProbability}{highlightColor}
\newcommand{\highlightProbability}[1]{\highlightBox{highlightColorProbability}{#1}}
\newcommand{\highlightProbabilityName}{a \highlightProbability{\text{blue shade}} }

\colorlet{highlightColorDCC}{highlightAltColor}
\newcommand{\highlightDCC}[1]{\highlightBox{highlightColorDCC}{#1}}
\newcommand{\highlightDCCName}{a \highlightDCC{\text{gray shade}} }
\colorlet{highlightColorFactorization}{highlightAltAltColor}
\colorlet{highlightColorFactorizationDeep}{highlightAltAltColorDeep}
\newcommand{\highlightFactorization}[1]{\highlightBox{highlightColorFactorization}{#1}}
\newcommand{\highlightFactorizationDeep}[1]{\highlightBox{highlightColorFactorizationDeep}{#1}}
\newcommand{\highlightFactorizationName}{a \highlightFactorization{\text{yellow shade}} }

\colorlet{highlightColorBernoulli}{highlightColor}
\colorlet{highlightColorBernoulliDeep}{highlightColorDeep}
\newcommand{\highlightBernoulliName}{a \highlightBox{highlightColorBernoulli}{\text{blue shade}} }

\colorlet{syntaxcolor}{blue!60!black}
\colorlet{fadedcolor}{gray!80!black}

\newcommand{\LstBangBang}{{\small\ttfamily\char`\!\char`\!}}

\newcommand{\codeIf}[0]{\colforsyntax{if}}

\newcommand{\codeLet}[0]{\colforsyntax{let}}
\newcommand{\semLet}[0]{\mathbf{let}}
\newcommand{\codeThen}[0]{\colforsyntax{then}}
\newcommand{\codeElse}[0]{\colforsyntax{else}}
\newcommand{\codeIfThenElse}[3]{\codeIf~#1~\codeThen~#2~\codeElse~#3}
\newcommand{\semIfThenElse}[3]{\mathbf{if}~#1~\mathbf{then}~#2~\mathbf{else}~#3}
\newcommand{\codeReturn}[0]{\colforsyntax{return}}
\newcommand{\semReturn}[0]{\mathbf{return}}
\newcommand{\codeFst}[0]{\colforsyntax{prj}_1}
\newcommand{\codeSnd}[0]{\colforsyntax{prj}_2}
\newcommand{\semFst}[0]{\mathbf{prj}_1}
\newcommand{\semSnd}[0]{\mathbf{prj}_2}
\newcommand{\semProj}[0]{\mathbf{prj}}

\newcommand{\codePGets}[0]{\gets_{\colforsyntax{P}}}
\newcommand{\semPGets}[0]{\gets_{\mathbf{P}}}
\newcommand{\codePReturn}[0]{\codeReturn_{\colforsyntax{P}}}
\newcommand{\semPReturn}[0]{\semReturn_{\mathbf{P}}}
\newcommand{\semPBind}[0]{\bind_{\mathbf{P}}~}

\newcommand{\codePPGets}[0]{\gets_{\colforsyntax{PP}}}
\newcommand{\codePPReturn}[0]{\codeReturn_{\colforsyntax{PP}}}

\newcommand{\codePPResidual}[0]{\colforsyntax{residual}_{\colforsyntax{PP}}}
\newcommand{\codePPPrimal}[0]{\colforsyntax{primal}_{\colforsyntax{PP}}}

\newcommand{\codeRGets}[0]{\gets_{\colforsyntax{R}}}

\newcommand{\codeRHide}[0]{\colforsyntax{hide}}

\newcommand{\judgeHidden}[0]{\mathsf{Hidden}}
\newcommand{\judgeTransparent}[0]{\mathsf{Transparent}}

\newcommand{\codeSplit}[0]{\colforsyntax{split}}
\newcommand{\semSplit}[0]{\mathbf{split}}

\newcommand{\codeIterate}[0]{\colforsyntax{iterate}}
\newcommand{\codePIterate}[0]{\codeIterate_{\colforsyntax{P}}}

\newcommand{\codeUniform}[0]{\colforsyntax{uniform}}
\newcommand{\codeNormal}[0]{\colforsyntax{normal}}
\newcommand{\codeNormalCRN}[0]{\colforsyntax{normal_\texttt{CRN}}}
\newcommand{\codeFlip}[0]{\colforsyntax{flip}}
\newcommand{\codeFlipWeighted}[0]{\colforsyntax{flip_\texttt{SHARED-SAMPLE}}}
\newcommand{\codeNormalWeighted}[0]{\colforsyntax{normal_\texttt{SHARED-SAMPLE}}}
\newcommand{\codeFlipAlwaysFlip}[0]{\colforsyntax{flip_\texttt{FIXED-PROP}}}

\newcommand{\codeFlipIndep}[0]{\colforsyntax{flip_\texttt{INDEP}}}
\newcommand{\codeFlipCRN}[0]{\colforsyntax{flip_\texttt{CRN}}}
\newcommand{\codeFlipLayerAlwaysFlip}[1]{\colforsyntax{flip\text{-}layer_{\texttt{FIXED-PROP}}}_{#1}}
\newcommand{\codeFlipLayerSameOrIndep}[1]{\colforsyntax{flip\text{-}layer_{\texttt{SAME-OR-INDEP}}}_{#1}}
\newcommand{\codeFlipLayerSameOrAnti}[1]{\colforsyntax{flip\text{-}layer_{\texttt{SAME-OR-ANTI}}}_{#1}}
\newcommand{\codeFlipEnum}[0]{\colforsyntax{flip_\texttt{ENUM}}}
\newcommand{\codeCategorical}[0]{\colforsyntax{categorical}}
\newcommand{\codeCategoricalCRN}[0]{\colforsyntax{categorical_\texttt{CRN}}}
\newcommand{\codeCategoricalENUM}[0]{\colforsyntax{categorical_{\texttt{ENUM}}}}
\newcommand{\codeCategoricalMaxIndep}[0]{\colforsyntax{categorical_\texttt{MAX-INDEP}}}

\newcommand{\codePoisson}[0]{\colforsyntax{poisson}}
\newcommand{\codePoissonFixedProp}[0]{\colforsyntax{poisson_\texttt{FIXED-PROP}}}

\newcommand{\codeSeed}[0]{\colforsyntax{seed}}
\newcommand{\semSeed}[0]{\mathbf{seed}}

\newcommand{\semMerge}[0]{\mathbf{merge}}
\newcommand{\semMergeProof}[0]{\mathbf{merge}^{\symbolProofRelevant}}

\newcommand{\codeSum}[0]{\colforsyntax{sum}}
\newcommand{\codeNormalize}[0]{\colforsyntax{normalize}}
\newcommand{\codeMap}[0]{\colforsyntax{map}}
\newcommand{\codeScanl}[0]{\colforsyntax{scanl}}
\newcommand{\codeSequenceR}[0]{\colforsyntax{sequence}_{\colforsyntax{R}}}
\newcommand{\codeSequenceP}[0]{\colforsyntax{sequence}_{\colforsyntax{P}}}

\newcommand{\codeZip}[0]{\colforsyntax{zip}}

\newcommand{\codeIndex}[0]{\colforsyntax{index}}

\newcommand{\semExpect}[0]{\mathbf{expect}}

\ExplSyntaxOff

\newcommand{\typeProduct}{\times}
\newcommand{\typeFunction}{\to}

\newcommand{\typeReal}{\texttt{Real}}
\newcommand{\typeInteger}{\texttt{Int}}
\newcommand{\typeBool}{\texttt{Bool}}
\newcommand{\typeUnit}{\texttt{Unit}}
\newcommand{\typeProb}{\texttt{Prob}}

\newcommand{\typeResidual}{\texttt{Residual}}

\newcommand{\typePProb}{\texttt{PProb}}
\newcommand{\typeSeed}{\texttt{Seed}}

\newcommand{\codeUnit}{\colforsyntax{unit}}
\newcommand{\semUnit}{\mathbf{unit}}
\newcommand{\codeTrue}[0]{\colforsyntax{true}}
\newcommand{\semTrue}[0]{\mathbf{true}}
\newcommand{\codeFalse}[0]{\colforsyntax{false}}
\newcommand{\semFalse}[0]{\mathbf{false}}

\newcommand{\setVariables}{\mathsf{Variables}}
\newcommand{\setConstants}{\mathsf{Constants}}
\newcommand{\setBaseTypes}{\mathsf{BaseTypes}}
\newcommand{\setTypes}{\mathsf{Types}}
\newcommand{\setContexts}{\mathsf{Contexts}}
\newcommand{\setPrimitives}{\mathsf{Primitives}}
\newcommand{\setTerms}{\mathsf{Terms}}
\newcommand{\setBool}{\mathbb{B}}
\newcommand{\setReal}{\mathbb{R}}
\newcommand{\setInteger}{\mathbb{Z}}
\newcommand{\setUnit}{\mathbf{1}}

\newcommand{\catQBS}{\mathbf{QBS}}
\newcommand{\catMeas}{\mathbf{Meas}}

\newcommand{\catSyntax}{\mathrm{Syntax}}

\newcommand{\GradInfA}{\textsf{GradInf-CRN-VE}}
\newcommand{\GradInfB}{\textsf{GradInf-CRN-TSMC}}
\newcommand{\GradInfC}{\textsf{GradInf-MI-SIR}}
\newcommand{\GradInfD}{\textsf{GradInf-MI-TSMC}}

\setcopyright{cc}
\setcctype{by}
\acmDOI{10.1145/3808321}
\acmYear{2026}
\acmJournal{PACMPL}
\acmVolume{10}
\acmNumber{PLDI}
\acmArticle{243}
\acmMonth{6}
\acmSubmissionID{pldi26main-p518-p}
\received{2025-11-13}
\received[accepted]{2026-04-03}

\begin{document}

\title{GradInf: Gradient Estimation as Probabilistic Inference}

\author{Gaurav Arya}
\orcid{0000-0003-1662-3037}
\affiliation{%
  \institution{Carnegie Mellon University}
  \city{Pittsburgh}
  \country{USA}
}
\email{gauravar@cmu.edu}

\author{Mathieu Huot}
\orcid{0000-0002-5294-9088}
\affiliation{%
  \institution{Massachusetts Institute of Technology}
  \city{Cambridge}
  \country{USA}
}
\email{mhuot@mit.edu}

\author{Moritz Schauer}
\orcid{0000-0003-3310-7915}
\affiliation{%
  \institution{Chalmers University of Technology \& University of Gothenburg}
  \city{Gothenburg}
  \country{Sweden}
}
\email{smoritz@chalmers.se}

\author{Alexander K. Lew}
\orcid{0000-0002-9262-4392}
\affiliation{%
  \institution{Yale University}
  \city{New Haven}
  \country{USA}
}
\email{alexander.lew@yale.edu}

\author{Feras A. Saad}
\orcid{0000-0002-0505-795X}
\affiliation{%
  \institution{Carnegie Mellon University}
  \city{Pittsburgh}
  \country{USA}
}
\email{fsaad@cmu.edu}


\begin{CCSXML}
<ccs2012>
    <concept>
        <concept_id>10002950.10003648.10003671</concept_id>
        <concept_desc>Mathematics of computing~Probabilistic algorithms</concept_desc>
        <concept_significance>300</concept_significance>
        </concept>
    <concept>
        <concept_id>10002950.10003648.10003662</concept_id>
        <concept_desc>Mathematics of computing~Probabilistic inference problems</concept_desc>
        <concept_significance>300</concept_significance>
        </concept>
    <concept>
        <concept_id>10003752.10003753.10003757</concept_id>
        <concept_desc>Theory of computation~Probabilistic computation</concept_desc>
        <concept_significance>300</concept_significance>
        </concept>
    <concept>
        <concept_id>10002950.10003705.10003708</concept_id>
        <concept_desc>Mathematics of computing~Statistical software</concept_desc>
        <concept_significance>300</concept_significance>
        </concept>
    <concept>
        <concept_id>10002950.10003714.10003715.10003748</concept_id>
        <concept_desc>Mathematics of computing~Automatic differentiation</concept_desc>
        <concept_significance>300</concept_significance>
        </concept>
    </ccs2012>
\end{CCSXML}

\ccsdesc[300]{Mathematics of computing~Probabilistic algorithms}
\ccsdesc[300]{Mathematics of computing~Probabilistic inference problems}
\ccsdesc[300]{Theory of computation~Probabilistic computation}
\ccsdesc[300]{Mathematics of computing~Statistical software}
\ccsdesc[300]{Mathematics of computing~Automatic differentiation}

\keywords{probabilistic programming, gradient estimation, couplings}


\begin{abstract}
\textit{Gradient estimation}---the task of computing the gradient of
the expected value of a probabilistic program---has diverse applications
in scientific computing, but is notoriously difficult because
of issues such as high-dimensional integration, discrete random choices,
and complex stochastic dependencies.
This article introduces \textit{gradient inference}, a new approach to
developing sound and efficient gradient estimators for probabilistic
programs.
Gradient inference rests on a formal reduction from a gradient estimation
problem to a closely related probabilistic inference problem, whose
solution can be differentiated to obtain a gradient estimator.
This inference problem is obtained by applying two powerful
statistical operations---\textit{coupling} and \textit{factorization}---to the
input probabilistic program.
Our reduction lets us
leverage the rich toolkit of probabilistic inference algorithms to design
novel gradient estimators that extend and improve upon existing methods.

We introduce \systemname{}, a probabilistic programming system that
facilitates the sound and automated implementation of gradient inference.
\systemname{} is centered around programmable source-to-source
transformations for coupling and factorizing
higher-order probabilistic programs,
whose soundness is proven
in terms of a denotational semantics.
Key to our development is the use of information-flow typing
to allow random choices in a probabilistic program to be factored out
and \emph{partially evaluated}, which improves
our ability to deploy sophisticated probabilistic inference algorithms.
The resulting system offers practitioners a principled framework for
designing gradient estimators.
We apply \systemname{} to several challenging case studies, showing that it
can express prominent gradient estimators from the literature and
enables the construction of new state-of-the-art estimators that
outperform the best existing baselines.
\end{abstract}

\maketitle


\section{Introduction}
\label{sec:introduction}

Suppose we are given a function $\mu: \setReal^n \to P(\setReal)$
that maps a parameter vector $\theta \defas (\theta_1,\dots,\theta_n)$
to a probability distribution $\mu_\theta$ over the reals.
The expectation of $\mu_\theta$ is given by
\begin{equation}
\mathbb{E}_{x \sim \mu_\theta}[x] = \int_{\setReal} x \mu_\theta(\diff{x}) \asdef \ell(\theta).
\label{eq:expected-value}
\end{equation}
\textit{Gradient estimation} is the problem of computing the gradient
$\nabla\ell(\theta)\,{:}\,\setReal^n$ of the expectation of $\mu_\theta$,
which is a vector whose $i$th coordinate is the partial
derivative of $\ell$ with respect to $\theta_i$.
Gradient estimation is a fundamental computational problem,
with applications in diverse domains such as
biochemical modeling~\citep{anderson2012efficient},
queuing theory~\citep{fu1997conditional},
statistical epidemiology~\citep{chopra2023differentiable},
particle detector simulations~\citep{kagan2023branches},
reinforcement learning~\citep{sutton1998reinforcement},
neural networks~\citep{bengio2013estimating},
and large language models~\citep{schulman2017proximal}.
Use cases for gradient estimation include optimizing a
stochastic objective function and assessing the sensitivity of a stochastic
process to perturbations in its parameters.

\subsection{The Challenge of Gradient Estimation}
\label{sec:introduction-challenge}


\begin{figure}
\centering

\captionsetup{belowskip=-10pt}

\colorlet{colorCore}{black}
\colorlet{colorCoupling}{black}
\colorlet{colorPEval}{black}
\colorlet{colorInference}{black}

\input{figures/fig-workflow-decoration-preamble.tex}

\begin{adjustbox}{max width=\linewidth}
\begin{tikzpicture}[thick,scale=1,transform shape]

\node[
    name=input,
    prog node,
    label={[font=\footnotesize]above:{\shortstack[c]{Input\\Program}}},
    ] {};

\begin{scope}[shift={(input.south west)},x={(input.south east)},y={(input.north west)}]
    \node at (0.5,0.8) {$t \defas$};
    \foreach \yy in {0.2,0.4,0.6} {
            \draw[
                decorate,
                decoration={complete sines, amplitude=0.4mm, segment length=3mm},
                line width=0.8pt,
                colorCore] (0.2,\yy) -- (0.8,\yy);
        }
    \node at (1.14,0.01) {\color{colorCore}$\langP$};
\end{scope}

\node[
    name=coupled,
    prog node,
    label={[font=\footnotesize]above:{\shortstack[c]{Coupled\\ Program}}},
    right=1.55 of input
    ] {};

\begin{scope}[shift={(coupled.south west)},x={(coupled.south east)},y={(coupled.north west)}]
    \node at (0.5,0.8) {$t_{\transformCoupling} \defas$};
    \foreach \yy in {0.2,0.4,0.6} {
            \draw[
                decorate,
                decoration={complete sines, amplitude=0.4mm, segment length=3mm},
                line width=0.8pt,
                colorCoupling] (0.2,\yy) -- (0.8,\yy);
        }
    \node at (1.14,0.01) {\color{colorCore}$\langP$};
\end{scope}

\node[
    name=annotated,
    prog node,
    label={[font=\footnotesize]above:{\shortstack[c]{Inference Target \\Program}}},
    right=1.95 of coupled
    ] {};

\begin{scope}[shift={(annotated.south west)},x={(annotated.south east)},y={(annotated.north west)}]
    \node at (0.5,0.8) {$t_{\transformPPEval}\hspace{-0.2em}\defas$};
    \foreach \yy in {0.2,0.4,0.6} {
            \draw[
                decorate,
                decoration={complete sines, amplitude=0.4mm, segment length=3mm},
                line width=0.8pt,
                colorPEval] (0.2,\yy) -- (0.8,\yy);

        }
    \node at (1.2,0.02) {\color{colorPEval}$\langP$};
\end{scope}

\node[
    name=estimator,
    prog node,
    label={[font=\footnotesize]above:{\shortstack[c]{Gradient Estimator\vphantom{g} \\ Program}}},
    right=4.45 of annotated
    ] {};

\begin{scope}[shift={(estimator.south west)},x={(estimator.south east)},y={(estimator.north west)}]
    \node at (0.5,0.8) {$t_{\nabla}\defas$};
    \foreach \yy in {0.2,0.4,0.6}   {
            \draw[
                decorate,
                decoration={complete sines, amplitude=0.4mm, segment length=3mm},
                line width=0.8pt,
                colorInference] (0.2,\yy) -- (0.8,\yy);
        }
\end{scope}

\node[
    circle,
    fill=black,
    minimum size=3pt,
    inner sep=0pt,
    name=after-inference,
    at=($(annotated)!0.5!(estimator)$)] {};

\draw[-latex, dotted] (input.east) -- node[midway,below,color=colorCoupling] {} (coupled.west);
\draw[-latex, dotted] (coupled.east) -- node[pos=0.84,yshift=-0.25mm,above,color=colorPEval] {} (annotated.west);

\draw[-latex, dotted] (input.east)
    --
    node[midway,above,color=colorInference, font=\footnotesize] {
        \stackbox[c]{Coupling}
    }
(coupled);

\draw[-latex, dotted] (coupled.east)
    --
    node[midway,above,color=colorInference, font=\footnotesize] {
        \stackbox[c]{Factorization\vphantom{g}}
    }
(annotated);

\draw[-latex, dotted] (annotated.east)
    --
    node[midway,above,color=colorInference, font=\footnotesize] {
        \stackbox[c]{Probabilistic\\Inference\vphantom{Diff. (AD)}}
    }
    node[midway,below,color=colorInference, font=\scriptsize] {
        \begin{tabular}{@{}c@{}}SMC\\[-2pt]variable elim.\\[-5pt]\dots\end{tabular}
    }
(after-inference);

\draw[-latex, dotted] (after-inference)
    --
    node[midway,above,color=colorInference, font=\footnotesize] {
        \stackbox[c]{Automatic\\Diff. (AD)}
    }
    node[midway,below,color=colorInference, font=\scriptsize] {
        \begin{tabular}{@{}c@{}}forward-mode\\[-2pt]reverse-mode\end{tabular}
    }
(estimator.west);

\node[name=correctness-header, at=($(input.west)!0.5!(estimator.east)$),yshift=-2.67cm]
    {\textbf{Soundness Properties}};

\node[below=0.15 of input,color=colorCore] (core-section) {{\hypersetup{linkcolor=colorCore}\Cref{sec:preliminaries}}};
\node[below=-0.025 of core-section.south west,anchor=north west,color=colorCore,font=\footnotesize,fill=highlightColorPurpose] (core-purpose) {
    \stackbox[l]{
        \textbf{core}: higher-order
        \\
        prob. language $\langP$
    }
};
\path let \p1 = (core-purpose.east), \p2 = (core-purpose.west) in
  node[
    below=0.5 of core-purpose.south west,
    anchor=north west,
    color=colorCore,
    font=\footnotesize,
    fill=highlightColorSemantics,
    inner xsep=1.5pt,
    minimum width={\x1-\x2}
  ] (core-semantics) {
    \stackbox[l]{
      $t$ denotes a map
      \\
      from a parameter
      \\
      $\theta$ to a probability
      \\
      distribution $\mu_\theta(x)$
    }
  };

\node[right=0.1 of core-semantics.north east,anchor=north west,font=\footnotesize,fill=highlightColorSemantics, inner xsep=1.5pt] (coupling-semantics) {
    \stackbox[l]{
        \crefname{theorem}{Thm.}{Thms.}
        $t_{\transformCoupling}$ denotes a joint distribution
        \\
        $\mu_{\theta,\theta'}(x,x')$ whose marginals
        \\
        are the distributions $\mu_\theta(x)$ and
        \\
         $\mu_{\theta'}(x')$ denoted by $t$
        (\cref{thm:coupling-correctness})
    }
};
\path let \p1 = (coupling-semantics.east), \p2 = (coupling-semantics.west) in
    node[right=0.1 of core-purpose.north east,anchor=north west,font=\footnotesize,fill=highlightColorPurpose,text width={\x1-\x2-3pt},align=left,inner xsep=1.5pt] (coupling-purpose) {
    \stackbox[l]{
        \textbf{coupling}: share randomness \\
        to generate coupled samples
    }
    };
\node[above=0.22 of coupling-purpose.north west,color=colorCoupling,anchor=west] (coupling-section) {{\hypersetup{linkcolor=colorCoupling}\Cref{sec:coupling}}};

\node[right=0.1 of coupling-semantics.north east,anchor=north west,font=\footnotesize,fill=highlightColorSemantics, inner xsep=1.5pt] (partial-eval-semantics) {
\stackbox[l]{
\crefname{theorem}{Thm.}{Thms.}
$t_{\transformPPEval}$ denotes a factorized
\\
distribution $\mu_{\theta}(x)\mu_{\theta'|\theta}(x'|x)$
\\
that is equivalent to $\mu_{\theta,\theta'}(x,x')$
\\
denoted by $t_{\transformCoupling}$
(\cref{thm:factorized-coupling-correctness,thm:langPP-realization})
}
};
\path let \p1 = (partial-eval-semantics.east), \p2 = (partial-eval-semantics.west) in
    node[right=0.1 of coupling-purpose.north east,anchor=north west,color=colorPEval,font=\footnotesize,fill=highlightColorPurpose,text width={\x1-\x2-3pt},align=left,inner xsep=1.5pt] (partial-eval-purpose){
        \stackbox[l]{
            \textbf{factorization}: partition trace
            \\
            into \emph{primal} and \emph{residual} parts
        }
    };
\node[above=0.22 of partial-eval-purpose.north west,color=colorPEval,anchor=west] (partial-eval-section) {{\hypersetup{linkcolor=colorPEval}\Cref{sec:partial-evaluation}}};

\node[right=0.1 of partial-eval-purpose.north east,anchor=north west,color=colorInference,font=\footnotesize,fill=highlightColorPurpose] (inference-purpose) {
    \stackbox[l]{
        \textbf{inference \& AD}: differentiate
        \\
        expectation of residual choices
    }
};
\path let \p1 = (inference-purpose.east), \p2 = (inference-purpose.west) in
    node[right=0.1 of partial-eval-semantics.north east,anchor=north west,font=\footnotesize,fill=highlightColorSemantics, inner xsep=1.5pt,text width={\x1-\x2-3pt},align=left] (inference-semantics) {
        \stackbox[l]{
            \crefname{theorem}{Thm.}{Thms.}
            $t_{\nabla}$
            denotes a sound Monte
            \\
            Carlo gradient estimator
            \\
            $\nabla_\theta \left(\mathbb{E}_{x\sim \mu_\theta}[x]\right)$
            for the
            \\
            original program
            (\cref{thm:grad-inference-correctness})
        }
    };
\node[above=0.22 of inference-purpose.north west,color=colorInference,anchor=west] (inference-section) {{\hypersetup{linkcolor=colorInference}\Cref{sec:gradient-inference}}};

\end{tikzpicture}
\end{adjustbox}

\caption{
Overview of gradient inference workflow using \systemname{}.
The purpose of each component and its corresponding soundness criteria are
summarized informally along the top and bottom rows, respectively.
}
\label{fig:gradient-inference-system}
\end{figure}

Computerized methods for differentiation such as
symbolic differentiation~\citep{cohen2002}
or automatic differentiation (AD)~\citep{baydin2017,uwe2011}
are given the source code of a target function $\ell: \setReal^n \to \setReal$
and return a new program for its gradient $\nabla \ell: \setReal^n \to \setReal^n$.
The key challenge presented by gradient estimation is that our target
function \cref{eq:expected-value} is typically an intractable
integral, rendering the direct application of standard symbolic
manipulation rules infeasible.
Moreover, the probability distribution $\mu_\theta$ being integrated may
be specified by a probabilistic program that
includes discrete randomness, stochastic control flow, and other
challenging constructs that cause standard AD to fail~\citep{lew2023adev}.

The dominant approach to gradient estimation is grounded in
Monte Carlo methods~\citep{mohamed2020monte,fu1997conditional},
which construct a
random vector $L(\theta)$ that estimates
the gradient $\nabla\ell(\theta)$.
An ideal Monte Carlo gradient estimator is
\textit{computationally efficient} (i.e., fast to compute),
\textit{sound} (i.e., unbiased), and
\textit{statistically efficient} (i.e., low variance, or equivalently low mean squared error):
\begin{align}
\mbox{\underline{Unbiased}:}\enskip
  \expect\left[L(\theta)\right] = \nabla\ell(\theta),
&&
\mbox{\underline{Low Variance}:}\enskip
  \expect\left[ || L(\theta) - \nabla\ell(\theta) ||^2  \right]~\mbox{small}.
\label{eq:estimator-bias-variance}
\end{align}
Over the decades, researchers from a range of fields
have developed gradient estimators that aim to be sound and efficient.
Examples include score function estimation~\citep{glynn1990},
also known as REINFORCE~\citep{williams1992};
infinitesimal perturbation analysis (IPA~\citep{glasserman1991gradient}),
also known as the reparameterization trick~\citep{kingma2013};
smoothed perturbation analysis (SPA~\citep{gong1987});
and
measure-valued derivatives (MVD~\citep{heidergott2008measurevalued}).
Practitioners, however, know that no single estimator works well
(or even at all) in all settings.
Developing an effective gradient estimator requires a careful understanding
of the characteristics of the probability distribution $\mu_\theta$.
But the diversity of existing estimation strategies
and their varying mathematical formalisms make it difficult
to interpret, apply, combine, or extend these methods.

\subsection{Our Approach}
\label{sec:introduction-our-approach}
We introduce \emph{gradient inference}, which is a new approach to gradient
estimation that aims to
\begin{enumerate}[label={\bfseries(A\arabic*)},ref=A\arabic*,leftmargin=*]
  \item \label{aim:modular} provide a modular and compositional
    workflow for developing gradient estimators;

  \item\label{aim:novel} enable the construction of novel gradient
    estimators that are sound and efficient.
\end{enumerate}
Gradient inference rests on a reduction of a gradient estimation problem to
a probabilistic inference problem.
This reduction is enabled by two programmable techniques for
statistical variance reduction: \emph{coupling} (i.e., common random variables~\citep[\S8.6]{owen2013}) and
\emph{factorization} (i.e., conditioning~\citep[\S8.7]{owen2013}).
\Cref{tab:gradient-estimation-schemes} shows how gradient inference lets us
interpret existing estimators as a combination of a factorized
coupling and a probabilistic inference algorithm, which we can then generalize and improve upon
by using more powerful programmable probabilistic inference
methods~\citep{mansinghka2018pldi,scibior2018denotational,towner2019pldi}.

\noindentparagraph{\normalfont\bfseries \systemname.}

We contribute a probabilistic programming system, \systemname,
that supports the sound and automated construction of gradient inference schemes.
\Cref{fig:gradient-inference-system} shows an overview of \systemname.
The input is a probabilistic program $t$ denoting the target measure
$\mu_\theta$ over $\setReal$ whose expectation we seek to differentiate.
The final output is a new probabilistic program $t_{\nabla}$ that defines a
sound (and typically efficient) Monte Carlo estimator $L(\theta)$ for the
gradient $\nabla_\theta\left(\expect_{x\sim\mu_\theta}\left[x\right]\right)$.
The program $t_{\nabla}$ is obtained by first synthesizing
an intermediate \emph{inference target} program
$t_{\transformPPEval}$ that exhibits a semantic equivalence to $t$,
but incorporates coupling and factorization
to enable low-variance gradient estimation.
Next, a probabilistic inference algorithm---%
such as variable elimination~\citep[\S9]{koller2009},
sequential Monte Carlo (SMC~\citep{chopin2020introduction}),
or importance sampling~\citep[\S9]{owen2013}---%
is used to estimate expectations over the return value of $t_{\transformPPEval}$.
Finally, the overall gradient estimator $t_\nabla$ is obtained by applying \emph{standard} AD
to the probabilistic inference algorithm itself, which is constructed
so as to avoid the usual pitfalls of na{\"i}vely
differentiating a probabilistic program (e.g., \citep[Fig.~1]{lew2023adev}).

\begin{table}
\captionsetup{skip = 4pt}
\caption{
  Gradient inference can be used to reinterpret a variety of existing
  gradient estimators from the literature in terms of a
  factorized coupling and a probabilistic inference algorithm. This framework
  lets us develop novel estimators that harness more powerful inference
  algorithms to achieve lower variance.
  }
\label{tab:gradient-estimation-schemes}
\footnotesize
{\setlength{\tabcolsep}{4pt}
\begin{tabular*}{\linewidth}{|l@{\,\extracolsep{\fill}}lll|} \hline
\bfseries{~}                                                  & \bfseries{Factorized Coupling} & \bfseries{Probabilistic Inference Algorithm} & \bfseries{Reference}\\ \hline
\multicolumn{4}{|@{\,}l|}{\bfseries \textit{Existing Estimators}} \\
\citep{glasserman1991gradient} Pathwise / IPA                 & $\colforsyntax{\langle \emptyEnv{} \rangle_\texttt{CRN}}$ (e.g., $\codeNormalCRN{}$) & Simple Monte Carlo                           & \Cref{appendix:gradient-inference-pathwise} \\
\citep{gong1987} Phantom (SPA)                                & $\colforsyntax{\langle \emptyEnv{} \rangle_\texttt{CRN}}$ (e.g., $\codeFlipCRN{}$)   & Stratified Importance Sampling               & \Cref{appendix:gradient-inference-spa} \\
\citep{fu1997conditional} Randomized Phantom (SPA)            & $\colforsyntax{\langle \emptyEnv{} \rangle_\texttt{CRN}}$ (e.g., $\codeFlipCRN{}$)   & Stratified Importance Resampling             & \Cref{appendix:gradient-inference-spa} \\
\citep{heidergott2008measurevalued} Measure-Valued Derivative & $\colforsyntax{\langle \emptyEnv{} \rangle_\texttt{FIXED-PROP}}$ & Stratified Importance Sampling                & \Cref{appendix:gradient-inference-mvd} \\
\citep{glynn1990} Score (w/ Control Variates)         & $\colforsyntax{\langle \emptyEnv{} \rangle_\texttt{SHARED-SAMPLE}}$   & Varies by Control Variate Method              & \Cref{appendix:gradient-inference-score} \\
\citep{mnih2016variational} RLOO                                      & $\codeFlipLayerSameOrIndep{}$  & Stratified Importance Sampling               & \Cref{appendix:gradient-inference-vae} \\
\citep{dong2020disarm} DisARM                                 & $\codeFlipLayerSameOrAnti{}$   & Stratified Importance Sampling               & \Cref{appendix:gradient-inference-vae} \\
\citep{kunes2023gradient} BitFlip1                            & $\codeFlipLayerAlwaysFlip{}$   & Stratified Importance Resampling             & \Cref{appendix:gradient-inference-vae} \\
\citep{bengio2013estimating} Straight-Through                 & $\codeFlipLayerAlwaysFlip{}$   & Moment Closure                               & \Cref{appendix:gradient-inference-vae} \\ \hline
\multicolumn{4}{|@{\,}l|}{\bfseries \textit{Novel Estimators}} \\
\GradInfA                                             & $\codeFlipCRN$                 & Variable Elimination                         & \Cref{sec:overview,sec:applications-queuing} \\
\GradInfB                                             & $\codeCategoricalCRN$          & Twisted Sequential Monte Carlo               & \Cref{sec:applications-option-pricing} \\
\GradInfC                                             & $\codeCategoricalMaxIndep$     & Stratified Importance Resampling             & \Cref{sec:applications-gene-transcription} \\
\GradInfD                                             & $\codeCategoricalMaxIndep$     & Twisted Sequential Monte Carlo               & \Cref{sec:applications-gene-transcription} \\ \hline\hline
\end{tabular*}
}
\vspace{-.6cm}
\end{table}

\noindentparagraph{\normalfont\bfseries Applications.}
We evaluate \systemname{} on several challenging case studies,
testing its ability to compositionally express gradient estimators
(\ref{aim:modular}) and produce novel schemes (\ref{aim:novel}).
In a queuing theory application (\cref{sec:overview}), we generalize
the classic SPA~\citep{gong1987} estimator by using a variable
elimination inference algorithm that achieves
up to 16x reductions in time-adjusted variance.
In case studies from mathematical finance (\cref{sec:applications-option-pricing}) and systems
biology (\cref{sec:applications-gene-transcription}), we
use gradient inference with sequential Monte Carlo inference to produce novel
estimators that deliver up to 370x improvements compared to existing baselines.
Finally, in both these applications and supplementary studies
in \cref{appendix:gradient-inference},
we show that gradient inference can also express and unify a number of
existing state-of-the-art estimators (cf. \cref{tab:gradient-estimation-schemes}),
surfacing new connections between diverse methodologies.

\subsection{Contributions}

In summary, this paper makes the following contributions.

\begin{itemize}[wide=0pt, leftmargin=*]
\item \textbf{Gradient Inference}: We introduce
      \textit{gradient inference}, an algorithmic technique that converts a
      gradient estimation problem into a probabilistic inference problem
      and interoperates with standard forward- and reverse-mode
      automatic differentiation.

\item \textbf{Language Constructs}: We present \systemname,
      a higher-order probabilistic programming system that facilitates gradient
      inference. \systemname{} is based on source-to-source program
      transformations that automate two statistical variance-reduction techniques:
      \textit{coupling} and \textit{factorization}.

\item \textbf{Formalism}: We characterize
      the key soundness properties for gradient inference and prove
      the correctness of the program transformations in \systemname{}
      via proof-relevant logical relations.
      We leverage information-flow typing to ensure that the factored out choices
      in a probabilistic program can be \emph{partially evaluated}, which
      enables accurate probabilistic inference algorithms.
\item \textbf{Case Studies}:
      We demonstrate the effectiveness of gradient inference on challenging
      problems from queuing theory, finance, and genetics.
      \systemname{} can both express existing estimators and deliver new
      estimators that significantly outperform existing state-of-the-art baselines.
\end{itemize}

\section{Overview of Gradient Inference}
\label{sec:overview}

In this section,
we give an overview of the key ideas in gradient inference
using a probabilistic model from queuing theory
(\cref{sec:overview-model}),
and outline the key programming language challenges
that \systemname{} addresses to facilitate sound and automated gradient
inference workflows (\cref{sec:overview-pl}).

\subsection{How Gradient Inference Delivers Efficient Gradient Estimators}
\label{sec:overview-model}

In the usual approach to gradient estimation, a Monte Carlo
estimator for the target gradient is derived directly,
with the key variance-reduction and algorithmic choices
built into a single construction~\citep{mohamed2020monte}.
In contrast, our approach to gradient estimation, called gradient inference,
modularly decomposes the problem into multiple programmable steps.
We describe each step below, showing how gradient inference both
recovers established estimators and yields novel estimators with
lower variance.

\noindentparagraph{\normalfont \bfseries Example Model.}
\Cref{fig:overview-mathematical-model} shows a probabilistic model called
an M/M/c queue, which describes the flow of packets
in a network queue.
The discrete-time stochastic process $(X_0, X_1, \dots, X_n)$ gives the
number of packets $X_i$ in the queue after the $i$th queuing event,
where packets either join or leave the queue
with probability dependent
on the number of routers $c = 25$ and the packet arrival rate
$\theta:\setReal$.
\Cref{fig:overview-sample-traces} shows 10 sample paths of $X_{0:n}$ for
$n=50$ queuing events and $\theta \in \set{10, 15}$.
Our goal is to assess the sensitivity of the \textit{expected} final queue
length after $n$ queuing events with respect to $\theta$,
which is the gradient ${\nabla_\theta \expect[X_n]}$.
We will write $X_n$ as $X(\theta)$ to emphasize the $\theta$-dependence,
dropping the subscript when it is clear from context.

\begin{figure}

\begin{minipage}[t]{.43\linewidth}
\footnotesize
\begin{minipage}[t]{\linewidth}
\subcaption{User-written input program $t$}
\label{fig:overview-input-program}
\end{minipage}
\begin{plainCodeBoxFramed}[after skip=0pt]{\linewidth}
\begin{algorithm}[H]
    $\codeLet~\mathit{kernel} = \lambda(\theta: \typeReal). \lambda(x: \typeInteger).$
    \Block{
        $\codeLet~p = \theta / (\theta + \min(25, x))$\;
        $b \codePGets \highlightBox{highlightColorPrimitive}{\codeFlipCRN}~p$\; \label{algline:overview-input-program-flip}
        $\codeLet~x' = \codeIfThenElse{b}{x+1}{x-1}$\;
        $\codePReturn~x'$ \;
    }
    \vspace{0.5em}
    $\lambda(\theta: \typeReal).$
    \Block{
        $x \codePGets {\codePIterate}_{n, \typeInteger}~
            (\mathit{kernel}~\theta)~
            0
            $\; \label{algline:overview-input-program-iterate}
        $\codePReturn~x$ \label{algline:overview-input-program-retval}
    }
\end{algorithm}
\end{plainCodeBoxFramed}
{\footnotesize {\hspace{4mm}\tikz{\node[minimum width=.5cm,fill=highlightColorPrimitive,draw=black,label={right:probabilistic primitive}]{};}}}

\bigskip

\begin{minipage}{\linewidth}
\subcaption{Synthesized coupled program $t_{\transformCoupling}$}
\label{fig:overview-coupled-program}
\end{minipage}
\begin{plainCodeBoxFramed}[after skip=0pt]{\linewidth}
\begin{algorithm}[H]
$\codeLet~\mathit{kernel} = \begin{aligned}[t]
    &\lambda(\theta_A: \typeReal, \theta_B : \typeReal).
    \\[-3pt]
    &\lambda(x_A: \typeInteger, x_B: \typeInteger).
    \end{aligned}$
    \Block{
      $\codeLet~p_A = \theta_A/(\theta_A + \min(25, x_A))$\; \label{algline:overview-coupled-program-pA}
      $\codeLet~p_B = \theta_B/(\theta_B + \min(25, x_B))$\; \label{algline:overview-coupled-program-pB}
      $\highlightBox{highlightColorCoupling}{\omega \codePGets \codeUniform~(0,1)}$\; \label{algline:overview-coupled-program-crn1}
      $\highlightBox{highlightColorCoupling}{\codeLet~b_A = \omega < p_A}$ \;         \label{algline:overview-coupled-program-crn2}
      $\highlightBox{highlightColorCoupling}{\codeLet~b_B = \omega < p_B}$            \label{algline:overview-coupled-program-crn3}
      \;
      $\codeLet~x'_A = \codeIf~b_A~\codeThen~x_A+1~\codeElse~x_A-1$\;
      $\codeLet~x'_B = \codeIf~b_B~\codeThen~x_B+1~\codeElse~x_B-1$\;
      $\codePReturn~(x'_A, x'_B)$
    }
\vspace{.5em}
$\lambda(\theta: \typeReal \times \typeReal).$
\Block{
  $(x_A, x_B) \codePGets {\codePIterate}_{n,\typeInteger^2}
      ~\begin{aligned}[t]
      &(\mathit{kernel}~\theta)~(0,0)
      \end{aligned}$
    \\
    $\codePReturn~(x_A, x_B)$ \label{algline:overview-coupled-program-retval}
}
\end{algorithm}
\end{plainCodeBoxFramed}
{\footnotesize {\hspace{4mm}\tikz{\node[minimum width=.5cm,fill=highlightColorPrimitive,draw=black,label={right:result of coupling}]{};}}}

\bigskip

\begin{minipage}{\linewidth}
\subcaption{Synthesized inference target $t_{\transformPPEval}$}
\label{fig:overview-factorized-program}
\end{minipage}
\begin{plainCodeBoxFramed}[after skip=0pt]{\linewidth}
\begin{algorithm}[H]
$\codeLet~\mathit{kernel} = \begin{aligned}[t]
    &\lambda(\theta_A: \typeReal, \theta_B : \typeReal).
    \\[-3pt]
    &\lambda(x_A: \typeInteger, x_B: \typeInteger,
    \highlightBox{highlightColorPartialEval}{s : \typeSeed}).
    \end{aligned}$
  \Block{
    $\codeLet~p_A = \theta_A/(\theta_A + \min(25, x_A))$\; \label{algline:overview-seeded-program-pA}
    $\codeLet~p_B = \theta_B/(\theta_B + \min(25, x_B))$\; \label{algline:overview-seeded-program-pB}
    $\highlightBox{highlightColorPartialEval}{\codeLet~(s_0, s_1) = \codeSplit~s}$\; \label{algline:overview-seeded-program-split}
    $\highlightBox{highlightColorPartialEval}{\codeLet~b_A = s_0 < p_A}$ $\triangleright$\textcolor{black}{\bfseries{\scriptsize ~fixed via seed}}
    \; \label{algline:overview-seeded-program-bA}
    $\highlightBox{highlightColorPartialEval}{
      b_B \codePGets
      \codeFlipEnum
      \left(
        \begin{aligned}
        &\codeIf~b_A~\codeThen \min\Big(\frac{p_B}{p_A}, 1\Big) \\
        &\codeElse~\max\Big(\frac{p_B - p_A}{1 - p_A}, 0\Big)
        \end{aligned}
        \right)
    }$
    \;
    \label{algline:overview-seeded-program-bB}
    $\codeLet~x'_A = \codeIf~b_A~\codeThen~x_A+1~\codeElse~x_A-1$\;
    $\codeLet~x'_B = \codeIf~b_B~\codeThen~x_B+1~\codeElse~x_B-1$\;
    $\codePReturn~(x'_A, x'_B, \highlightBox{highlightColorPartialEval}{s_1})$
  }
\vspace{.5em}
$\lambda(\theta: \typeReal \times \typeReal).
\highlightBox{highlightColorPartialEval}{\lambda(s : \typeSeed).}$
\label{algline:overview-seeded-program-lambda}
\Block{
    $(x_A, x_B) \codePGets
      \begin{aligned}[t]
      &{\codePIterate}_{
          n, \typeInteger \times \typeInteger \highlightBox{highlightColorPartialEval}{\scriptstyle\times \typeSeed}}
          \span\span\span
        \\[-2pt]
       &(\mathit{kernel}~\theta)~(0,0,\highlightBox{highlightColorPartialEval}{s})
       \end{aligned}$\;
    $\codePReturn~(x_A, x_B)$
}
\end{algorithm}
\end{plainCodeBoxFramed}
{\footnotesize {\hspace{4mm}\tikz{\node[minimum width=.5cm,fill=highlightColorPrimitive,draw=black,label={right:result of partial evaluation}]{};}}}
\vspace{-1.2pt}
\end{minipage}\hfill
\begin{minipage}[t]{.53\linewidth}
\footnotesize
\begin{minipage}[t]{\linewidth}
\subcaption{Gradient estimation problem}
\label{fig:overview-mathematical-model}
\vspace{2pt}
\end{minipage}
\begin{plainCodeBox}[left=0mm,right=0mm,colback=gray!5]{\textwidth}
\begin{minipage}{\linewidth}
\centering
$\begin{aligned}
&\mbox{\textbf{Generative Model:}}\\
&X_0 \defas 0 \quad c \defas 25 \\
&B_i \sim \mathrm{Bernoulli}\left({\theta}/{\left(\theta + \min(c,X_{i-1})\right)}\right) \\
&X_i \defas \begin{cases}
    X_{i-1} + 1 & \mbox{if } B_i = 1 \\
    X_{i-1} - 1 & \mbox{otherwise}
\end{cases}\\
&\mbox{\textbf{Goal:} Compute } \nabla_\theta \expect\left[X_n\right], \mbox{where}\\
&\expect\left[X_n\right] = \expect\left[\cdots\expect\left[\expect\left[X_n\mid X_{n-1}\right]\right]\cdots \mid X_0 \right]
\end{aligned}$
\end{minipage}
\end{plainCodeBox}

\smallskip

\begin{subfigure}{\linewidth}
\centering
\captionsetup{aboveskip=2pt,belowskip=2pt}
\subcaption{Sample paths of $(X_0, X_1, \dots, X_{50})$}
\label{fig:overview-sample-traces}
\includegraphics[width=\linewidth]{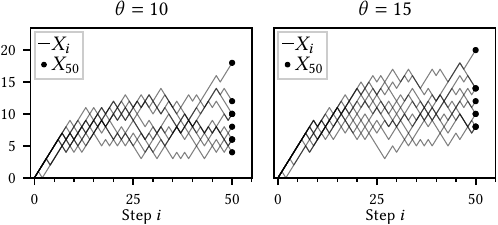}
\end{subfigure}

\begin{subfigure}[t]{\linewidth}
\centering
\captionsetup{aboveskip=-5pt,belowskip=3pt}
\caption{Applying two inference algorithms to $t_{\transformPPEval}$}
\label{fig:overview-inference}
\input{figures/tikz-inference.tex}
\end{subfigure}

\begin{subfigure}{\linewidth}
\captionsetup{aboveskip=-5pt,belowskip=-1pt}
\subcaption{Variance of gradient estimators}
\label{fig:overview-variance-asymptotics}
\centering
\includegraphics[width=\textwidth]{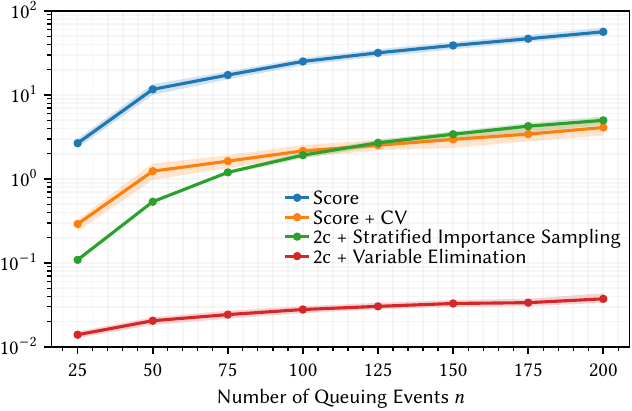}
\end{subfigure}

\end{minipage}

\captionsetup{skip=4pt}
\caption{Applying \systemname{} to a gradient estimation problem in a
discrete-time M/M/c queuing model.}
\label{fig:overview}
\end{figure}

\noindentparagraph{\normalfont \bfseries Probabilistic Program.}

\Cref{fig:overview-input-program} shows a user-written
probabilistic program $t$ in \systemname{}
that implements the model in \cref{fig:overview-mathematical-model}.
The program uses standard probabilistic constructs
such as
Monte Carlo sampling from a Bernoulli distribution
(\cref{algline:overview-input-program-flip})
and iteration of a probabilistic kernel
(\cref{algline:overview-input-program-iterate}).
The return value $x$ on \cref{algline:overview-input-program-retval} is the
final queue size $X_n$.
The semantics $\sem{t}: \setReal \to P(\setReal)$ of this program is precisely
the function $\mu: \setReal \to P(\setReal)$ from \cref{sec:introduction},
which maps the input parameter
$\theta$ to the \textit{marginal probability distribution}
$\sem{t}(\theta)$ of the return value $x$.
In this notation, we seek to compute
${\nabla_\theta\left[\mathbb{E}_{x \sim \sem{t}(\theta)}[x]\right]}
  = {\nabla_\theta \expect[X_n]}$, which
for $\theta : \setReal$ is just a scalar:
\begin{align}
  {\nabla_\theta \expect[X_n]}
  =
  \lim_{\varepsilon \to 0}
  \frac{{\expect\left[X(\theta+\varepsilon)\right] - \expect\left[X(\theta)\right]}}{\varepsilon}
  =
  \lim_{\varepsilon \to 0}
  \frac{{\expect_{x\sim \sem{t}(\theta+\varepsilon)}\left[x\right] - \expect_{x\sim\sem{t}(\theta)}\left[x\right]}}{\varepsilon}.
\label{eq:overview-gradient-limit}
\end{align}
Momentarily ignoring the $\varepsilon$-limit,
\cref{eq:overview-gradient-limit} captures the essence of the gradient estimation
challenge: the numerator requires an accurate estimate of the \textit{difference}
$\expect\left[X(\theta+\varepsilon)\right] - \expect\left[X(\theta)\right]$
of two closely-related expectations.
We will thus turn our attention to this subproblem by temporarily assuming
$\varepsilon$ is a small fixed perturbation, before taking the limit to eliminate
$\varepsilon$.

Given only the program $t$ in \cref{fig:overview-input-program}, we can obtain
independent samples using inputs $\theta$ and $\theta+\varepsilon$
to estimate
$\expect\left[X(\theta+\varepsilon)\right]-\expect\left[X(\theta)\right]$
via simple Monte Carlo.
However, the resulting
Monte Carlo estimator of $\left(\expect\left[X(\theta+\varepsilon)\right] - \expect\left[X(\theta)\right]\right)/\varepsilon$
using $k$ i.i.d.~samples
has unbounded variance
$(\variance(X(\theta)) + \variance(X(\theta+\varepsilon)))/(k\varepsilon^2)$
as $\varepsilon \to 0$, rendering this na\"ive approach ineffective.

\noindentparagraph{\normalfont\bfseries Step 1: Forming a Coupling.}
The variance explosion of simple Monte Carlo motivates forming a \emph{coupling}.
The key idea behind a coupling is that
for any pair of random variables
$(X_A, X_B)$ whose marginal distributions agree
with $(X(\theta), X(\theta+\varepsilon))$,
the linearity of expectation implies that
${\expect\left[X_B - X_A\right]
= \expect\left[X(\theta+\varepsilon)\right] - \expect\left[X(\theta)\right]}$.
If $X_A$ and $X_B$ have positive covariance, then coupling achieves a
(often substantial) variance reduction~\citep[\S8.6]{owen2013} compared to
using independent samples.

\Cref{fig:overview-coupled-program} shows a coupled program
$t_{\transformCoupling}$ synthesized from $t$, in which each original value
has been replaced with a \emph{pair} of values.
This term has semantics
$\sem{t_{\transformCoupling}}: (\setReal \times \setReal) \to P(\setReal \times \setReal)$,
which defines a joint distribution over pairs of outputs.
\Crefrange{algline:overview-coupled-program-crn1}{algline:overview-coupled-program-crn3}
of \cref{fig:overview-coupled-program} show how the coupled variables are generated
using a common random number (CRN) strategy, corresponding to the annotated
$\codeFlipCRN$ primitive on
\cref{algline:overview-input-program-flip} of \cref{fig:overview-input-program}.
The return value $(x_A, x_B)$ of
$\sem{t_{\transformCoupling}}(\theta,\theta+\varepsilon)$
on \cref{algline:overview-coupled-program-retval} of \cref{fig:overview-coupled-program}
is guaranteed to be a valid coupling of
$\sem{t}(\theta)$ and $\sem{t}(\theta+\varepsilon)$,
retaining the unbiasedness of simple Monte Carlo estimation
using $t$:
\begin{equation}
{\expect_{(x_A,x_B)\sim \sem{t_{\transformCoupling}}(\theta,\theta+\varepsilon)}[x_B - x_A]
=
\expect_{x\sim \sem{t}(\theta+\varepsilon)}\left[x\right] - \expect_{x\sim\sem{t}(\theta)}\left[x\right]}.
\label{eq:overview-coupled-expectation}
\end{equation}
\noindentparagraph{\normalfont\bfseries Step 2: Factorizing the Coupling.}
To form the inference target, \systemname{} synthesizes a
\emph{factorized} version of the coupling in \cref{fig:overview-coupled-program}.
As motivation, note that only samples where $x_B \neq x_A$ contribute to the expectation
in \cref{eq:overview-coupled-expectation}.
However, consider
\crefrange{algline:overview-coupled-program-pA}{algline:overview-coupled-program-crn3}
of \cref{fig:overview-coupled-program}: as $\varepsilon \to 0$ we have
$\theta_B \to \theta_A$ and thus
$p_B \to p_A$.
Hence, the samples $b_A$ and $b_B$
on \crefrange{algline:overview-coupled-program-crn2}{algline:overview-coupled-program-crn3}
will be equal with high probability, yielding an estimate of $x_B - x_A = 0$.
We can improve our estimator by using a probabilistic inference
algorithm, such as \textit{importance sampling}~\citep[\S9]{owen2013}, to
generate traces where $x_B \neq x_A$.

Let $X_{A,0:n}$ and $X_{B,0:n}$ be random variables corresponding to traces
of $x_A$ and $x_B$ across all $n$ steps of the coupled program $t_{\transformCoupling}$
in \cref{fig:overview-coupled-program}.
By the chain rule, their joint distribution can be factorized as
\begin{align}
P\left(X_{A,0:n}=x_{A,0:n},X_{B,0:n}=x_{B,0:n}\right)
= P\left(X_{A,0:n}=x_{A,0:n}\right) P\left(X_{B,0:n}=x_{B,0:n}\,\middle|\,X_{A,0:n}=x_{A,0:n}\right).
\label{eq:overview-coupled-disintegration}
\end{align}
We call $X_{A,0:n}$ the \emph{primal} trace and $X_{B,0:n}$ the \emph{residual} trace.
\Cref{fig:overview-factorized-program} shows a synthesized
program $t_{\transformPPEval}$ that implements the factorization
in \cref{eq:overview-coupled-disintegration}.
The program $t_{\transformPPEval}$ is constructed to allow a probabilistic form of \emph{partial
evaluation} of the coupled program $t_{\transformCoupling}$ in
\cref{fig:overview-coupled-program}:
all sampling steps
determining the primal trace can be fixed in $t_{\transformPPEval}$
by evaluating the program at a fixed random seed $s$.
This property enables the subsequent use of
generic probabilistic inference algorithms to infer a residual trace
$X_{B,0:n}(\theta+\varepsilon) \mid X_{A,0:n}(\theta) = x_{A,0:n}$
that is distinct from a fixed primal trace $x_{A,0:n}$ with high probability.

\noindentparagraph{\normalfont\bfseries Step 3: Applying Probabilistic Inference.}
Equipped with the \emph{inference target} program $t_{\transformPPEval}$ (\cref{fig:overview-factorized-program}),
the goal of probabilistic inference is to return an efficient estimator $Z$
of the conditional expectation
\begin{equation}
 \mathbb{E}\left[Z(\theta, \theta + \varepsilon, x_{A,0:n})\right] = \mathbb{E}\left[X_{B,n} - X_{A,n}\,\middle|\,X_{A,0:n} = x_{A,0:n} \right]
 = \mathbb{E}_{(x_A,x_B)\sim \semP{t_{\transformPPEval}}(\theta,\theta+\varepsilon)(s)}\left[x_B - x_A\right].
 \label{eq:overview-probabilistic-inference-estimator}
\end{equation}
This step of gradient inference is related to Rao-Blackwellization,
a variance reduction technique
that replaces an estimator by its conditional expectation given
a subset of random variables~\citep[\S8.7]{owen2013}.
Our criterion \cref{eq:overview-probabilistic-inference-estimator} is
a form of \emph{approximate} Rao-Blackwellization,
since the estimator $Z$ must be unbiased
but not necessarily exhibit zero variance.
\Cref{fig:overview-inference} shows two
inference algorithms applied to $t_{\transformPPEval}$
for $\theta = 15$ and $\varepsilon = 10^{-10}$.
We develop a custom estimator $Z_{\rm SIR}$ that uses stratified importance resampling
(SIR), where the algorithm stratifies (i.e., separately samples from)
the events $A_i$ where the trace of $x_B$ diverges from the fixed trace of $x_A$
at some iteration $i$.
For this queuing model, we can develop an improved estimator $Z_{\rm VE}$
that uses a
\emph{linear-time} variable elimination (VE) inference algorithm.
The efficiency of VE is enabled by coupling and factorization:
because the coupling is monotone, the value of $x_B$ is either
$x_A$ or $x_A + 2$ at each iteration (where $x_A$ is \emph{fixed} in the primal sample),
greatly simplifying the inference target.
Applying VE to the model in \cref{fig:overview-coupled-program} would instead scale
quadratically, highlighting the benefits of factorization.

\noindentparagraph{\normalfont\bfseries Step 4: Forming the Gradient Estimator.}
Having performed the variance-reduction steps,
we now eliminate $\varepsilon$ from \cref{eq:overview-gradient-limit}.
Given an estimator $Z$ satisfying \cref{eq:overview-probabilistic-inference-estimator},
the law of total expectation gives
\begin{align}
\expect\left[X(\theta + \varepsilon)\right] - \expect\left[X(\theta)\right]
=
\expect\left[Z\left(\theta, \theta+\varepsilon, X_{A,0:n}\right)\right].
\label{eq:overview-ad-diffE}
\end{align}
If the random choices made to simulate the estimator $Z$ during probabilistic inference
do not depend directly on the perturbation $\varepsilon$,
we may first apply the chain rule of differentiation
and then (under standard conditions) interchange the derivative and expectation:
\begin{align}
\nabla_\theta \expect\left[X(\theta)\right]
=
\nabla_\varepsilon\left[
\expect\left[Z\left(\theta, \theta+\varepsilon, X_{A,0:n}\right)\right]\right]_{\varepsilon=0}
=
\expect\left[\nabla_\varepsilon\left[Z\left(\theta, \theta+\varepsilon, X_{A,0:n}\right)\right]_{\varepsilon=0}\right].
\label{eq:overview-ad-Ediff}
\end{align}
\Cref{eq:overview-ad-Ediff} gives the final gradient estimator.
The inner derivative on the right-hand side can be obtained using
standard (forward- or reverse-mode) AD,
concluding the gradient inference workflow.

\noindentparagraph{\normalfont\bfseries Empirical Results.}
\Cref{fig:overview-variance-asymptotics} shows a comparison of the variance
of four gradient estimators in the M/M/c model (\cref{fig:overview-mathematical-model})
at $\theta = 15$,
with $n$ varied from 25 to 200.
The blue and orange curves show the standard score
function (REINFORCE~\citep{williams1992}) gradient estimator
without and with a control variate (CV).
The green curve corresponds to our gradient inference
scheme using SIR,
which turns out to recover the randomized phantom estimator
from the smoothed perturbation analysis literature~\citep{fu1997conditional}.
The red curve corresponds to our gradient inference scheme using VE,
which produces a novel gradient estimator that improves over the existing baselines
by up to two orders of magnitude.

\subsection{Soundness and Automation via Types and Transformations}
\label{sec:overview-pl}

\systemname{} facilitates the implementation of gradient inference
workflows for expressive, higher-order probabilistic programs
written in our core language $\langP$, which is formalized in \cref{sec:preliminaries}.
Implementing the \systemname{} workflow in
\cref{fig:gradient-inference-system} requires novel solutions to three key
challenges:
\begin{enumerate}[label={\bfseries(C\arabic*)},ref=(C\arabic*),wide=0pt,itemsep=5pt]
\item \label{challenge:couple}
\textit{How do we form couplings of arbitrary higher-order
  probabilistic programs?}
In \cref{sec:coupling}, adapting ideas from
relational probabilistic program logics~\citep{barthe2017coupling},
we present a programmable
coupling transformation $\transformCoupling{\cdot}$ for $\langP$ programs
and use a proof-relevant logical relations argument to prove that it synthesizes
sound couplings of entire programs (\cref{thm:coupling-correctness}).
\item \label{challenge:factorize}

\textit{How do we partially evaluate random choices in the coupled program?}
In \cref{sec:partial-evaluation}, we present an extended language $\langPP$
for factorizing probabilistic programs into \emph{primal} and \emph{residual} choices,
whose type system builds on the dependency core
calculus of \citet{abadi1999core}.
We present a transformation $\transformModCoupling{\cdot}$
for forming factorized couplings,
which targets $\langPP$
but simplifies to a sound coupling transformation
in $\langP$ when the $\langPP$-specific constructs are erased
using the transformation $\transformErasure{\cdot}$
(\cref{thm:factorized-coupling-correctness});
and a transformation $\transformPPEval{\cdot}$
from $\langPP$ to $\langP$
for probabilistic partial evaluation (in the sense described in \cref{sec:overview-model})
that produces sound factorizations of coupled programs
(\cref{thm:langPP-realization}).

\item \label{challenge:compose}
\emph{How do we ensure that the final synthesized gradient estimator is sound?}
In \cref{sec:gradient-inference}, we complete the gradient inference workflow by
explaining how the program synthesized by $\transformModCoupling{\cdot}$
and $\transformPPEval{\cdot}$
can be composed with existing rigorous formalisms of probabilistic inference~\citep{scibior2018denotational}
and AD transformations~\citep{huot2020correctness,krawiec2022provably}
to obtain an overall sound gradient estimator (\cref{thm:grad-inference-correctness}).
\end{enumerate}


\begin{figure}[t]

\begin{minipage}[t]{.55\linewidth}
\begin{adjustbox}{valign=t}
\begin{tikzpicture}[thick]
\tikzset{lbl/.style={align=center,font=\footnotesize}}

\node[name=lp,draw=black,label={[lbl]left:User\\Program}]{$\langP$};
\node[name=lpC,draw=black,right=1.5 of lp,label={[lbl]right:Coupled\\Program}]{$\langP$};
\node[name=lpD,draw=black,below=1 of lp,label={[lbl]left:Factorized\\Coupled Program}]{$\langPP$};
\node[name=lpE,draw=black,at=(lpD -| lpC),label={[lbl]right:Unannotated\\Coupled Program}]{$\langP$};
\node[name=lpS,draw=black,below=1 of lpD,label={[lbl]left:Partially\\Evaluated Program}]{$\langP$};

\node[name=I,right=1.3 of lpS,circle,fill=black,inner sep=1pt]{};
\node[name=A,right=1.3 of I,circle,fill=black,inner sep=1pt,label={[lbl]right:Gradient\\Estimator}]{};

\draw[thick,-latex] (lp) -- (lpC) node[pos=0.5,above]{$\transformCoupling$};
\draw[thick,dotted] (lp.north) to[bend left] node[pos=0.5,anchor=center,fill=highlightColorTheorem,yshift=.1cm]{\cref{thm:coupling-correctness}} (lpC.north);

\draw[thick,-latex] (lp) -- (lpD) node[pos=0.5,left]{$\transformModCoupling$};
\draw[thick,-latex] (lpD) -- (lpE) node[pos=0.5,above]{$\transformErasure$};
\draw[thick,dotted] (lpE) -- (lpC) node[pos=0.5,anchor=center,fill=highlightColorTheorem,xshift=.1cm]{\cref{thm:factorized-coupling-correctness}};

\draw[thick,-latex] (lpD) -- (lpS) node[pos=0.5,left]{$\transformPPEval$};
\draw[thick,dotted] (lpS) -- (lpE) node[pos=0.6,anchor=center,fill=highlightColorTheorem,xshift=.1cm]{\cref{thm:langPP-realization}};

\draw[thick,-latex] (lpS) -- node[pos=0.5,above,label={[lbl]below:Inference}]{$\transformInference$} (I);
\draw[thick,-latex] (I) -- node[pos=0.5,above,label={[lbl]below:AD}]{$\transformAD$} (A);

\draw[thick,dotted] (lpS.south) to[bend right,looseness=1.5] node[pos=0.5,anchor=center,fill=highlightColorTheorem,yshift=.1cm]{\cref{thm:grad-inference-correctness}} (A.south);
\end{tikzpicture}
\end{adjustbox}
\end{minipage}\hfill
\begin{minipage}[t]{.45\linewidth}
\captionsetup{aboveskip=-5pt,belowskip=0pt}
\caption{Overview of program transformations in \systemname{}.
Dotted lines indicate program equivalences, whose
properties are stated in each of the respective theorems.
The user program is written in $\langP$.
The probabilistic language $\langPP$ serves as an intermediate representation
for synthesized coupled programs that contain annotations for factorization.
Erasing these annotations yields a valid coupled $\langP$ program.
The annotations are used to synthesize an equivalent partially evaluated
$\langP$ program to target with inference and AD, which gives a final
sound gradient estimator.
}
\label{fig:theorems}
\end{minipage}
\vspace{-10pt}
\end{figure}

\Cref{fig:theorems} summarizes the program transformations used in
$\systemname{}$ and related theorems that together establish the soundness
of gradient inference.
\Crefrange{sec:preliminaries}{sec:gradient-inference}
unpack the details of \cref{fig:theorems}.

\section{Core Probabilistic Language}

\label{sec:preliminaries}

We begin our technical development by introducing the syntax and semantics of our core probabilistic language $\langP$.
We use \highlightProbabilityName to highlight language constructs related to probability.

\noindentparagraph{\normalfont\bfseries Syntax.}
\Cref{def:core-language} gives the syntax of $\langP$.
The deterministic core of $\langP$
is the standard simply-typed $\lambda$-calculus,
with functions, products, and built-in base
types for booleans, integers, reals, and unit.
For example, $\lambda (x\,{:}\,\tau). t$ and $t'~t$
represent function abstraction and application,
respectively, while $(t_1, t_2)$ creates a pair
and $\codeFst~t$ and $\codeSnd~t$ extract
its components.
To this core, we add the type $\highlightProbability{\typeProb~\type}$ for
measures over $\tau$.
Measures are created
by probabilistic primitives or the monadic return
$\highlightProbability{\codePReturn~t}$;
these measures are propagated using the
bind $\highlightProbability{x \codePGets t;~t'}$.
The set of primitives
$p$ is extensible to support different applications and gradient estimation strategies
(e.g., $\codeFlipCRN$ in \cref{sec:overview}).
The $\langP$ language is equipped with a typing judgement
$\Gamma \vdash t : \tau$.
\Cref{def:core-language} gives typing rules for
base types and the probabilistic constructs;
the full set of rules and details on syntactic sugar
(e.g., for tuple unpacking and iteration)
are given in \cref{appendix:deferred-preliminaries}.

\begin{listing}[t]
\small

\setlength{\abovedisplayshortskip}{0pt}
\setlength{\abovedisplayskip}{0pt}
\setlength{\belowdisplayskip}{0pt}

\begin{syntaxbox}[title=\textbf{Types}]{.475\linewidth}
\begin{align*}
  \base~(\in \setBaseTypes) \Coloneq\, & \typeBool \mid \typeInteger \mid \typeReal
  \mid \typeUnit
  &&
  \\
  \type~(\in \setTypes) \Coloneq\,     &
  \base
  \mid \type_1 \typeProduct \type_2
  \mid \type \typeFunction \type'
  \mid \highlightProbability{\typeProb~\type}
\end{align*}

\bigskip

\tcbsubtitle{\textbf{Contexts}}
\begin{align*}
\Gamma~(\in \setContexts{}) \Coloneq \emptyEnv \mid \Gamma, x:\tau
\end{align*}
\end{syntaxbox}\hfill
\begin{syntaxbox}[title=\textbf{Terms\vphantom{p}}]{.525\linewidth}
  \begin{align*}
    r \in\,                            & \setReal; \;
    n \in \setInteger; \;
    x \in \setVariables
    \\
    c~(\in \setConstants) \Coloneq\,  & r \mid n
    \mid \codeUnit \mid \codeTrue     \mid \codeFalse
    \\
    \ter~(\in \setTerms) \Coloneq\,   &
    \var
    \mid c
    \mid p
    \mid (\ter_1, \ter_2)
    \mid \lambda (\var : \type). \ter
    \mid \ter'~\ter
    \\
    &\mid \codeLet~\var = \ter;~\ter'
    \mid \codeFst
    \mid \codeSnd
    \\
    & \mid \codeIfThenElse{\ter}{\ter_1}{\ter_2}
    \\
    &\mid \highlightProbability{\codePReturn~t} \nonumber
    \mid \highlightProbability{\var \codePGets \ter;~\ter'}
    \\
    p~(\in \setPrimitives) \Coloneq\, &
    \langle \text{extensible} \rangle
  \end{align*}
\end{syntaxbox}

\begin{syntaxbox}[title={\textbf{Typing Rules (Selected)}}]{\linewidth}
 \begin{gather*}
    \Gamma \vdash \codeTrue : \typeBool
    \quad
    \Gamma \vdash \codeFalse : \typeBool
    \quad
    \Gamma \vdash n : \typeInteger
    \quad
    \Gamma \vdash r : \typeReal
    \quad
    \Gamma \vdash \codeUnit : \typeUnit
    \quad
    \Gamma \vdash p : \tau_p
    \\
    \highlightProbability{
       \begin{array}{c}
          \Gamma \vdash t: \type
          \\
          \hline
          \Gamma \vdash \codePReturn~\ter : \typeProb~\type
       \end{array}
    }
    \quad
    \highlightProbability{
       \begin{array}{c}
          \Gamma \vdash t: \typeProb~\type
          \quad \extend{\Gamma}{x}{\type} \vdash t': \typeProb~\type'
          \\
          \hline
          \Gamma \vdash x \codePGets t;~t' : \typeProb~\type'
       \end{array}
    }
 \end{gather*}
\end{syntaxbox}
\captionsetup{aboveskip=4pt}
\captionsetup{belowskip=0pt}
\caption{Types and terms for $\langP$, a core calculus for higher-order probabilistic programs.}
\label{def:core-language}
\vspace{-12pt}
\end{listing}

\noindentparagraph{\normalfont\bfseries Denotational Model.}
A denotational model
provides an interpretation of the syntactic types and well-typed terms
of a language
in a chosen ``semantic space'' category.
As in recent works with
higher-order probabilistic languages~\citep{scibior2018denotational,lew2020trace,becker2024probabilistic,lew2023adev},
the semantic space for $\langP$ is the category
$\catQBS$ of quasi-Borel spaces~\citep{heunen2017convenient}.
The semantics of a type $\tau$ is a $\catQBS$-object $\semP{\tau}$
and the semantics of a well-typed term $\Gamma \vdash t : \tau$
is a $\catQBS$-morphism
$\semP{\Gamma \vdash t : \tau}: \semP{\Gamma} \to \semP{\tau}$,
where a context $\Gamma = x_1 : \tau_1, \ldots, x_n : \tau_n$
is interpreted as a labeled product of $\semP{\tau_1}, \dots, \semP{\tau_n}$.

We make use of two notational shorthands:
(1) when the typing judgement is clear from context, we shorten
$\semP{\Gamma \vdash t: \tau}$ to $\semP{t}$,
and (2) in $\catQBS$, we write $u : U$ as a shorthand for
$u : \semUnit \to U$.
Using this notation, a closed
term
$\emptyEnv \vdash t : \tau$ is interpreted
as a \emph{value} $\semP{t} : \semP{\tau}$,
while a general open term
$\Gamma \vdash t : \tau$ is interpreted
as a map from an \emph{environment} $\gamma : \semP{\Gamma}$
to a value $\semP{t}(\gamma) : \semP{\tau}$.
We denote environment extension by $\gamma[x \setvar u]$.
In the following, we use $\mathbf{black~boldface}$ for semantic space
constructs mirroring language syntax
in $\colforsyntax{blue~boldface}$, e.g.,
$\semFst((x, y)) = x$.

A review of $\catQBS$ is given in \cref{appendix:deferred-preliminaries},
but in the main text
it will generally suffice to analogize to the usual category $\catMeas$
of measurable spaces~\citep{giry1982categorical}, so that
we can intuitively think of $\semP{\tau}$ as a measurable space
and $\semP{\Gamma \vdash t : \tau}$ as a measurable function.
The utility of $\catQBS$ is that the category both supports higher-order functions
and admits a monad $P$ of measures~\citep{scibior2018denotational}.
We can gain intuition for the key operations supported by $P$
by analogizing to the Giry monad in $\catMeas$:
(1) as a functor, $P$ sends a space $U$ to the space $P(U)$ of measures over $U$
and a function $f:U \to U'$ to the pushforward $P(f): P(U) \to P(U')$;
(2) $\semPReturn(u):P(U)$ constructs a Dirac distribution at $u : U$;
and (3) the bind $(\mu \semPBind f) : P(U')$ operation composes a measure $\mu\,{:}\,P(U)$
with a kernel $f:U \to P(U')$ in the usual way~\citep{ramsey2002}:
$(\mu \semPBind f)(A) = \int f(u)(A) \mu(\diff{u})$.

Using $P$, we can define the model $\semP{\cdot}$ on $\langP$.
The model interprets base types using their corresponding $\catQBS$ spaces
$\setBool$, $\setInteger$, $\setReal$, and $\setUnit$,
interprets products and function types using products and exponentials in $\catQBS$, and interprets
$\typeProb~\tau$ using the measure monad $P$:
\begin{gather}
   \semP{\typeBool} \defas \setBool
   \quad
   \semP{\typeInteger} \defas \setInteger
   \quad
   \semP{\typeReal} \defas \setReal
   \quad
   \semP{\typeUnit} \defas \setUnit
   \\
   \semP{\type_1 \typeProduct \type_2} \defas \semP{\type_1} \times \semP{\type_2}
   \quad
   \semP{\type_1 \typeFunction \type_2} \defas \semP{\type_1} \to \semP{\type_2}
   \quad
   \highlightProbability{\semP{\typeProb~\type} \defas P\left(\semP{\type}\right)}.
\end{gather}
The model acts in the standard way on deterministic terms
(e.g., the term $\lambda (x : \tau). t$ is interpreted as a function).
The terms $\codePReturn~t$ and $x \codePGets t;~t'$ are interpreted
using the return and bind of $P$:
$\highlightProbability{\semP{\codePReturn~t}(\gamma) \defas
      \semPReturn\left(\semP{t}(\gamma)\right)}$
and
$\highlightProbability{\semP{\{x \codePGets t;~t'\}}(\gamma) \defas
      \semP{t}(\gamma) \semPBind{}
      \left(u \mapsto \semP{t'}(\gamma[x \setvar u])\right)}$.
Since a primitive $p$ does not depend on its environment, we let
$\semP{p}(\gamma) \defas \semP{p}$ where
$\semP{p}: \semP{\tau_p}$ is given for each primitive.
\cref{appendix:deferred-preliminaries}
supplies the full definition of the model $\semP{\cdot}$
and further detail.

\noindentparagraph{\normalfont\bfseries Program Transformations.}
A program transformation $\transform{\cdot}$ between two languages $\lang_1$ and $\lang_2$
maps the types and well-typed terms of a language $\lang_1$ to those of a language $\lang_2$,
with the property that a well-typed term
$\Gamma \vdash t: \tau$ of $\lang_1$ transforms to a well-typed
term $\transform{\Gamma \vdash t : \tau}$ of $\lang_2$ with
context $\transform{\Gamma}$ and type $\transform{\tau}$.
As with the denotational model, we will abbreviate $\transform{\Gamma \vdash t : \tau}$
to $\transform{t}$ when the typing judgement is clear from context.

Our program transformations will
typically act in an ``interesting'' way on only
a few types and terms, while preserving the structure of the rest.
Accordingly, we say that $\transform{\cdot}$
acts \emph{homomorphically} on a
type or term when it recurses on its structure.
For example,
$\transform{\tau_1 \typeProduct \tau_2} \defas
\transform{\tau_1} \times \transform{\tau_2}$,
$\transform{\typeProb~\tau} \defas \typeProb~\transform{\tau}$,
and
$\transform{\codeLet~x = t;~t'} \defas \set{\codeLet~x = \transform{t};~\transform{t'}}$
specify homomorphic actions on products, measures, and the let bind.
The full definition is given in \cref{appendix:deferred-preliminaries}.

\noindentparagraph{\normalfont\bfseries Example Program.}
The extensible set of primitives in $\langP$
includes standard arithmetic operations,
listed in \cref{appendix:deferred-preliminaries}.
Let us register a probabilistic $\highlightProbability{\codeFlip}$ primitive in $\langP$ with
type $\tau_\codeFlip = \typeReal \to \typeProb~\typeBool$
and semantics $\semP{\codeFlip} \defas r \mapsto \mathrm{Ber}(r)$,
where $\mathrm{Ber}(r) : P(\setBool)$ is a Bernoulli distribution
with success probability $r$.
We can now write an example probabilistic program term in $\langP$,
\begin{equation}
   t \defas \set{b \codePGets \codeFlip~0.5;~\codePReturn~(\codeIfThenElse{b}{(3.5 + 1.2)}{8.2})}.
\end{equation}
Treating $t$ as a closed term, we have $\emptyEnv \vdash t : \typeProb~\typeReal$,
where $\semP{t} : P(\setReal)$ is the probability distribution that
assigns probability $0.5$ to each of the values $4.7$ and $8.2$.

\section{Programmable Couplings of Higher-Order Programs}
\label{sec:coupling}

To address challenge~\labelcref{challenge:couple},
we introduce a coupling transformation
$\transformCoupling{\cdot}$
on the syntax of $\langP$, as well as a logical relation $\relationCoupling$
that governs the soundness of this transformation.

\noindentparagraph{\normalfont\bfseries Coupling Transformation.}
\cref{def:coupling-transformation} defines the coupling
program transformation $\transformCoupling{\cdot}$.
The transformation replaces base types $\sigma$ with pairs thereof and duplicates constant
literals $c$.
Its action on $\codeIfThenElse{t}{t_1}{t_2}$ uses a helper macro
$\transformMerge_\tau\set{\cdot}\set{\cdot}$, where
$\transformMerge_\tau\set{s_1}\set{s_2}$ is a coupled term in which
each pair draws its first component from $s_1$ and its
second component from $s_2$.
On all other types and terms, the transformation acts homomorphically.
The programs produced by $\transformCoupling{\cdot}$ can be viewed as
probabilistic product programs, in the sense of \citet{barthe2017coupling}.
%
\begin{listing}[t]
\small

\setlength{\abovedisplayshortskip}{0pt}
\setlength{\belowdisplayshortskip}{0pt}
\setlength{\abovedisplayskip}{0pt}
\setlength{\belowdisplayskip}{0pt}

\begin{syntaxbox}[title=\textbf{Coupling Transformation $\transformCoupling{\cdot}$ on $\langP$}]{\linewidth}
 \begin{gather*}
      \transformCoupling{\base} \defas \base \typeProduct \base \qquad
     \transformCoupling{c} \defas (c, c) \qquad
     \transformCoupling{p} \defas \langle\text{defined per-primitive}\rangle \\
      \transformCoupling{\Gamma \vdash \codeIfThenElse{t}{t_1}{t_2} : \tau} \defas
     \begin{aligned}[t]
          & \codeLet~(b_A, b_B) = \transformCoupling{t};             \\
          & \transformMerge_\tau\set{\codeIfThenElse{b_A}{\transformCoupling{t_1}}{\transformCoupling{t_2}}}\set{\codeIfThenElse{b_B}{\transformCoupling{t_1}}{\transformCoupling{t_2}}},
     \end{aligned}
 \end{gather*}

 \tcbsubtitle{\bfseries Merge Macro $\transformMerge$ for Conditional Expressions}

 \begin{alignat*}{1}
 \transformMerge_\sigma\set{t_1}\set{t_2}  &\defas \codeLet~((x_A : \sigma, x_B : \sigma), (y_A : \sigma, y_B : \sigma))=(t_1,t_2);~(x_A, y_B)     \\
 \transformMerge_{\tau_1 \typeProduct \tau_2}\set{t_1}\set{t_2}  &\defas
(\transformMerge_{\tau_1}\set{\codeFst~t_1}\set{\codeFst~t_2}, \transformMerge_{\tau_2}\set{\codeSnd~t_1}\set{\codeSnd~t_2})                                             \\
 \transformMerge_{\tau_1 \typeFunction \tau_2}\set{t_1}\set{t_2}  &\defas
\lambda(x : \transformCoupling{\tau_1}).
\transformMerge_{\tau_2}\set{t_1~x}\set{t_2~x}                                                                                   \\
 \transformMerge_{\typeProb~\tau}\set{t_1}\set{t_2} &\defas
\set{x_A \codePGets t_1;~x_B \codePGets t_2;~\codePReturn~\transformMerge_\tau\set{x_A}\set{x_B}},
 \end{alignat*}
\end{syntaxbox}
\captionsetup{aboveskip=3pt,belowskip=-11pt}
\caption{The coupling transformation for $\langP$. The transformation acts homomorphically on all omitted types and terms.}
\label{def:coupling-transformation}
\end{listing}

\noindentparagraph{\normalfont\bfseries Coupling Logical Relation.}
To reason about the soundness of $\transformCoupling{\cdot}$ on terms of arbitrary type,
we introduce the coupling logical relation $\relationCoupling$.
The relation relates tuples $(x_A, x_B, x_{\transformCoupling})$,
where $x_A : \semP{\tau}$, $x_B : \semP{\tau}$, and $x_{\transformCoupling} : \semP{\transformCoupling{\tau}}$.
If $(x_A, x_B, x_{\transformCoupling}) \in \relationCoupling(\tau)$, then $x_{\transformCoupling}$
is a sound coupling of $x_A$ and $x_B$.

To specify $\relationCoupling$, we define a \emph{proof-relevant}
version $\relationCouplingProof$ of the relation.
The relation $\relationCouplingProof$
associates each type $\tau$ of $\langP$ with a quasi-Borel space $\relationCouplingProof(\tau)$,
consisting of tuples $(x_A, x_B, x_{\transformCoupling}, \tilde{x})$
where $x_A : \semP{\tau}$, $x_B : \semP{\tau}$,
$x_{\transformCoupling} : \semP{\transformCoupling{\tau}}$,
and $\tilde{x}$ is a $\catQBS$-value that \emph{proves} the relatedness
$(x_A, x_B, x_{\transformCoupling})$.
As in a usual logical relations argument%
~\citep{statman1985logical,ahmed2006stepindexed,barthe2020versatility,huot2020correctness},
we define $\relationCouplingProof$ inductively on types.
At a base type $\sigma$,
\begin{align}
    \relationCouplingProof(\sigma) \defas
    \set{(x_A, x_B, x_{\transformCoupling}, \semUnit) \mid x_{\transformCoupling} = (x_A, x_B)}.
    \label{eq:relation-coupled-base}
\end{align}
In words, a coupling of two values of base type is a pair thereof.
The definition of $\relationCouplingProof$ on product and function types is standard,
and supplied in \cref{appendix:deferred-coupling}.
To lift the relation to measures~\citep{barthe2015relational,larsen1991bisimulation},
we let $\relationCouplingProof(\typeProb~\tau)$ be the set of tuples
$(\mu_A, \mu_B, \mu_{\transformCoupling}, \tilde{\mu})$
for which the proof value $\tilde{\mu}$ is a probability distribution
whose first three marginals recover
$\mu_A, \mu_B$ and $\mu_{\transformCoupling}$:
\begin{equation}
    \smallint \tilde{\mu}(\diff x) = 1 \land
    P(\semProj_1)(\tilde{\mu}) = \mu_A \land
    P(\semProj_2)(\tilde{\mu}) = \mu_B \land
    P(\semProj_3)(\tilde{\mu}) = \mu_{\transformCoupling}.
\end{equation}
Finally, we define
$
\relationCoupling(\tau) \defas
\set{(x_A, x_B, x_{\transformCoupling}) \mid \exists \tilde{x}. (x_A, x_B, x_{\transformCoupling}, \tilde{x}) \in \relationCouplingProof(\tau)}.
$
The proof-relevance of $\relationCouplingProof$ is crucial to proving our soundness
theorem for couplings (\cref{thm:coupling-correctness}),
as it ensures that proof values can be constructed and propagated using
$\catQBS$-morphisms (cf. \citep{sato2019approximate}).
\noindentparagraph{\normalfont\bfseries Example Coupling Primitives.}
For $\Gamma \vdash t : \tau$,
we say that $\transformCoupling{t}$ is sound if it \emph{preserves} the coupling logical relation, i.e.,
$(\gamma_A, \gamma_B, \gamma_{\transformCoupling}) \in \relationCoupling(\Gamma)$
implies that
$\left(\semP{t}(\gamma_A), \semP{t}(\gamma_B), \semP{\transformCoupling{t}}(\gamma_{\transformCoupling})\right)
\in \relationCoupling(\tau)$.
Since a primitive $p$ is environment-independent,
$\transformCoupling{p}$ is sound
if $\left(\semP{p}, \semP{p}, \semP{\transformCoupling{p}}\right) \in \relationCoupling(\tau_p)$.

For the deterministic primitives of $\langP$, there is usually just one
sound coupling, e.g.,
$\transformCoupling{+} \defas \lambda((x_A, x_B), (y_A, y_B)). (x_A + y_A, x_B + y_B)$.
On the other hand, the probabilistic primitives can be soundly coupled
using many different strategies.
\Cref{lst:coupling-primitives} shows four examples of coupling primitives
in $\langP$, implementing an independent coupling for a Bernoulli,
common random number couplings for a Bernoulli and normal distribution,
and a maximal coupling with independent residuals~\citep{wang2021maximal} for a categorical distribution.
Here, the primitives for normal and categorical distributions have types
$\tau_\codeNormal = \typeReal^2 \to \typeProb~\typeReal$ and
$\tau_{\codeCategorical_{\alpha,n}} = \alpha^n \to \typeReal^n \to \typeProb~\alpha$
and semantics
$\semP{\codeNormal} \defas (\mu, \sigma) \mapsto \mathrm{N}(\mu, \sigma)$;
$\semP{\codeCategorical_{\alpha,n}} \defas (x_{1:n}, p_{1:n}) \mapsto \sum_i p_i \cdot \semPReturn(x_i)$.

As a concrete example, consider the $\codeFlipCRN$ primitive,
whose type is $\type_\codeFlipCRN = \typeReal \typeFunction \typeProb~\typeBool$.
Applying the coupling transform to this type gives
\begin{equation}
    \transformCoupling{\type_\codeFlipCRN}
    = \typeReal \typeProduct \typeReal \typeFunction \typeProb~(\typeBool \typeProduct \typeBool).
\end{equation}
The semantics of $\codeFlipCRN$ are defined to match
$\codeFlip$, i.e.,
$\semP{\codeFlipCRN} = \semP{\codeFlip}$.
However, the $\colforsyntax{\texttt{CRN}}$ annotation
specifies that the primitive should be coupled
using \emph{common random numbers}.
Accordingly,
the implementation
of $\transformCoupling{\codeFlipCRN}$
uses a shared uniform $\omega$ to generate both flips.
By unpacking the definition of $\relationCoupling(\typeReal \to \typeProb~\typeBool)$,
it can be seen
that the soundness requirement imposed on $\transformCoupling{\codeFlipCRN}$
is equivalent to the statement that for all $p_A : \setReal$
and $p_B : \setReal$,
\begin{align}
    P(\semFst)(\semP{\transformCoupling{\codeFlipCRN}}(p_A, p_B))
        & =
    \semP{\codeFlipCRN}(p_A),
    \\
    P(\semSnd)(\semP{\transformCoupling{\codeFlipCRN}}(p_A, p_B))
        & = \semP{\codeFlipCRN}(p_B),
\end{align}
which is precisely the condition we intend:
for all $p_A : \setReal$ and $p_B : \setReal$,
$\semP{\transformCoupling{\codeFlipCRN}}(p_A, p_B)$ is a joint
probability distribution with marginals
of $\semP{\codeFlipCRN}(p_A)$ and $\semP{\codeFlipCRN}(p_B)$.

\begin{listing}
\scriptsize
\captionsetup[sublisting]{skip=0pt}
\begin{sublisting}[t]{0.19\textwidth}
    \begin{plainCodeBox}[box align=top,left=-4mm,colframe=white]{\textwidth}
        \LinesNotNumbered
        \begin{algorithm}[H]
            \DontPrintSemicolon
            $\transformCoupling{\codeFlipIndep} \defas$ \;
            $\;\lambda (p_A : \typeReal, p_B : \typeReal).$
            \Block{
                $b_A \codePGets \codeFlip~p_A$\;
                $b_B \codePGets \codeFlip~p_B$\;
                $\codePReturn~(b_A, b_B)$ \;
            }
        \end{algorithm}
    \end{plainCodeBox}
\end{sublisting}\hfill
\begin{sublisting}[t]{0.19\textwidth}
    \begin{plainCodeBox}[box align=top,left=-4mm,colframe=white]{\textwidth}
        \LinesNotNumbered
        \begin{algorithm}[H]
            \DontPrintSemicolon
            $\transformCoupling{\codeFlipCRN} \defas$ \;
            $\;\lambda (p_A : \typeReal, p_B : \typeReal).$
            \Block{
                $\omega \codePGets \codeUniform~(0,1)$\;
                $\codeLet~b_A = \omega < p_A$\;
                $\codeLet~b_B = \omega < p_B$\;
                $\codePReturn~(b_A, b_B)$ \;
            }
        \end{algorithm}
    \end{plainCodeBox}
\end{sublisting}\hfill
\begin{sublisting}[t]{0.19\textwidth}
    \begin{plainCodeBox}[box align=top,left=-4mm,colframe=white]{\textwidth}
        \LinesNotNumbered
        \begin{algorithm}[H]
            \DontPrintSemicolon
            $\transformCoupling{\codeNormalCRN} \defas$ \;
            $\;\lambda \begin{aligned}[t]
                \big(&(\mu_A, \mu_B), (\sigma_A, \sigma_B))\\[-4pt]
                     &:(\typeReal^2\times\typeReal^2)
                     \big).
                \end{aligned}$
            \Block{
                $\omega \codePGets \codeNormal~(0, 1)$\;
                $\codeLet~x_A = \mu_A + \sigma_A \times \omega$\;
                $\codeLet~x_B = \mu_B + \sigma_B \times \omega$\;
                $\codePReturn~(x_A, x_B)$ \;
            }
        \end{algorithm}
    \end{plainCodeBox}
\end{sublisting}\hfill
\begin{sublisting}[t]{0.42\textwidth}
  \begin{plainCodeBox}[box align=top,left=-4mm,colframe=white]{\textwidth}
      \LinesNotNumbered
      \begin{algorithm}[H]
          \DontPrintSemicolon
          $\transformCoupling{\codeCategoricalMaxIndep_{\alpha,n}} \defas$\;
          $\;\lambda (z : (\alpha^2)^n, p : (\typeReal^2)^n).$
          \Block{
              $\codeLet~(p_A, p_B) = (\codeMap~\codeFst~p, \codeMap~\codeSnd~p)$\;
              $i_A \codePGets \codeCategorical~((1, \dots, n), p_A)$\;
              $\codeLet~x_A = \codeIndex~(\codeMap~\codeFst~z)~i_A$\;
              $\codeLet~x_A' = \codeIndex~(\codeMap~\codeSnd~z)~i_A$\;
              $b \codePGets \codeFlip~(\min(1, (\codeIndex~p_B~i_A) / (\codeIndex~p_A~i_A)))$\;
              $\codeIf~b~\codeThen~(\codePReturn~(x_A,x_A'))~\codeElse$
              \Block{
                  $\codeLet~r = \codeMap~(\lambda(p_A, p_B). \max(0, p_B - p_A))~p$\;
                  $x_B \codePGets \codeCategorical~(\codeMap~\codeSnd~z, \codeNormalize~r)$\;
                  $\codePReturn~(x_A, x_B)$\;
              }
          }
      \end{algorithm}
  \end{plainCodeBox}
\end{sublisting}

\captionsetup{aboveskip=1pt,belowskip=-7pt}
\caption{Example definitions of the coupling transformation for four probabilistic primitives.}
\label{lst:coupling-primitives}
\end{listing}

\noindentparagraph{\normalfont\bfseries Soundness of Program Couplings.}
The main theorem of this section states that if $\transformCoupling{\cdot}$
is soundly defined on each primitive, then the transformation
forms a sound coupling of any input program.
(All proofs are given in the appendices.)
\begin{restatable}[Soundness of Coupling Transformation]{theorem}{thmCouplingCorrectness}
    Let $\Gamma \vdash t : \tau$ be a well-typed term of $\langP$.
    If each primitive $p$ appearing in $t$ satisfies
    $(\semP{p}, \semP{p}, \semP{\transformCoupling{p}})
        \in \relationCoupling(\tau_p)$, then
    \begin{equation}
        \left(\gamma_A, \gamma_B, \gamma_{\transformCoupling}\right)
        \in \relationCoupling(\Gamma)
        \implies
        \left(
        \semP{t}(\gamma_A),
        \semP{t}(\gamma_B),
        \semP{\transformCoupling{t}}(\gamma_{\transformCoupling})
        \right)
        \in \relationCoupling(\tau).
    \end{equation}
    \label{thm:coupling-correctness}
\end{restatable}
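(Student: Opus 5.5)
We prove a strengthened, proof-relevant version of the statement by induction on the typing derivation of $\Gamma \vdash t : \tau$. The strengthened claim is that for each well-typed $t$ there is a $\catQBS$-morphism
\[
\tilde{t} : \relationCouplingProof(\Gamma) \to \relationCouplingProof(\tau)
\]
whose first three components are $\semP{t}(\gamma_A)$, $\semP{t}(\gamma_B)$ and $\semP{\transformCoupling{t}}(\gamma_{\transformCoupling})$. Here $\relationCouplingProof(\Gamma)$ is the product of the $\relationCouplingProof(\tau_i)$ over the context. The theorem then follows by forgetting the proof component. The hypothesis on primitives gives a proof value for each $p$ (a constant), so $\tilde{p}$ is a constant map. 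Constructing the proofs as morphisms, rather than merely asserting their existence, is what makes the binding and function cases go through. Without it we would have to choose proof values measurably inside a kernel, and no choice principle is available for that.

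The routine cases come first. Variables are projections. Constants $c$ are sent to $(c,c,(c,c),\semUnit)$. Pairs and projections use the product clause of $\relationCouplingProof$. Abstraction and application use the function clause: a proof value for a function maps related arguments together with their proofs to related results with proofs. For abstraction we curry $\tilde{t}$, and for application we apply. For $\codePReturn~t$ we set $\tilde{\mu} \defas \semPReturn(\tilde{t}(\gamma))$, whose first three marginals are the Dirac measures of the components. For a bind $x \codePGets t;~t'$ with $t$ having proof $\tilde\mu = \tilde{t}(\tilde\gamma)$, we define
\[
\tilde{\nu} \defas \tilde{\mu} \semPBind \big(\tilde{x} \mapsto \tilde{t'}(\tilde\gamma[x \setvar \tilde{x}])\big).
\]
This is legitimate because $\tilde\mu$ is a measure on $\relationCouplingProof(\tau)$ itself. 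Its $i$th marginal is then computed by naturality of bind with respect to $P(\semProj_i)$, together with the fact that the $i$th component of $\tilde{t'}$ depends only on the $i$th environment component. This gives, for example, $\mu_A \semPBind (u \mapsto \semP{t'}(\gamma_A[x\setvar u]))$. Total mass $1$ is preserved because the kernel's values are probability measures.

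The main obstacle is the conditional case, because the two runs take possibly different branches and the coupled term is assembled with $\transformMerge_\tau$. We first prove an auxiliary lemma by induction on $\tau$. Suppose $(a_A, a_B', a_{\transformCoupling}, \tilde a)$ and $(b_A', b_B, b_{\transformCoupling}, \tilde b)$ are in $\relationCouplingProof(\tau)$. Then $(a_A, b_B, \semP{\transformMerge_\tau\set{a_{\transformCoupling}}\set{b_{\transformCoupling}}}, \semMergeProof_\tau(\tilde a,\tilde b))$ is in $\relationCouplingProof(\tau)$ for a $\catQBS$-morphism $\semMergeProof_\tau$. The cases are as follows.
\begin{itemize}
\item At base types the merge keeps the first component of $a_{\transformCoupling}$ and the second of $b_{\transformCoupling}$, which is exactly $(a_A, b_B)$.
\item At products the lemma follows componentwise.
\item At functions it follows pointwise. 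Given a related argument $(x_A, x_B, x_{\transformCoupling}, \tilde{x})$, we feed it to both $\tilde a$ and $\tilde b$; this requires only that the two functions share $x_{\transformCoupling}$, which the macro ensures.
\item At $\typeProb~\tau$ we take the product measure $\tilde a \otimes \tilde b$ and push it forward along $\semMergeProof_\tau$. The marginals then follow from the marginals of each factor, and the merge at measure type is interpreted as the independent product followed by the pushforward.
\end{itemize}

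With the lemma in hand, the conditional case goes as follows. The guard gives $(b_A,b_B)$ with $b_{\transformCoupling} = (b_A,b_B)$. The first merge argument, $\codeIfThenElse{b_A}{\transformCoupling{t_1}}{\transformCoupling{t_2}}$, is related via $\tilde{t_1}$ or $\tilde{t_2}$ at environment $\tilde\gamma$ to $\semP{t_i}(\gamma_A)$, where $t_i$ is the branch selected by $b_A$; similarly the second argument is related to the branch selected by $b_B$. Applying the merge lemma yields the required tuple. Since the case split on a boolean is a $\catQBS$-morphism, the resulting proof map is a morphism as well, which completes the induction.
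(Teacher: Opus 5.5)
Your proposal is correct and follows essentially the same route as the paper: a proof-relevant relation $\relationCouplingProof$ whose proof values are built by $\catQBS$-morphisms, by induction on the typing derivation. As in the paper, primitives get constant proofs, $\codePReturn$ and $\codePGets$ get Dirac and bind proofs, and conditionals use a type-indexed proof merge $\semMergeProof_\tau$ that takes the independent product at measure types. Your explicit merge lemma for two ``half-matching'' related tuples states more cleanly the step the paper only sketches as following by induction on $\transformMerge_\tau$ and $\semMergeProof_\tau$.
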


\vspace{-1.5em}
\section{Partially Evaluating a Coupled Probabilistic Program}
\label{sec:partial-evaluation}

Having defined a sound coupling transformation,
we now address challenge~\labelcref{challenge:factorize}:
how can we soundly factorize a coupled probabilistic program,
so that it can be partially evaluated in the sense explained
in \cref{sec:overview}?
In this section, we present a factorized coupling transformation
$\transformModCoupling{\cdot}$ that partitions a coupled probabilistic program
into \emph{primal} and \emph{residual} random choices, and a
\emph{partial probability evaluation} transformation $\transformPPEval{\cdot}$
that allows the primal
choices to be fixed ahead of time, ensuring that only residual choices are targeted
by probabilistic inference.

\subsection{Forming Factorized Couplings}
\label{sec:partial-evaluation-factorized-coupling}

To ensure sound factorizations, we desire that residual choices in a program do not affect primal choices:
a \emph{noninterference} property.
To enforce this property, we introduce a new intermediate probabilistic language $\langPP$
based on the classic dependency core calculus (DCC) of \citet{abadi1999core},
that will be the target of the transformation $\transformModCoupling{\cdot}$.

\begin{listing}[t]
\small

\setlength{\abovedisplayshortskip}{0pt}
\setlength{\abovedisplayskip}{0pt}
\setlength{\belowdisplayskip}{0pt}

\begin{syntaxbox}[title=\textbf{Types and Terms}]{\linewidth}
\begin{alignat*}{2}
  \type~(\in \setTypes)& \Coloneq&&
  \begin{aligned}[t]
  ~\cdots
  &\mid \highlightDCC{\typeResidual~\type}
  \mid \highlightFactorization{\typePProb~\type}
  \end{aligned}
  \\
  \ter~(\in \setTerms)& \Coloneq &&
  \begin{aligned}[t]
  ~\cdots
  &\mid \highlightDCC{\codeRHide~t}
  \mid \highlightDCC{x \codeRGets t;~t'}
  \mid \highlightFactorization{\codePPPrimal~t}
  \mid \highlightFactorization{\codePPResidual~t}
  \mid \highlightFactorization{\codePPReturn~t}
  \mid \highlightFactorization{x \codePPGets t;~t'}
  \end{aligned}
\end{alignat*}
\end{syntaxbox}

\smallskip

\begin{syntaxbox}[title={\textbf{Typing Rules}}]{\linewidth}
  \begin{gather*}
     \highlightDCC{\begin{array}{c}
             \Gamma \vdash t: \type
             \\
             \hline
             \Gamma \vdash \codeRHide~t: \typeResidual~\type
         \end{array}}
     \quad
     \highlightDCC{\begin{array}{c}
             \Gamma \vdash t :\typeResidual~\type
             \quad
             \extend{\Gamma}{x}{\type} \vdash t' : \type'
             \quad
             \judgeHidden~\type'
             \\
             \hline
             \Gamma \vdash x \codeRGets t;~t': \type'
         \end{array}}
     \\
     \highlightFactorization{
         \begin{array}{c}
             \Gamma \vdash t: \typeProb~\type
             \quad
             \judgeTransparent~\type
             \\
             \hline
             \Gamma \vdash \codePPPrimal~t: \typePProb~\type
         \end{array}
     }
     \quad
     \highlightFactorization{
         \begin{array}{c}
             \Gamma \vdash t: \typeResidual~(\typeProb~\type)
             \quad
             \judgeHidden~\type
             \\
             \hline
             \Gamma \vdash \codePPResidual~t: \typePProb~\type
         \end{array}
     }
     \\
     \highlightFactorization{
         \begin{array}{c}
             \Gamma \vdash t: \type
             \\
             \hline
             \Gamma \vdash \codePPReturn~t: \typePProb~\type
         \end{array}
     }
     \quad
     \highlightFactorization{
         \begin{array}{c}
             \Gamma \vdash t: \typePProb~\type
             \quad
             \extend{\Gamma}{x}{\type} \vdash t': \typePProb~\type'
             \\
             \hline
             \Gamma \vdash x \codePPGets t;~t':
             \typePProb~\type'
         \end{array}
     }
  \end{gather*}
\end{syntaxbox}

\smallskip

\begin{syntaxbox}[title={\textbf{Hidden and Transparent Typing Judgements}}]{\linewidth}
\[
\begin{array}{@{}c@{\quad}c@{\quad}c@{\quad}c@{}}
  \begin{array}{c} \\ \hline \judgeTransparent~\sigma \end{array}
  &
  \begin{array}{c} \judgeTransparent~\tau_1 \quad \judgeTransparent~\tau_2 \\ \hline \judgeTransparent~(\tau_1 \times \tau_2) \end{array}
  &
  \begin{array}{c} \judgeTransparent~\tau_2 \\ \hline \judgeTransparent~(\tau_1 \to \tau_2) \end{array}
  \\[1.25em]
  \begin{array}{c} \judgeTransparent~\tau \\ \hline \judgeTransparent~(\typeProb~\tau) \end{array}
  &
  \begin{array}{c} \\ \hline \judgeHidden~(\typeResidual~\type) \end{array}
  &
  \begin{array}{c} \judgeHidden~\tau_1 \quad \judgeHidden~\tau_2 \\ \hline \judgeHidden~(\tau_1 \times \tau_2) \end{array}
  \\[1.25em]
  \begin{array}{c} \judgeHidden~\tau_2 \\ \hline \judgeHidden~(\tau_1 \to \tau_2) \end{array}
  &
  \begin{array}{c} \judgeHidden~\tau \\ \hline \judgeHidden~(\typeProb~\tau) \end{array}
  &
  \begin{array}{c} \judgeHidden~\tau \\ \hline \judgeHidden~(\typePProb~\tau) \end{array}
\end{array}
\]
\end{syntaxbox}
\captionsetup{aboveskip=4pt,belowskip=-6pt}
\caption{Types and terms for $\langPP$, a core calculus for factorized probabilistic programs.}
\label{def:phase-separated-language}
\end{listing}

\noindentparagraph{\normalfont\bfseries New Syntax for Factorization.}
The language $\langPP$ is an extension of $\langP$ from \cref{sec:preliminaries}.
\Cref{def:phase-separated-language} gives new syntactic constructs,
using \highlightDCCName
to highlight constructs imported from the DCC
and
\highlightFactorizationName
to highlight new probabilistic constructs.
The first new type is
$\highlightDCC{\typeResidual~\type}$,
whose role is to represent residual values
that are prohibited from influencing primal random choices.
The type is equipped with an introduction form $\highlightDCC{\codeRHide~t}$
and a bind $\highlightDCC{x \codeRGets t;~t'}$.
These constructs
are derived from specializing the DCC to the case of a two-point lattice.
The second new type in $\langPP$ is
$\highlightFactorization{\typePProb~\type}$.
As with $\typeProb~\type$, the type $\typePProb~\type$ is equipped
with a monadic return $\highlightFactorization{\codePPReturn~t}$ and bind
$\highlightFactorization{x \codePPGets t;~t'}$.
The role of $\typePProb~\type$ is to represent probability distributions that
\emph{distinguish} primal random choices from residual ones.
Accordingly, the type is equipped with two additional introduction forms
$\highlightFactorization{\codePPPrimal~t}$ and $\highlightFactorization{\codePPResidual~t}$
for lifting a term of type $\typeProb~\type$ into a term of type $\typePProb~\type$,
marking the term as primal and residual, respectively.
Finally, the $\judgeTransparent$ and $\judgeHidden$ judgements
identify types that carry only primal values and only residual values, respectively.

\noindentparagraph{\normalfont\bfseries Factorized Coupling Transformation.}
The factorized coupling program transformation $\transformModCoupling{\cdot}$,
defined in \cref{def:factorized-coupling},
acts similarly to $\transformCoupling{\cdot}$ from
\cref{def:coupling-transformation}, with two main differences:
\begin{enumerate*}[label=(\roman*)]
\item the second component of each coupled pair is marked as residual; and
\item the return and bind for $\typeProb~\tau$ are replaced with the return and bind for $\typePProb~\tau$.
\end{enumerate*}
The macro $\transformModMerge_\tau\set{\cdot}\set{\cdot}$ (defined in
\cref{appendix:deferred-partial-evaluation}) used in the
$\codeIfThenElse{t}{t_1}{t_2}$ rule is a straightforward extension of the
merge macro in \cref{def:coupling-transformation}.

\begin{listing}[t]
\footnotesize

\setlength{\abovedisplayshortskip}{0pt}
\setlength{\belowdisplayshortskip}{0pt}
\setlength{\abovedisplayskip}{0pt}
\setlength{\belowdisplayskip}{0pt}

\begin{syntaxbox}[title={\bfseries Factorized Coupling Transformation $\transformModCoupling{\cdot}$ from $\langP$ to $\langPP$}]{\linewidth}
 \begin{gather*}
  \transformModCoupling{\base} \defas \base \typeProduct \typeResidual~\base
  \quad
  \transformModCoupling{\typeProb~\tau} \defas \typePProb~(\transformModCoupling{\tau})
  \quad
  \transformModCoupling{c} \defas (c, \codeRHide~c)
  \quad
  \transformModCoupling{p} \defas  \langle \text{defined per-primitive} \rangle
  \\
  \transformModCoupling{\codePReturn~t} \defas \codePPReturn~\transformModCoupling{t}
  \quad
  \transformModCoupling{x \codePGets t;~t'}  \defas
  \set{x \codePPGets \transformModCoupling{t};~\transformModCoupling{t'}}
  \\
  \transformModCoupling{\Gamma \vdash \codeIfThenElse{t}{t_1}{t_2} : \tau} \defas
  \begin{aligned}[t]
       \codeLet~(b_A, \bar{b}_B) = \transformModCoupling{t};
       \transformModMerge_{\tau}\begin{aligned}[t]
        &\set{\codeIfThenElse{b_A}{\transformModCoupling{t_1}}{\transformModCoupling{t_2}}}\\
        &\set{b_B \codeRGets \bar{b}_B;~\codeRHide~(\codeIfThenElse{b_B}{\transformModCoupling{t_1}}{\transformModCoupling{t_2}})}
        \end{aligned}
  \end{aligned}
 \end{gather*}
 \end{syntaxbox}

 \begin{syntaxbox}[title={\bfseries Erasure Transformation $\transformErasure{\cdot}$ from $\langPP$ to $\langP$}]{\linewidth}
 \begin{gather*}
    \transformErasure{\sigma} \defas \sigma
    \qquad
    \transformErasure{\typeResidual~\type} \defas
    \transformErasure{\type}
    \qquad
    \transformErasure{\typePProb~\type} \defas
    \typeProb~\transformErasure{\type}
    \qquad
    \transformErasure{p} \defas p
    \\
     \transformErasure{\codeRHide~t} \defas \transformErasure{t}
    \quad
    \transformErasure{\codePPReturn~t} \defas \transformErasure{\codePReturn~t}
    \quad
    \transformErasure{\codePPPrimal~t} \defas \transformErasure{t}
    \quad
    \transformErasure{\codePPResidual~t} \defas \transformErasure{t}
    \\
      \transformErasure{x \codeRGets t;~t'} \defas
    \set{\codeLet~x = \transformErasure{t};~\transformErasure{t'}}
    \qquad
    \transformErasure{x \codePPGets t;~t'} \defas
    \set{x \codePGets \transformErasure{t};~\transformErasure{t'}},
\end{gather*}
\end{syntaxbox}

\begin{syntaxbox}[title={\bfseries Partial Probability Evaluation Transformation $\transformPPEval{\cdot}$ from $\langPP$ to $\langP$}]{\linewidth}
\begin{gather*}
\transformPPEval{\sigma} \defas \sigma
\qquad
\transformPPEval{\typeResidual~\type} \defas \transformPPEval{\type}
\\
\transformPPEval{\typeProb~\type} \defas \typeProb~\transformPPEval{\type} \typeProduct \left(\typeSeed \to \transformPPEval{\type}\right)
\qquad
\transformPPEval{\typePProb~\type} \defas \typeSeed \to \typeProb~\transformPPEval{\type}
\\
\transformPPEval{\codeRHide~t} \defas \transformPPEval{t}
\qquad
\transformPPEval{x \codeRGets t;~t'} \defas \set{\codeLet~x = \transformPPEval{t};~\transformPPEval{t'}}
\\
\transformPPEval{\codePPPrimal~t} \defas
\lambda s.\{\codePReturn~((\codeSnd~\transformPPEval{t})~s)\}
\qquad
\transformPPEval{\codePPResidual~t} \defas \lambda\_.\set{\codeFst~\transformPPEval{t}}
\\
\transformPPEval{\codePReturn~t} \defas \left( \codePReturn~\transformPPEval{t}, \lambda\_. \transformPPEval{t} \right)
\qquad
\transformPPEval{\codePPReturn~t} \defas \lambda\_.\set{\codePReturn~\transformPPEval{t}}
\\
\transformPPEval{x \codePGets t;~t'} \defas
  \big(
    \set{x \codePGets \codeFst~\transformPPEval{t};~\codeFst~\transformPPEval{t'}},
    \lambda s.\set{\codeLet~(s_0, s_1) = \codeSplit~s;
    ~\codeLet~x = (\codeSnd~\transformPPEval{t})~s_0;
    ~(\codeSnd~\transformPPEval{t'})~s_1}
  \big)
\\
\transformPPEval{x \codePPGets t;~t'} \defas \lambda s. \set*{\codeLet~(s_0, s_1) = \codeSplit~s;~\set*{x \codePGets \transformPPEval{t}~s_0;~\transformPPEval{t'}~s_1}}
\end{gather*}
\end{syntaxbox}

\caption{The factorized coupling, erasure, and partial probability transformations for $\langP$ and $\langPP$.
The program transformations act homomorphically on all omitted types and terms.}
\label{def:factorized-coupling}
\end{listing}

The soundness of $\transformModCoupling{\cdot}$
can be justified by viewing it as an \emph{annotated} version of the
coupling transformation $\transformCoupling{\cdot}$.
To formalize this idea, we also introduce the erasure program
transformation $\transformErasure{\cdot}$
in \cref{def:factorized-coupling}, which erases all type and term
constructs responsible for separating primal and residual values.
We can now state the following analogue of \cref{thm:coupling-correctness}
for $\transformModCoupling{\cdot}$.
\begin{restatable}[Soundness of Factorized Coupling Transformation]{theorem}{thmFactorizedCouplingCorrectness}
    \label{thm:factorized-coupling-correctness}
    Let
    $\Gamma \vdash t : \tau$ be a well-typed term of $\langP$.
    If each primitive $p$ appearing in $t$
    satisfies
    $\left(\semP{p}, \semP{p}, \semP{\transformErasure{\transformModCoupling{p}}}\right)
        \in \relationCoupling(\tau_p)$,
    then
    \begin{equation}
        \left(\gamma_A, \gamma_B, \gamma_{\transformCoupling}\right) \in \relationCoupling(\Gamma)
        \implies
        \left(\semP{t}(\gamma_A), \semP{t}(\gamma_B), \semP{\transformErasure{\transformModCoupling{t}}}(\gamma_{\transformCoupling})\right)
        \in
        \relationCoupling(\tau).
    \end{equation}
\end{restatable}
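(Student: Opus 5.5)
The plan is to reduce the statement to \cref{thm:coupling-correctness} by showing that erasure undoes the annotations of $\transformModCoupling{\cdot}$. Concretely, I will prove by structural induction on $\langP$ types and terms that
\[
\transformErasure{\transformModCoupling{\tau}} = \transformCoupling{\tau}
\quad\text{and}\quad
\semP{\transformErasure{\transformModCoupling{t}}} = \semP{\transformCoupling{t}'},
\]
where $t'$ is obtained from $t$ by reinterpreting each primitive $p$ as a fresh primitive $p'$ with $\semP{\transformCoupling{p'}} \defas \semP{\transformErasure{\transformModCoupling{p}}}$. The hypothesis of the present theorem is then exactly the hypothesis of \cref{thm:coupling-correctness} for $t'$. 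Since $\semP{t'} = \semP{t}$ (the source semantics of $p'$ is that of $p$), the conclusion follows immediately.

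The type equation is a direct check. We have
\[
\transformErasure{\transformModCoupling{\sigma}} = \transformErasure{\sigma \times \typeResidual~\sigma} = \sigma\times\sigma,
\qquad
\transformErasure{\transformModCoupling{\typeProb~\tau}} = \transformErasure{\typePProb~\transformModCoupling{\tau}} = \typeProb~\transformErasure{\transformModCoupling{\tau}},
\]
and products and functions are homomorphic. In particular, contexts agree, so $\gamma_{\transformCoupling}$ is a valid environment for the erased term.

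For terms, the homomorphic cases are immediate from the induction hypothesis. The remaining cases are handled as follows.
\begin{itemize}
\item Constants: $\transformErasure{(c,\codeRHide~c)} = (c,c)$.
\item Return: $\transformErasure{\codePPReturn~\transformModCoupling{t}} = \codePReturn~\transformErasure{\transformModCoupling{t}}$.
\item Bind: $\transformErasure{\transformModCoupling{x\codePGets t;\,t'}}$ is syntactically the $\langP$ bind of the erased subterms.
\end{itemize}

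The main obstacle is the conditional, which uses $\transformModMerge_\tau$ and the residual bind. Here erasure yields
\[
\codeLet~(b_A,b_B) = \transformErasure{\transformModCoupling{t}};\;
\transformErasure{\transformModMerge_\tau\set{\ldots}\set{\codeLet~b_B = b_B;\ \codeIfThenElse{b_B}{\ldots}{\ldots}}}.
\]
I will prove an auxiliary lemma, by induction on $\tau$, stating
\[
\semP{\transformErasure{\transformModMerge_\tau\set{s_1}\set{s_2}}} = \semP{\transformMerge_\tau\set{\transformErasure{s_1}}\set{\transformErasure{s_2}}}.
\]
I expect this to hold because $\transformModMerge$ is defined in \cref{appendix:deferred-partial-evaluation} as the same recursion as $\transformMerge$, with $\codeRHide$ and $\codeRGets$ inserted to respect the DCC typing. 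Erasing these gives identities and trivial lets, which are semantically $\beta$-equal. The base case is the delicate one: there $\transformModMerge_\sigma$ picks $x_A$ from the first pair and the residual component from the second via a residual bind. After erasure this becomes $(x_A,y_B)$, matching $\transformMerge_\sigma$.

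With the lemma in hand, the conditional case closes by the induction hypothesis together with the soundness of $\beta$/let-conversion in the $\catQBS$ semantics. Applying \cref{thm:coupling-correctness} then finishes the proof.
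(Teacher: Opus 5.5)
Your proposal is correct and follows the paper's proof: both reduce to \cref{thm:coupling-correctness} by checking that $\transformErasure{\cdot}\circ\transformModCoupling{\cdot}$ satisfies the defining clauses of $\transformCoupling{\cdot}$, and both handle the conditional by comparing $\transformErasure{\transformModMerge_\tau\set{\cdot}\set{\cdot}}$ with $\transformMerge_\tau\set{\cdot}\set{\cdot}$. Your version is slightly more careful than the paper's in two ways. First, you state the agreement up to semantic equality (let/$\beta$ and the monad unit law) rather than the literal syntactic equality the paper asserts. Second, your relabelled primitives $p'$ make explicit that the primitive clause is $\transformErasure{\transformModCoupling{p}}$ and not $\transformCoupling{p}$.
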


\noindentparagraph{\normalfont\bfseries Example Factorized Coupling Primitives.}
As with $\transformCoupling{\cdot}$, there is usually one clear definition for deterministic primitives,
e.g.,
$\transformModCoupling{+} \defas \lambda((x_A, \bar{x}_B), (y_A, \bar{y}_B)).
(x_A + y_A, \{x_B \codeRGets \bar{x}_B;~y_B \codeRGets \bar{y}_B;~\codeRHide~(x_B + y_B)\})$.
\Cref{lst:factorized-coupling-primitives} defines factorized couplings
for the $\codeFlipCRN$ and $\codeCategoricalMaxIndep$ primitives from
\cref{sec:coupling}.
Let us unpack the case of $\transformModCoupling{\codeFlipCRN}$,
whose type is given by
\begin{equation}
    \transformModCoupling{\tau_\codeFlipCRN}
     = \typeReal \typeProduct \typeResidual~\typeReal \typeFunction \typePProb~(\typeBool \typeProduct \typeResidual~\typeBool).
\end{equation}
In the implementation of $\transformModCoupling{\codeFlipCRN}$,
the flip operation that samples $b_A$ is wrapped in $\codePPPrimal$.
That the typing
rule for $\codePPPrimal~t$ in \cref{def:phase-separated-language}
expects a \textit{transparent} type ensures that this sample
does not depend on any residual values.
In contrast, the operations that sample $\bar{b}_B$
are wrapped in $\codePPResidual~t$.
That the typing rule for $\codePPResidual~t$ in \cref{def:phase-separated-language}
expects a \textit{hidden} type ensures that this
sample does not determine any primal values.
Recall that this residual sampling step will be targeted by inference: the
$\colforsyntax{\texttt{ENUM}}$ annotation is a hint to the inference algorithm
(explained in \cref{sec:gradient-inference}).

\Cref{lst:factorized-coupling-primitives} also shows the erasure of
$\transformModCoupling{\codeFlipCRN}$.
It is readily verified that
$\transformErasure{\transformModCoupling{\codeFlipCRN}}$
has the same denotation as $\transformCoupling{\codeFlipCRN}$ in \cref{lst:coupling-primitives},
and thus $\transformModCoupling{\codeFlipCRN}$ is a sound factorized coupling.

\subsection{Partial Probability Evaluation}
\label{sec:partial-evaluation-transformation}

To make use of the factorized couplings produced by
$\transformModCoupling{\cdot}$ in gradient inference,
we need a way to compile them into $\langP$ programs
in which only the residual choices remain probabilistic.

\noindentparagraph{\normalfont\bfseries Partial Probability Evaluation Transformation.}
The \textit{partial probability
evaluation} program transformation $\transformPPEval{\cdot}$
from $\langPP$ to $\langP$ is defined in \cref{def:factorized-coupling}.
The key idea is to use a shared seed to determine the output of all
primal choices, but to ignore this seed when sampling residual choices.
The definition uses a $\typeSeed$ type
and helper primitives
$\codeSeed$ and $\codeSplit$ for sampling and splitting seeds,
where
$\tau_\codeSeed = \typeProb~\typeSeed$
and $\tau_\codeSplit = \typeSeed \typeFunction \typeSeed \typeProduct \typeSeed$.
Concretely, we set $\semP{\typeSeed} = \setReal$, $\semP{\codeSeed} = \mathrm{Unif}(0, 1)$,
and $\semP{\codeSplit}$ to any measure-preserving
map from $\setReal$ to $\setReal \times \setReal$.

The transformation $\transformPPEval{\cdot}$
behaves similarly to the erasure transformation on nonprobabilistic terms.
However, the key difference is that
$\transformPPEval{\typePProb~\tau}$ is defined as a
function \emph{from seeds to probability distributions}.
The translation of the introduction form
$\codePPResidual~t$ \emph{ignores} the seed, preserving the original distribution.
The translation of $\codePPPrimal~t$ uses \emph{only} the seed, forming
a Dirac probability distribution given any fixed seed.
To enable this behavior,
the transformation on $\typeProb~\tau$ is modified
to create and propagate both ordinary and seeded
versions of each distribution:
the former is of type $\typeProb~\tau$ and used in $\transformPPEval{\codePPResidual~t}$,
and the latter is of type $\typeSeed \to \tau$ and used in $\transformPPEval{\codePPPrimal~t}$.

\begin{listing}
\footnotesize
\captionsetup[sublisting]{skip=0pt}
\begin{sublisting}{.5\linewidth}
  \begin{plainCodeBox}[left=-3mm,box align=top]{\textwidth}
    \LinesNotNumbered
    \begin{algorithm}[H]
      \DontPrintSemicolon
      $\transformModCoupling{\codeFlipCRN} \defas
        \lambda (p_A: \typeReal, \bar{p}_B: \typeResidual~\typeReal).$
      \Block{
          $b_A \codePPGets \highlightFactorizationDeep{\codePPPrimal}~(\codeFlip~p_A)$\;
          $\bar{b}_B \codePPGets \highlightFactorizationDeep{\codePPResidual}$
          \Block{
              $p_B \codeRGets \bar{p}_B$\;
              $\codeLet~r = \codeIfThenElse{b_A}
                  {\min\left(\frac{p_B}{p_A}, 1\right)}
                  {\max\left(\frac{p_B - p_A}{1 - p_A}, 0\right)}$\;
              $\codeRHide~\left(
                  \codeFlipEnum
                  ~r\right)$
          }
          $\highlightFactorizationDeep{\codePPReturn}~(b_A, \bar{b}_B)$ \;
      }
      \vspace{1em}
      $\transformErasure{\transformModCoupling{\codeFlipCRN}} \defas
        \lambda (p_A: \typeReal, p_B: \typeReal).$
      \Block{
          $b_A \codePGets \codeFlip~p_A$\;
          $\codeLet~r = \codeIfThenElse{b_A}
              {\min\left(\frac{p_B}{p_A}, 1\right)}
              {\max\left(\frac{p_B - p_A}{1 - p_A}, 0\right)}$\;
          $b_B \codePGets \codeFlipEnum~r$\;
          $\codePReturn~(b_A, b_B)$ \;
      }
    \end{algorithm}
  \end{plainCodeBox}
\end{sublisting}\hfill%
\begin{sublisting}{.5\linewidth}
  \begin{plainCodeBox}[left=-3mm, box align=top]{\textwidth}
    \LinesNotNumbered
    \begin{algorithm}[H]
        $\transformModCoupling{\codeCategoricalMaxIndep_{\alpha,n}} \defas$ \;
        $\;\lambda (z : (\alpha \typeProduct \typeResidual~\alpha)^n, p : (\typeReal \typeProduct \typeResidual~\typeReal)^n).$
        \Block{
        $\codeLet~(z_A,p_A) = (\codeMap~\codeFst~z, \codeMap~\codeFst~p)$\;
        $i_A \codePPGets \highlightFactorizationDeep{\codePPPrimal}~\left(\codeCategorical~((1, \dots, n), p_A)\right)$\;
        $\codeLet~x_A = \codeIndex~z_A~i_A$\;
        $\bar{x}_B \codePPGets \highlightFactorizationDeep{\codePPResidual}$
        \Block{
        $p_{B} \codeRGets \codeSequenceR~(\codeMap~\codeSnd~p)$\;
        $z_B \codeRGets \codeSequenceR~(\codeMap~\codeSnd~z)$\;
        $b \codePGets \codeFlip~(\min(1, (\codeIndex~p_B~i_A) / (\codeIndex~p_A~i_A)))$\;
        $\codeIf~b~\codeThen~(\codeRHide~(\codePReturn~(\codeIndex~z_B~i_A)))~\codeElse$
        \Block{
            $\codeLet{}~f = (\lambda(p_A, p_B). \max(0, p_B - p_A))$\;
            $\codeLet~r = \codeMap~f~(\codeZip~p_A~p_B)$\;
            $\codeRHide~(\codeCategoricalENUM~(z_B, \codeNormalize~r))$\;
        }
        }
        $\highlightFactorizationDeep{\codePPReturn}~(x_A, \bar{x}_B)$ \;
        }
    \end{algorithm}
  \end{plainCodeBox}
\end{sublisting}
\captionsetup{aboveskip=2pt}
\captionsetup{belowskip=2pt}
\caption{Example definitions of the factorized coupling transformation for two primitives from \cref{lst:coupling-primitives}.}
\label{lst:factorized-coupling-primitives}
\vspace{-10pt}
\end{listing}

\noindentparagraph{\normalfont\bfseries Soundness of Partial Probability Evaluation.}
The following theorem
guarantees that if $\transformPPEval{\cdot}$ is soundly implemented on each primitive,
then the transformation forms a sound factorization of any input program.
In particular, sampling a seed (which fixes the primal choices)
followed by sampling the residual choices in the transformed program $\transformPPEval{t}$ is
equivalent (in distribution) to sampling all choices jointly using the unannotated original
program $\transformErasure{t}$.
The theorem is established using a logical relation $\relationPPEval$
(defined in \cref{appendix:deferred-partial-evaluation}),
where $(x_{\transformErasure}, x_{\transformPPEval}) \in \relationPPEval(\tau)$
when $x_{\transformPPEval} : \semP{\transformPPEval{\tau}}$
is a sound partial probability evaluation of $x_{\transformErasure} : \semP{\transformErasure{\tau}}$.
\begin{restatable}[Soundness of Partial Probability Evaluation]{theorem}{thmLangPPRealization}
    \label{thm:langPP-realization}
    Let $\Gamma \vdash t : \typePProb~\tau$ be a well-typed term of $\langPP$,
    where the type $\tau$ is constructed from base types, products, functions,
    and the $\typeResidual~\tau$ constructor.
    If each primitive $p$ appearing in $t$ satisfies
    $\left(\semP{p}, \semP{\transformPPEval{p}}\right)
    \in \relationPPEval(\tau_p)$,
    then
    \begin{equation}
        \left(\gamma_{\transformErasure}, \gamma_{\transformPPEval}\right) \in \relationPPEval(\Gamma)
        \implies
        \semP{\transformErasure{t}}(\gamma_{\transformErasure})
        =
        \semSeed \semPBind{} \semP{\transformPPEval{t}}(\gamma_{\transformPPEval}).
    \end{equation}
\end{restatable}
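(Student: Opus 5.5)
We prove the statement by a logical-relations argument over the typing derivation of $\langPP$ terms, in the same style as \cref{thm:coupling-correctness}. The relation $\relationPPEval$ is defined by induction on types, and the definition is arranged so that the conclusion of the theorem is exactly its clause at $\typePProb~\tau$.

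At base types, $\typeResidual~\tau$, products and functions, the relation is the evident one: equality at $\sigma$, it is inherited through $\typeResidual$, and it is preserved under related arguments for functions. At $\typeProb~\tau$, we relate $\mu_{\transformErasure}$ to $(\nu, f)$ whenever $\nu$ is related to $\mu_{\transformErasure}$ and, in addition, $\semSeed \semPBind (s \mapsto \semPReturn(f(s)))$ is related to $\mu_{\transformErasure}$. Here ``related'' means equal in distribution modulo the value relation, and in the restricted case of the theorem this is literally equality. At $\typePProb~\tau$, we relate $\mu_{\transformErasure}$ to $g$ when $\mu_{\transformErasure} = \semSeed \semPBind g$ (again modulo the value relation). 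The restriction on $\tau$ in the statement guarantees that at the result type the value relation collapses to equality, since $\transformPPEval$ and $\transformErasure$ agree on such types. The theorem then follows from the fundamental lemma: for every $\Gamma \vdash t : \tau$, related environments yield related denotations. We prove this lemma by induction on $t$.

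The deterministic and DCC cases ($\codeRHide$, $x \codeRGets t;~t'$, $\lambda$, application, pairs, $\codeLet$, $\codeIf$) are routine, because $\transformPPEval$ and $\transformErasure$ act identically there up to the relation. Primitives are handled by hypothesis.

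For $\typeProb$, the return case holds because $\semSeed$ is a probability measure. In the bind case, the first component follows from the induction hypothesis. The second component needs $\semSeed \semPBind (s \mapsto \text{seeded run of } t' \text{ at } s_1 \text{ with } x \text{ from } t \text{ at } s_0)$ to equal the erased bind. We obtain this because $\semP{\codeSplit}$ is measure-preserving, so $(s_0, s_1)$ are independent uniforms. Fubini then lets us integrate out $s_0$ first, using the induction hypothesis for $t$, and $s_1$ second, using the induction hypothesis for $t'$ under the extended environment.

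The $\typePProb$ cases run as follows. $\codePPReturn$ is immediate. $\codePPResidual~t$ ignores the seed and returns $\codeFst~\transformPPEval{t}$, which is related by the $\typeProb$ clause. $\codePPPrimal~t$ is exactly the seeded clause of $\typeProb$. $x \codePPGets t;~t'$ again uses split independence together with Fubini, and this step is the heart of the factorization.

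The main obstacle is making the $\typeProb$ and $\typePProb$ clauses strong enough under binding when the bound value $x$ is itself higher-order or residual-dependent. The induction hypothesis for $t'$ must hold pointwise for every $x$ in the support, while the erased and seeded values of $x$ are only equal in distribution, not pointwise. Our first attempt is a proof-relevant formulation in the style of $\relationCouplingProof$: we carry an explicit joint measure over (erased value, seed, evaluated value) that witnesses relatedness, so that the bind can be justified by integrating the pointwise relation against this witness. The side conditions $\judgeTransparent$ on $\codePPPrimal$ and $\judgeHidden$ on $\codePPResidual$ are used precisely here: they ensure that the seeded value of a primal draw never depends on residual randomness, so the seeded function is well defined and measurable in $s$ alone.
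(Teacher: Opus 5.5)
Your proposal is essentially the paper's proof. The paper also uses a proof-relevant relation $\relationPPEvalProof$ defined by induction on types: the witness at $\typeProb~\tau$ is a joint measure on related triples paired with a seed-indexed map into related triples, and the witness at $\typePProb~\tau$ is a seed-indexed family of measures on related triples. It then proves a fundamental lemma by induction on typing derivations, handles both bind cases via measure-preservation of $\semSplit$ and Fubini, and uses the restriction on $\tau$ to collapse the relation to equality at the end.

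One correction: the paper's proof never uses the $\judgeTransparent$ and $\judgeHidden$ side conditions. For each fixed environment, even one containing residual-derived values, the seeded component $\codeSnd~\transformPPEval{t}$ is already a well-defined $\catQBS$-morphism of the seed, so the distributional equality holds without those conditions. What they actually enforce is noninterference (residual choices cannot influence primal ones), which makes the factorization useful for inference but is not needed for this theorem.
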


\section{Gradient Inference via Composition with Inference and AD}
\label{sec:gradient-inference}

We complete the gradient inference workflow by addressing
challenge \labelcref{challenge:compose},
formalizing how the transformations $\transformModCoupling{\cdot}$ and
$\transformPPEval{\cdot}$ from \cref{sec:partial-evaluation}
can be composed with inference and automatic differentiation (AD)
to yield a sound gradient estimator.
\noindentparagraph{\normalfont\bfseries Soundness of Inference.}
Define the map $\semExpect : P(\setReal) \to \setReal$ by
$\semExpect(\mu) \defas \int_{\setReal} x \mu(\diff x)$.
An \textit{inference macro}
$\transformInference{\cdot}$
acts on a term
$\Gamma \vdash t : \typeProb~\typeReal$
of $\langP$
and produces a term
$\Gamma \vdash \transformInference{t} : \typeProb~\typeReal$.
We say that $\transformInference{\cdot}$ is \textit{sound} on $t$ if
it preserves its expectation, i.e.,
for all $\gamma : \semP{\Gamma}$,
\begin{align}
\semExpect \left(\semP{\transformInference{t}}(\gamma)\right)
= \semExpect \left(\semP{t}(\gamma)\right).
\label{eq:inference-soundness}
\end{align}
The partial evaluation transformation $\transformPPEval{\cdot}$
ensures that the inference target
program does not contain any intractable conditioning or ``observe'' operations.
Thus, \cref{eq:inference-soundness} requires an
inference algorithm that soundly estimates the expectation of the return
value of a probabilistic program, rather than a \textit{conditional} expectation
involving an intractable normalization constant.
This flexibility lets us reuse many standard inference algorithms to produce
provably unbiased gradient estimators.
We briefly review three such inference algorithms below, employed in our applications.
\begin{itemize}[wide=0pt, leftmargin=*]
    \item \textit{Variable elimination}~\citep[\S9]{koller2009} is an exact inference algorithm
    based on successively marginalizing latent variables.
    When a variable $x$ is eliminated, the joint distribution over
    its dependencies $\mathbf{y}$ is updated by summing out all
    possible values of $x$:
    $p(\mathbf{y}) = \sum_{x} p(\mathbf{y}, x)$.
    The final result is the marginal distribution over the program's return value,
    from which the expectation is computed.
    \item \textit{Stratified importance resampling}~\citep[\S8]{owen2013}
    partitions the probability space into events $A_0, A_1, \dots, A_n$,
    known as \emph{strata}.
    The expectation $\mathbb{E}[X]$ of the program's return value
    can be unbiasedly estimated by
    separately sampling $X_i \sim X \mid A_i$ from each stratum and computing
    $\sum_{i=0}^n X_i \mathbb{P}(A_i)$.
    The computational cost can be reduced by resampling, wherein a random subset
    of the summands are evaluated and reweighted to preserve unbiasedness.
    \item \textit{Sequential Monte Carlo}~\citep{chopin2020introduction} propagates
    a set of weighted particles
    through the program, performing a combination of transition and resampling steps.
    These moves are designed to preserve \emph{proper weightedness}~\citep[\S2]{liu2004monte}, which
    is a property that enables unbiased estimation of integrals.
    Once a set of properly weighted particles $(x_1, w_1), \dots, (x_k, w_k)$
    for the program's return value has been obtained,
    the expectation can be unbiasedly estimated by
    $\sum_{i=1}^k x_i w_i$.
\end{itemize}
To support the gradient inference applications in \cref{sec:applications}, we use
the \textsf{monad-bayes} library
developed by \citet{scibior2018functional}, which is
backed by a rigorous denotational semantics for
functional and higher-order probabilistic programs~\citep{scibior2018denotational}.
We also include inference annotations in
the output of the $\systemname{}$ transformations that can
be used by \textsf{monad-bayes} inference algorithms.
For example, the primitive $\codeFlipEnum$
indicates to enumerate (i.e., stratify)
the two possible outputs of $\codeFlip$.

\noindentparagraph{\normalfont\bfseries Soundness of AD.}
After probabilistic inference is applied, the final step of gradient inference is to apply AD.
An \textit{AD macro}
$\transformAD{\cdot}$
acts on a term
$\Gamma \vdash t : \typeReal^n \typeFunction \typeProb~\typeReal$
of $\langP$ and produces a term
$\Gamma \vdash \transformAD{t} : \typeReal^n \typeFunction \typeProb~\typeReal^n$.
We say that $\transformAD{\cdot}$ is \textit{sound} on $t$ if for all
$\gamma : \semP{\Gamma}$ and $\theta : \setReal^n$,
\begin{align}
    \semExpect\left(\semP{\transformAD{t}}(\gamma)(\theta)\right)
    = \nabla_\theta \left[\semExpect\left(\semP{t}(\gamma)(\theta)\right)\right].
    \label{eq:ad-soundness}
\end{align}
Gradient inference workflows rely only on \emph{standard} AD
algorithms.
Several works formalize and prove AD transformations correct, including for
higher-order programs~\citep{huot2020correctness,krawiec2022provably}.
We use the Haskell libraries \textsf{ad}~\citep{ad2025} for forward-mode
and \textsf{backprop}~\citep{backprop2025} for reverse-mode.
In our setting, these AD transformations can safely ignore
the random choices made by the probabilistic inference algorithm
(i.e., treat their outputs as constants)
so long as they do
not depend on the finite perturbation $\varepsilon$ being differentiated
with respect to (cf.~\cref{eq:overview-ad-Ediff}).
Prior work has developed soundness guarantees for AD
in such settings~\citep{lew2023adev,lee2023smoothness,lee2020correctness,khajwal2023fast},
e.g., the smoothness analysis system of \citet{lee2023smoothness}.
\noindentparagraph{\normalfont\bfseries Gradient Inference Macro.}
The overall gradient inference workflow is enacted by the macro
$\transformGradInfAD{\cdot}$.
Given an inference macro $\transformInference{\cdot}$,
we first define a helper \textit{difference macro}
$\transformGradInfFD{\cdot}$,
acting on a term $\emptyEnv \vdash t : \typeReal^n \typeFunction \typeProb~\typeReal$
and producing a term $\emptyEnv \vdash \transformGradInfFD{t} : (\typeReal \typeProduct \typeReal)^n \typeFunction \typeProb~\typeReal$,
\begin{align}
    \transformGradInfFD{t} \defas
    {\lambda \theta. \set{s \codePGets \codeSeed;~\transformInference{(x_A, x_B) \codePGets \transformPPEval{\transformModCoupling{t}}~\theta~s;~\codePReturn~(x_B - x_A)}}}.
    \label{eq:grad-inf-fd}
\end{align}
Given also an AD macro $\transformAD{\cdot}$,
we now define the \textit{gradient inference macro}
$\transformGradInfAD{\cdot}$,
acting on a term $\emptyEnv \vdash t : \typeReal^n \typeFunction \typeProb~\typeReal$
and producing a term $\emptyEnv \vdash \transformGradInfAD{t} : \typeReal^n \typeFunction \typeProb~\typeReal^n$,
\begin{align}
    \transformGradInfAD{t} \defas
    \lambda(\theta_1, \dots, \theta_n).
    \big(\transformAD{\lambda(\varepsilon_1, \dots, \varepsilon_n). \transformGradInfFD{t}~\left((\theta_1, \theta_1+\varepsilon_1), \dots, (\theta_n, \theta_n+\varepsilon_n)\right)}
    \,(0, \dots, 0)\big).
    \label{eq:grad-inf-ad}
\end{align}
%
\begin{listing}
    \footnotesize
    \centering
    \begin{sublisting}[t]{0.45\linewidth}
        \caption{Original Program $t$}
        \begin{plainCodeBoxFramed}{\textwidth}
            \LinesNotNumbered
            \begin{algorithm}[H]
                $\lambda(\theta : \typeReal).$
                \Block{
                    $b \codePGets \codeFlipCRN~\theta$\;
                    $\codePReturn~(\codeIfThenElse{b}{0.5}{0.3})$
                }
            \end{algorithm}
        \end{plainCodeBoxFramed}
        \label{fig:inference-and-ad-original}
    \end{sublisting}\hfill
    \begin{sublisting}[t]{0.525\linewidth}
        \caption{Inference Target $t_{\transformPPEval} \defas \transformPPEval{\transformModCoupling{t}}$}
        \begin{plainCodeBoxFramed}{\linewidth}
            \LinesNotNumbered
            \begin{algorithm}[H]
                $\lambda(\theta_A : \typeReal, \theta_B : \typeReal). \lambda(s : \typeSeed).$
                \Block{
                    $\codeLet~b_A = s < \theta_A$\;
                    $\codeLet~r = \begin{aligned}[t]
                        &\codeIf~b_A~
                            \codeThen~
                            \min({\theta_B}/{\theta_A}, 1)
                        \\[-3pt]
                        &\codeElse~
                        \max\left({(\theta_B - \theta_A)}/{(1 - \theta_A)}, 0\right)
                        \end{aligned}$\;
                    $b_B \codePGets \codeFlipEnum~r$\;
                    $\codePReturn~(\codeIfThenElse{b_A}{0.5}{0.3}, \codeIfThenElse{b_B}{0.5}{0.3})$
                }
            \end{algorithm}
        \end{plainCodeBoxFramed}
        \label{fig:inference-and-ad-target}
    \end{sublisting}

    \bigskip

    \begin{sublisting}[t]{0.45\linewidth}
        \caption{Difference Estimator $\transformGradInfFD{t}$}
        \begin{plainCodeBoxFramed}{\linewidth}
            \LinesNotNumbered
            \begin{algorithm}[H]
                $\lambda(\theta_A : \typeReal, \theta_B : \typeReal).$
                \Block{
                    $s \codePGets \codeSeed$\;
                    $\codeLet~b_A = s < \theta_A$\;
                    $\codeLet~r = \begin{aligned}[t]
                        &\codeIf~b_A~
                            \codeThen~
                            \min({\theta_B}/{\theta_A}, 1)
                        \\[-3pt]
                        &\codeElse~
                        \max\left({(\theta_B - \theta_A)}/{(1 - \theta_A)}, 0\right)
                        \end{aligned}$\;
                    $\codeLet~(b_{B, 1}, w_1) = (\codeTrue, r)$\;
                    $\codeLet~(b_{B, 2}, w_2) = (\codeFalse, 1 - r)$\;
                    $\codePReturn~
                    \begin{aligned}[t]
                    &w_1 \times (\codeIfThenElse{b_{B, 1}}{0.5}{0.3})\\[-3pt]
                    &+ w_2 \times (\codeIfThenElse{b_{B, 2}}{0.5}{0.3})\\[-3pt]
                    &- (w_1 + w_2) \times (\codeIfThenElse{b_A}{0.5}{0.3})
                    \end{aligned}$
                }
            \end{algorithm}
        \end{plainCodeBoxFramed}
        \label{fig:inference-and-ad-difference}
    \end{sublisting}\hfill
    \begin{sublisting}[t]{0.525\linewidth}
        \caption{Gradient Estimator $\transformGradInfAD{t}$}
        \begin{plainCodeBoxFramed}{\linewidth}
            \LinesNotNumbered
            \begin{algorithm}[H]
                $\lambda(\theta : \typeReal).$
                \Block{
                    $s \codePGets \codeSeed$\;
                    $\codeLet~b_A = s < \theta$\;
                    $\codeLet~\left(r, \dot{r}\right) =
                    \left(\begin{aligned}
                    &\codeIfThenElse{b_A}{1}{0}, \\[-5pt]
                    &\codeIfThenElse{b_A}{0}{{1}/{(1-\theta)}}
                    \end{aligned}\right)$\;
                    $\codeLet~(b_{B, 1}, (w_1, \dot{w}_1)) = (\codeTrue, (r, \dot{r}))$\;
                    $\codeLet~(b_{B, 2}, (w_2, \dot{w}_2)) = (\codeFalse, (1 - r, -\dot{r}))$\;
                    $\codePReturn~
                    \begin{aligned}[t]
                    &\dot{w}_1 \times (\codeIfThenElse{b_{B, 1}}{0.5}{0.3}) \\[-3pt]
                    &+ \dot{w}_2 \times (\codeIfThenElse{b_{B, 2}}{0.5}{0.3}) \\[-3pt]
                    &- (\dot{w}_1 + \dot{w}_2) \times (\codeIfThenElse{b_A}{0.5}{0.3})
                    \end{aligned}$
                }
            \end{algorithm}
        \end{plainCodeBoxFramed}
        \label{fig:inference-and-ad-grad}
    \end{sublisting}
    \captionsetup{aboveskip=2pt}
    \caption{Example derivation of a gradient estimator via gradient inference,
    for a simple input program $t$ that flips a biased coin and returns a value based on the outcome.}
    \label{fig:inference-and-ad}
    \vspace{-12pt}
\end{listing}

The term $\transformGradInfFD{t}$ in \cref{eq:grad-inf-fd}
first samples a seed $s$
to fix the values of the primal random choices in $t_{\transformPPEval}
= \transformPPEval{\transformModCoupling{t}}$.
Then, inference is applied to the residual choices in $t_{\transformPPEval}$
to produce an estimator of the expected difference $x_B - x_A$ of the coupled pair of
return values $(x_A, x_B)$.
To form the gradient estimator, the term $\transformGradInfAD{t}$
in \cref{eq:grad-inf-ad} constructs an anonymous function that accepts
perturbations $\varepsilon_1, \dots, \varepsilon_n$ to the components of
the parameter $\theta$.
AD is then applied to this function at $\varepsilon_1 = \dots = \varepsilon_n = 0$.
\cref{fig:inference-and-ad} illustrates each step of gradient inference on a toy
program, using forward-mode AD via dual numbers.
The inference step (\cref{fig:inference-and-ad-target} to \cref{fig:inference-and-ad-difference})
enumerates the two possible outcomes of the sample from $\codeFlipEnum$ in the inference target program
$t_{\transformPPEval}$,
while the AD step (\cref{fig:inference-and-ad-difference} to \cref{fig:inference-and-ad-grad})
differentiates the enumerated weights to produce a gradient estimator.
The soundness theorem for the full \systemname{} workflow is as follows.
\begin{restatable}[Soundness of Gradient Inference Macro]{theorem}{thmGradInferenceCorrectness}
    \label{thm:grad-inference-correctness}
    Let $\emptyEnv \vdash t : \typeReal^n \typeFunction \typeProb~\typeReal$
    be a well-typed term of $\langP$, where
    each primitive $p$ appearing in $t$
    satisfies
    $(\semP{p}, \semP{p}, \semP{\transformErasure{\transformModCoupling{p}}})
        \in \relationCoupling(\tau_p)$
    and
    $(\semP{\transformErasure{p}}, \semP{\transformPPEval{p}})
        \in \relationPPEval(\tau_p)$.
    If the inference macro applied in \cref{eq:grad-inf-fd}
    is sound (per \cref{eq:inference-soundness})
    and the AD macro applied in \cref{eq:grad-inf-ad}
    is sound (per \cref{eq:ad-soundness}),
    then for any $\theta : \setReal^n$,
    \begin{equation}
        \semExpect\left(\semP{\transformGradInfAD{t}}(\theta)\right) =
        \nabla_\theta \left[\semExpect(\semP{t}(\theta))\right].
    \end{equation}
\end{restatable}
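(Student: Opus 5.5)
The plan is to chain the earlier soundness results from the inside out, reducing the claim to an identity about the difference macro $\transformGradInfFD{t}$.

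\textbf{Step 1: identify the expectation of the difference macro.} Fix $\theta, \theta' : \setReal^n$ and set $\vartheta \defas ((\theta_1,\theta'_1),\dots,(\theta_n,\theta'_n))$. I will show
\begin{equation*}
\semExpect\left(\semP{\transformGradInfFD{t}}(\vartheta)\right)
= \semExpect(\semP{t}(\theta')) - \semExpect(\semP{t}(\theta)).
\end{equation*}
\begin{enumerate*}[label=(\alph*)]
\item Write $\transformGradInfFD{t}$ as a bind of $\codeSeed$ into a kernel. Then use linearity of $\semExpect$ over $\semPBind$, i.e.\ $\semExpect(\mu \semPBind f) = \int \semExpect(f(s))\,\mu(\diff s)$. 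Applying soundness of $\transformInference{\cdot}$ (\cref{eq:inference-soundness}) pointwise in $s$ removes the inference macro.
\item Push $\codePReturn~(x_B - x_A)$ outward and recollect the bind over $s$. The expression becomes $\semExpect$ of the pushforward of $\semSeed \semPBind \semP{\transformPPEval{\transformModCoupling{t}}}(\vartheta)$ under $(x_A,x_B)\mapsto x_B-x_A$. This requires commuting $\transformPPEval{}$ with application to $\vartheta$, which holds by the homomorphic action on $\lambda$ and application.
\item Invoke \cref{thm:langPP-realization} on the open term $\theta : \transformModCoupling{\typeReal^n} \vdash \transformModCoupling{t}~\theta : \typePProb(\typeReal \times \typeResidual~\typeReal)$. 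Its result type is built from base types and $\typeResidual$, so the theorem applies. The environments are related in $\relationPPEval$ because they are base-type values, where erasure and $\transformPPEval{}$ agree. This rewrites the measure as $\semP{\transformErasure{\transformModCoupling{t}}}(\vartheta)$.
\item Apply \cref{thm:factorized-coupling-correctness} with $\gamma_A = \theta$, $\gamma_B = \theta'$ and $\gamma_{\transformCoupling} = \vartheta$. At base types this triple is in $\relationCoupling$ by \cref{eq:relation-coupled-base}. Unfolding $\relationCoupling$ at the function type and then at $\typeProb~\typeReal$ yields a proof measure $\tilde\mu$ whose marginals are $\semP{t}(\theta)$, $\semP{t}(\theta')$ and the coupled measure $\mu_{\transformCoupling}$. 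Its third component's two coordinates coincide with the first two components, since the base-type relation forces $x_{\transformCoupling} = (x_A,x_B)$ $\tilde\mu$-almost surely. Hence $\mu_{\transformCoupling}$ has marginals $\semP{t}(\theta)$ and $\semP{t}(\theta')$, and linearity of expectation gives the displayed identity.
\end{enumerate*}

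\textbf{Step 2: pass to the gradient.} Set $\theta' = \theta + \varepsilon$. By Step 1 the function $g(\varepsilon) \defas \lambda\varepsilon.\,\transformGradInfFD{t}(\ldots)$ satisfies $\semExpect(\semP{g}(\varepsilon)) = \semExpect(\semP{t}(\theta+\varepsilon)) - \semExpect(\semP{t}(\theta))$. AD soundness (\cref{eq:ad-soundness}) evaluated at $\varepsilon = 0$ gives $\semExpect(\semP{\transformGradInfAD{t}}(\theta)) = \nabla_\varepsilon[\semExpect(\semP{t}(\theta+\varepsilon))]_{\varepsilon=0}$, because the constant term has zero gradient. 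This equals $\nabla_\theta[\semExpect(\semP{t}(\theta))]$ by the chain rule.

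\textbf{Main obstacle.} I expect the real work to be in Step 1(d) and the expectation manipulations. Turning relatedness in $\relationCoupling(\typeProb~\typeReal)$ into an equality of marginals needs the proof-relevant witness to be supported on the base-type relation, which should follow from how $\relationCouplingProof(\typeProb~\tau)$ is built from $\relationCouplingProof(\tau)$. I will also need integrability hypotheses so that $\semExpect$ is linear and commutes with bind. I will state these as implicit finiteness assumptions, in the spirit of the ``standard conditions'' of \cref{eq:overview-ad-Ediff}.
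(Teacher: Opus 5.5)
Your proposal is correct and follows the same chain as the paper's proof. You remove the inference macro by its soundness, reassociate the seed bind, replace $\semSeed \semPBind \semP{\transformPPEval{\transformModCoupling{t}}}$ by $\semP{\transformErasure{\transformModCoupling{t}}}$ via \cref{thm:langPP-realization}, read off the marginals via \cref{thm:factorized-coupling-correctness}, and finish with AD soundness at $\varepsilon = 0$. You also spell out steps the paper leaves implicit: applying inference soundness pointwise in $s$ through $\semExpect(\mu \semPBind f) = \int \semExpect(f(s))\,\mu(\diff s)$, invoking \cref{thm:langPP-realization} on the open term $\transformModCoupling{t}~\theta$ of $\typePProb$ type, and extracting the marginals from the proof measure on $\relationCouplingProof(\typeReal)$.
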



\section{Applications}
\label{sec:applications}

We investigate the extent to which \systemname{} achieves its design goals
\labelcref{aim:modular,aim:novel} from \cref{sec:introduction-our-approach},
by considering the following research questions:
\begin{enumerate}[label=\textbf{(Q\arabic*)},wide=\parindent,leftmargin=*]
\item\label{question:variance}
Can \systemname{} express novel estimators that deliver lower variance than
state-of-the-art baseline estimators in the literature?

\item\label{question:bias}
Do the estimators in \systemname{} empirically exhibit zero bias, as the
theory predicts?

\item\label{question:runtime}
How large is the constant-factor runtime overhead in the prototype
implementation of \systemname{} (Haskell) compared to ADEV~\citep{lew2023adev} (Haskell)
and Storchastic~\citep{krieken2021storchastic} (PyTorch)?
\end{enumerate}
We study these questions in the context of three challenging case studies:
the M/M/c queuing model (\cref{sec:applications-queuing});
an option pricing model (\cref{sec:applications-option-pricing});
and
a gene transcription model (\cref{sec:applications-gene-transcription}).
The results confirm that \systemname{} meets its design goals, where
\systemname{} is able to express state-of-the-art existing gradient estimators
and improve upon them with novel estimators that achieve up to 370x
reductions in time-adjusted variance.
\Cref{appendix:gradient-inference} presents a number of additional examples.

\subsection{Experimental Setup}

We implemented a prototype of \systemname{} in Haskell
(\url{https://github.com/probsys/grad-inf}).
The workflow begins with a user-written probabilistic program $t$ similar to
the one in \cref{fig:overview-input-program}, which uses \systemname{}
probabilistic primitives
to define a probability distribution over a real output.
\systemname{} synthesizes an inference target program
$t_{\transformPPEval}$ from $t$, to which we apply probabilistic inference
algorithms using the \textsf{monad-bayes} library~\citep{scibior2018denotational}.
The final automatic differentiation step uses the \textsf{ad}
library~\citep{ad2025} (for forward-mode AD) or the \textsf{backprop}
library~\citep{backprop2025} (for reverse-mode AD).
To measure the performance of gradient estimators, we empirically measure
both
per-sample variance and the product of variance and wall-clock time.
Because variance decreases linearly with the number of samples,
the latter metric measures the computational work needed to obtain an
estimate with a desired Monte Carlo error.
Error bands with 95\% confidence intervals are provided for variance plots,
which are computed via bootstrapping.
\Cref{appendix:deferred-applications} gives a full list of experiment parameters.

\subsection{\texorpdfstring{\hyperref[question:variance]{\ref*{question:variance}}}{\ref{question:variance}} Developing Novel Estimators with Low Variance}
\label{sec:evaluation-efficiency}

\subsubsection{Sensitivity Analysis of Queuing Model (Overview Example)}
\label{sec:applications-queuing}

\begin{figure}[t]
\captionsetup{aboveskip=6pt}
\footnotesize
\centering
\includegraphics[width=\linewidth]{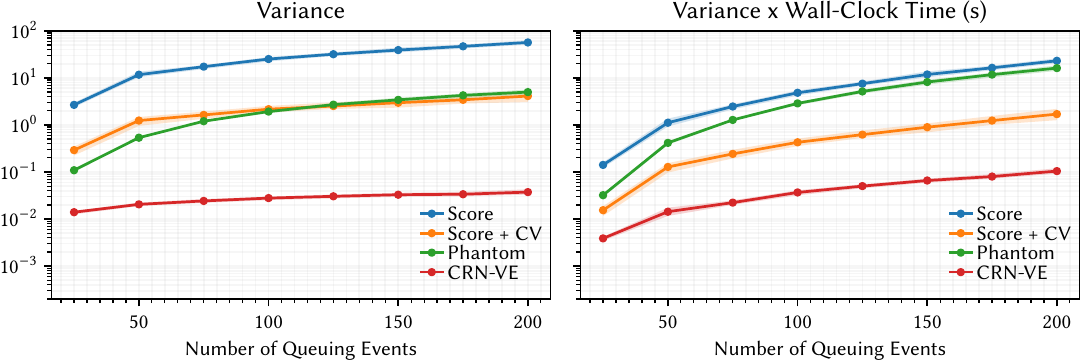}
\caption{Variance scaling plots for gradient estimators expressed by \systemname{} on M/M/c queue model.}
\label{fig:queue-performance}
\vspace{-10pt}
\end{figure}

We first revisit the application of \systemname{} to the M/M/c queue model
from \cref{sec:overview-model}.
By combining a $\codeFlipCRN$ primitive
with different inference algorithms,
\systemname{} can express the randomized phantom estimator
~\citep{fu1997conditional}
and enable a novel low-variance estimator (\GradInfA) based on variable elimination.

\Cref{fig:queue-performance} provides variance and time-adjusted variance asymptotics
against the number of queuing events $n$
for four estimators expressed by \systemname{}:
two score-function based estimators,
with and without a control variate (CV) set equal to the empirical mean of the objective;
the randomized phantom estimator; and the novel estimator \GradInfA.
The results show that \GradInfA{} exhibits reduced variance
relative to existing methods (e.g., 110x reduction at $n = 200$).
Since all four estimators have runtime linear in the number of queuing events $n$,
this variance reduction translates into improvements in time-adjusted variance (e.g.,
16x reduction at $n = 200$).

\subsubsection{Estimating Greeks of Stochastic Option Pricing Models}
\label{sec:applications-option-pricing}

\begin{figure}[t]
\footnotesize
\begin{subfigure}[b]{0.5\textwidth}
    \captionsetup{skip=10pt,belowskip=4pt}
    \begin{plainCodeBox}[left=0mm,right=0mm,colback=gray!5]{\linewidth}
    \begin{tikzpicture}[grow=right,sibling distance=2.6mm,level distance=18mm]
        \node[anchor=north,draw] at (2.12, 0.35) {$S_{i+1}$};

        \node[anchor=north west] (tree) at (0, -0.05) {
        \Tree
        [.\node[draw] (tree-top) {$S_i$};
        \edge node[auto=right,font=\tiny]{$p_d$};
        [.\node (tree-left) {$S_i \cdot d$};]
        \edge node[auto=left,xshift=1.5mm,yshift=-0.2mm,font=\tiny]{$p_s$};
        [.\node{$S_i$};]
        \edge node[auto=left,font=\tiny]{$p_u$};
        [.\node{$S_i \cdot u$};]
        ]
        };

        \node[anchor=north] at (1.5, -2.05) {$(i = 0,\dots,T/\Delta T-1)$};

        \node[anchor=north west] (B) at (3.0, 0.4) {
            $\begin{aligned}
                    u   & = e^{\sigma\sqrt{2 \Delta t}} \quad
                    d       = e^{-\sigma\sqrt{2 \Delta t}}                                                                                    \\
                    p_u & = \left(\frac{e^{r\Delta t/2} - e^{-\sigma\sqrt{\Delta t/2}}}{e^{\sigma\sqrt{\Delta t/2}} - e^{-\sigma\sqrt{\Delta t/2}}}\right)^{\!2} \\
                    p_d & = \left(\frac{e^{\sigma\sqrt{\Delta t/2}} - e^{r\Delta t/2}}{e^{\sigma\sqrt{\Delta t/2}} - e^{-\sigma\sqrt{\Delta t/2}}}\right)^{\!2} \\
                    p_s &= 1 - (p_u + p_d)
                \end{aligned}$
        };
    \end{tikzpicture}
    \end{plainCodeBox}
\captionsetup{skip=2pt}
\caption{Generative model}
\label{fig:option-pricing-tree-model}
\end{subfigure}
\hfill
\begin{subfigure}[b]{0.48\textwidth}
\centering
\includegraphics[width=\textwidth]{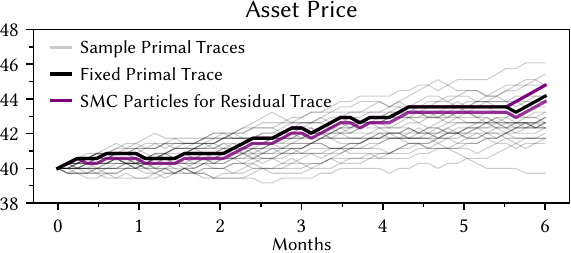}
\captionsetup{skip=10pt,belowskip=4pt}
\caption{Twisted SMC estimator via gradient inference}
\label{fig:option-pricing-smc-visualize}
\end{subfigure}

\bigskip

\begin{subfigure}[b]{\textwidth}
    \centering
    \begin{subfigure}[b]{.495\linewidth}
    \includegraphics[width=\linewidth]{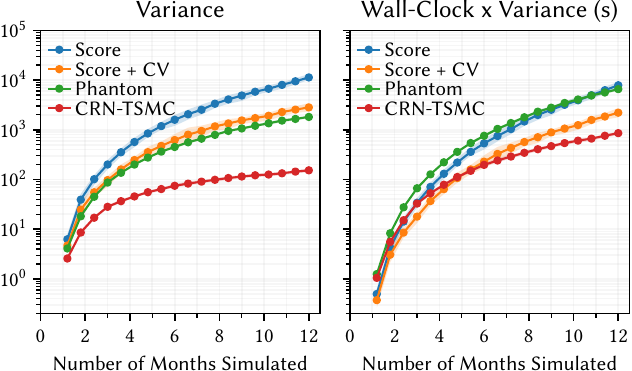}%
    \caption{Variance scaling asymptotics for $\kappa$}
    \label{fig:option-pricing-performance-kappa}
    \end{subfigure}\hfill
    \begin{subfigure}[b]{.495\linewidth}
    \includegraphics[width=\linewidth]{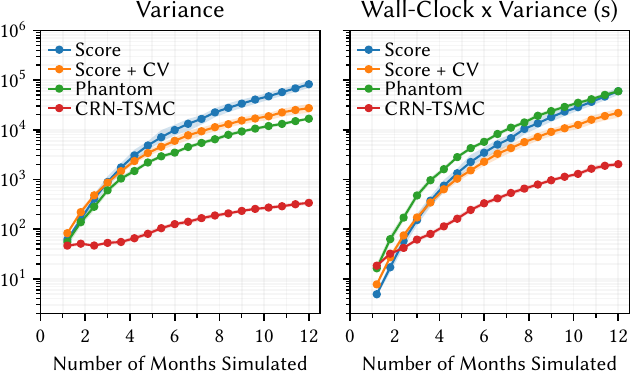}%
    \caption{Variance scaling asymptotics for $\rho$}
    \label{fig:option-pricing-performance-rho}
    \end{subfigure}
\end{subfigure}
\captionsetup{belowskip=-10pt, aboveskip=5pt}
\caption{Applying $\systemname$ to an option pricing model from computational finance.}
\label{fig:option-pricing}

\end{figure}

Sensitivities of the expected payoff of a financial option
are of fundamental interest in mathematical finance,
where they are known as ``Greeks''~\citep[\S7]{glasserman2003monte}.
\Cref{fig:option-pricing-tree-model} illustrates a widely used
option pricing model
known as the trinomial model~\citep{boyle1988lattice}.
The model evolves the price $S_i$
of an asset at discrete
time steps $i \Delta T$ ($i=1,2,\dots$) via a multiplicative random walk,
whose kernel depends on the interest rate $r$ and price
volatility $\sigma$.
We consider a European-style option, in which
an asset can be bought at a fixed time $T$ at price
$K$.
The quantity
$\ell \defas \mathbb{E}[e^{-rT} \max(S_{T/\Delta T} - K, 0)]$
gives the expected payoff of the option.
We apply \systemname{} to the
trinomial model
to estimate the Greeks $\kappa \defas \partial \ell/\partial \sigma$
and $\rho \defas \partial \ell/\partial r$,
with parameters $r = 15\% / \mathrm{year}$,
$\sigma = 5\% / \mathrm{year}$, $S_0 = 40$, and $K = 41$.

In \systemname{},
we express the kernel of the random walk
using a $\codeCategoricalCRN$ primitive, which uses a
CRN-based factorized coupling
(see \cref{appendix:gradient-inference}).
Using the same stratification-based scheme as for the M/M/c queue from
\cref{sec:applications-queuing}, \systemname{} produces an SPA phantom estimator,
which is state-of-the-art for discrete option pricing models~\citep{glasserman2003monte}.
However, as \cref{fig:option-pricing-performance-kappa,fig:option-pricing-performance-rho}
show, the phantom estimator gives limited performance improvements
compared to a simpler score-based method with control variates.
We thus swap in a more sophisticated inference algorithm based on
twisted SMC~\citep[\S10.3.3]{chopin2020introduction}
(details in \cref{appendix:gradient-inference-gradinfb}).
\Cref{fig:option-pricing-smc-visualize} depicts the final SMC particles
from a particular inference run, with opacities corresponding to particle weights.
\Cref{fig:option-pricing-performance-kappa,fig:option-pricing-performance-rho} show
that the resulting novel estimator (\GradInfB{})
achieves significant reductions in time-adjusted variance
as the time period $T$ grows (e.g.,
11x reduction for $\rho$ at $T = 1.0$ year).

\subsubsection{Parameter Inference of Gene Transcription Model}
\label{sec:applications-gene-transcription}

We apply \systemname{} to the task of parameter inference of a chemical reaction network (RN) model
of gene transcription.
RN models have seen broad adoption in systems
biology for simulating the dynamics of biochemical reactions~\citep{chandran2008mathematical}.
Following \citet{anderson2012efficient},
we consider a chemical RN model of the copy numbers
of mRNA $M$ and protein $P$, with four reactions
$
    \emptyset \to^\alpha M,
$
$
    M \to^{\beta} M + P,
$
$
    M \to^{\gamma} \emptyset,
$
and
$
    P \to^{\delta} \emptyset,
$
corresponding to gene transcription, mRNA translation, mRNA degradation,
and protein degradation, respectively.
Each reaction determines a Poisson process in time,
with rate given as the product
of the rate parameter ($\alpha, \beta, \gamma$ or $\delta$) and
the current copy number of the reactant species.
We simulate the RN using the
Gillespie algorithm~\citep{gillespie1977exact}
shown in \cref{fig:gene-transcription-gillespie-algorithm},
which relies on a \emph{discrete} sampling
step from a categorical distribution over the possible reactions.
Our goal is to estimate the
four parameters $(\alpha, \beta, \gamma, \delta)$
that reproduce observed
mean copy numbers $\hat{M}$ and $\hat{P}$
of mRNA and protein
(\cref{fig:gene-transcription-model-traces}, dotted lines),
produced at hidden ground truth parameters.
We minimize a relative mean squared error (MSE) loss
$\ell \defas \mathbb{E}\left[(\tilde{M} - \hat{M})^2 / \hat{M}^2 +
    (\tilde{P} - \hat{P})^2 / \hat{P}^2\right]$
between $k$-sample Monte Carlo estimates $\tilde{M}$ and $\tilde{P}$
of the mean copy numbers and their ground truth values
$\hat{M}$ and $\hat{P}$.

\begin{figure}[t]
\begin{subfigure}[b]{0.27\textwidth}
    \footnotesize
    \setlength{\abovedisplayskip}{0pt}
    \setlength{\belowdisplayskip}{0pt}
    \begin{plainCodeBox}[left=0mm,right=0mm,colback=gray!5]{\linewidth}
    \begin{align*}
        &S \defas \begin{bmatrix}
            1 & 0 & -1 & 0 \\
            0 & 1 & 0 & -1
        \end{bmatrix} \\
        &(M_0, P_0) \defas \mbox{<initial values>}\\
        &t_0 \defas 0 \\
        &\mbox{\underline{for $i=1,2,3,\dots$}} \\
        &\mathbf{r}_i \defas \left(\alpha, \beta M_{i-1}, \gamma M_{i-1}, \delta P_{i-1}\right) \\
        &s \defas \mathbf{r}_{i,0} + \mathbf{r}_{i,1} + \mathbf{r}_{i,2} + \mathbf{r}_{i,3}\\
        &k_i \sim \mathrm{Cat}\left(\mathbf{r}_i\right) \quad \Delta t_i \sim \mathrm{Exp}(s) \\
        &t_i \defas t_{i-1} + \Delta t_i \\
        &M_i \defas M_{i-1} + S_{0, k_i} \\
        &P_i \defas P_{i-1} + S_{1, k_i}
    \end{align*}
    \end{plainCodeBox}
    \caption{Generative model}
    \label{fig:gene-transcription-gillespie-algorithm}
\end{subfigure}\hfill
\begin{subfigure}[b]{0.225\textwidth}
    \includegraphics[width=\textwidth]{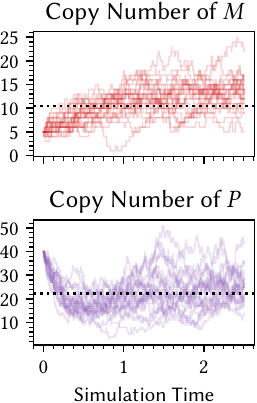}
    \caption{Model traces}
    \label{fig:gene-transcription-model-traces}
\end{subfigure}\hfill
\begin{subfigure}[b]{0.4\textwidth}
    \includegraphics[width=\textwidth]{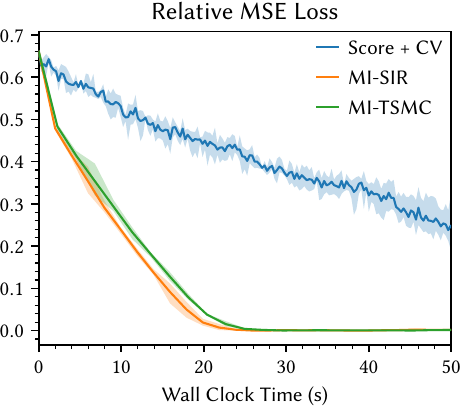}
    \caption{Stochastic gradient descent traces}
    \label{fig:gene-transcription-sgd}
\end{subfigure}
\captionsetup{aboveskip=4pt,belowskip=0pt}
\caption{Applying \systemname{} to infer the hidden parameters of a probabilistic gene transcription model.}
\label{fig:gene-transcription}
\end{figure}
\begin{table}[t]
\captionsetup{aboveskip=0pt,belowskip=-5pt}
\caption{Variance of gradient estimators on gene transcription model at
$\alpha = \beta = \gamma = \delta = 1$ with $k = 100$.
Reductions in variance and time-adjusted variance are given
with respect to the Score + CV baseline.}
\label{tab:gene-transcription-performance}
\footnotesize

\sisetup{
  round-mode=places,
  round-precision=3,
  exponent-product=\times,
}

\begin{tabular*}{\linewidth}{r@{\hspace{0.5em}\extracolsep{\fill}}rrrrrrr}
\toprule
& \multicolumn{3}{c}{Variance} & \multicolumn{2}{c}{Variance Reduction} & \multicolumn{2}{c}{Variance Reduction (Time Adj.)}
\\ \cmidrule{2-4} \cmidrule{5-6} \cmidrule{7-8}
{Rate} & Score + CV & \textsf{MI-SIR} & \textsf{MI-TSMC} & of \textsf{MI-SIR} & of \textsf{MI-TSMC} & of \textsf{MI-SIR} & of \textsf{MI-TSMC}
\\ \midrule
$\alpha$ & $1.7 \cdot 10^{-1}$  & $1.6 \cdot 10^{-4}$ & $5.8 \cdot 10^{-5}$ & \textbf{1100x} & \textbf{2900x} & \textbf{160x} & \textbf{370x} \\
$\beta$  & $2.0 \cdot 10^{-1}$ & $5.2 \cdot 10^{-4}$ & $4.1 \cdot 10^{-4}$ & \textbf{380x} & \textbf{490x} & \textbf{61x} & \textbf{67x} \\
$\gamma$ & $2.0 \cdot 10^{-1}$ & $1.0 \cdot 10^{-3}$ & $4.4 \cdot 10^{-4}$ & \textbf{190x} & \textbf{450x} & \textbf{30x} & \textbf{61x} \\
$\delta$ & $3.9 \cdot 10^{-1}$ & $3.1 \cdot 10^{-3}$ & $2.6 \cdot 10^{-3}$ & \textbf{130x} & \textbf{150x} & \textbf{19x} & \textbf{20x} \\
\bottomrule
\end{tabular*}
\vspace{-12pt}
\end{table}

The score function estimator, called the ``Girsanov transformation method''
in the context of chemical RNs,
is the state-of-the-art unbiased
estimator in this setting but still exhibits high variance~\citep{wang2016efficiency}.
We apply \systemname{} to contribute two novel estimators to the literature.
The first combines the $\codeCategoricalMaxIndep$ primitive with stratified
sampling (\GradInfC).
The second employs a twisted sequential Monte Carlo algorithm (\GradInfD),
whose details are given in \cref{appendix:gradient-inference-gradinfd}.
\Cref{tab:gene-transcription-performance} shows the variance of these
gradient estimation schemes when differentiating the objective $\ell$
at fixed parameters, using synthetic observation data
generated at hidden parameters $\alpha = 18$, $\beta = 8$, $\gamma = 1.5$,
and $\delta = 4$ with simulation time period $T = 2.5$.
\GradInfC{}
improves efficiency by 19x--160x,
while \GradInfD{} improves efficiency by 20x--370x.
These variance reductions lead to better optimization performance using
stochastic gradient descent (SGD) on the loss $\ell$ (\cref{fig:gene-transcription-sgd}).
With a fixed computational budget,
both \GradInfC{} and \GradInfD{} find parameters that
reproduce the observed data (relative MSEs of 0.2\% and 0.1\%, respectively)
more accurately than the parameters found using the score-based estimator
(relative MSE of 26\%).

\subsection{\texorpdfstring{\hyperref[question:bias]{\ref*{question:bias}}}{\ref{question:bias}} Verifying Unbiased Estimation}
\label{sec:evaluation-bias}

We empirically verify that the novel estimators, while lower variance,
remain unbiased
as guaranteed by \cref{thm:grad-inference-correctness}.
\Cref{fig:verify-unbiased} shows box plots of samples of gradient estimates
from the novel estimators and standard baselines.
The case study parameters are scaled down to ensure that
the baselines are sufficiently low variance to serve as
accurate ground truths.
The empirical means of the estimates are labeled in each plot.
A paired equivalence test~\citep{lakens2018equivalence}
verifies that the empirical means of the novel estimators
\GradInfA{}, \GradInfB{}, \GradInfC{}, and \GradInfD{}
are all statistically indistinguishable from those of the baselines
($p < 1\%$, tolerance $\Delta = 5\%$).

\vspace{-5pt}
\begin{figure}[H]
\footnotesize
\centering
\begin{subfigure}[b]{0.32\textwidth}
    \centering
    \includegraphics[width=\textwidth]{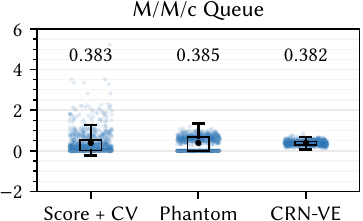}
\end{subfigure}
\hfill
\begin{subfigure}[b]{0.32\textwidth}
    \centering
    \includegraphics[width=\textwidth]{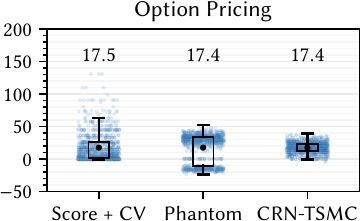}
\end{subfigure}
\hfill
\begin{subfigure}[b]{0.32\textwidth}
    \centering
    \includegraphics[width=\textwidth]{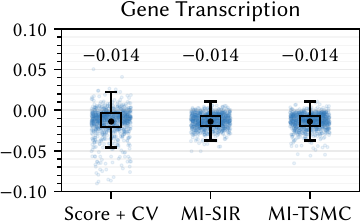}
\end{subfigure}

\captionsetup{belowskip=-8pt, aboveskip=8pt}
\caption{Gradient estimate samples from the estimators in each case study,
with their empirical means reported.}
\label{fig:verify-unbiased}
\end{figure}

\subsection{\texorpdfstring{\hyperref[question:runtime]{\ref*{question:runtime}}}{\ref{question:runtime}} Quantifying Constant-Factor Runtime Overhead}
\label{sec:evaluation-runtime}

We quantify the runtime overhead of our prototype implementation of \systemname{} in Haskell
by analyzing the score estimator, which is
a standard baseline that is expressible in systems such as
ADEV~\citep{lew2023adev} and Storchastic~\citep{krieken2021storchastic}.
\Cref{fig:runtime-overhead} compares the runtime of the score estimator across the systems
in two settings: the M/M/c queue model
(\cref{sec:applications-queuing})
and the option pricing model (\cref{sec:applications-option-pricing}).
Both models feature a large number of loop iterations, leading to deep computation graphs.
The runtime performance is considered
as a function of problem size.
\systemname{}
has improved runtime scaling relative to Storchastic and is
competitive with ADEV (within 3x runtime overhead for the queue model, and
6x for the option pricing model).
%

\vspace{-5pt}
\begin{figure}[H]
\footnotesize
\centering

\includegraphics[height=0.24\textwidth]{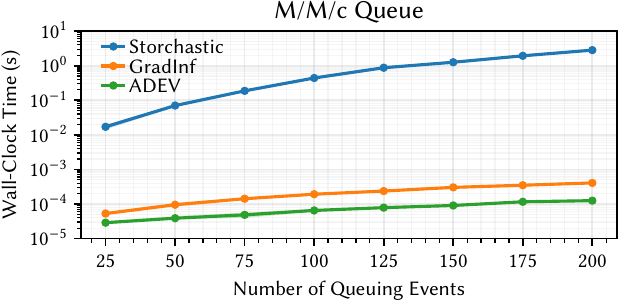}
\includegraphics[height=0.24\textwidth,trim=5mm 0 0 0,clip]{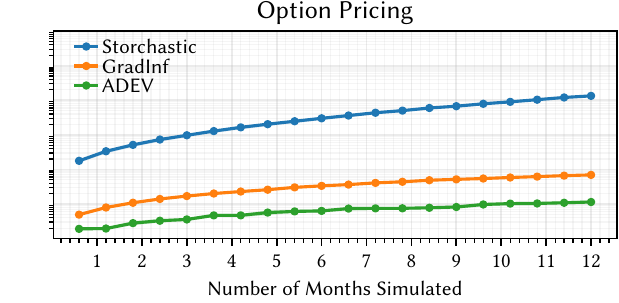}

\captionsetup{belowskip=-8pt, aboveskip=4pt}
\caption{Runtime scaling of the score estimator for different gradient estimation systems.}
\label{fig:runtime-overhead}
\end{figure}

\section{Related Work}
\label{sec:related}

\noindentparagraph{\normalfont\bfseries Differentiable Programming Systems.}
Widely used libraries for probabilistic and differentiable programming
such as
Pyro~\citep{bingham2019}
and
TensorFlow Probability~\citep{dillon2017}
provide users with predefined gradient estimators based on REINFORCE or
the reparameterization trick, but do not support the design or application
of most estimators in \cref{tab:gradient-estimation-schemes}.
Research libraries such as
ADEV~\citep{lew2023adev,becker2024probabilistic},
StochasticAD.jl~\citep{arya2022automatic},
and
Storchastic~\citep{krieken2021storchastic} provide modular
approaches to deriving gradient estimators of probabilistic programs.
These libraries build gradient estimators for whole programs by composing
gradient estimators attached to each primitive.
However, many estimators of interest do not arise as compositions of per-primitive
estimators (e.g., the randomized phantom estimator~\citep{fu1997conditional}).
A key insight in \systemname{} is to attach to each primitive a factorized coupling, rather
than an entire gradient estimator.
This modular design permits the subsequent application of probabilistic inference
and the construction of novel estimators as in \cref{sec:applications}.

\noindentparagraph{\normalfont\bfseries Gradient Estimation Methods.}
Gradient estimation has a vast literature, across which the design,
practice, and even naming of different gradient estimation schemes vary
widely~\citep{mohamed2020monte}.
In this literature, coupling, factorization (or conditioning more generally), and
probabilistic inference have been recognized as important ideas in varying capacities.
For instance, the application of REINFORCE to stochastic computation graphs~\citep{schulman2015gradient}
has incorporated
inference methods based on control variates~\citep{weber2019credit},
importance sampling~\citep{bornschein2015reweighted,le2020revisiting},
and Rao-Blackwellization~\citep{liu2019raoblackwellized}.
However, these applications are restricted to REINFORCE-based
estimators, preventing the use of inference to improve upon the
gradient estimation schemes shown in \cref{tab:gradient-estimation-schemes}.
Smoothed perturbation analysis (SPA) uses coupling and conditioning
for gradient estimation~\citep{fu1997conditional,dai2000perturbation,gong1987}.
However, SPA uses simple Monte Carlo or subsampling-based methods.
It has not been recognized that richer probabilistic inference schemes can be
modularly inserted into the gradient estimation workflow.

\noindentparagraph{\normalfont\bfseries Couplings of Probabilistic Programs.}
Coupling is a central technique in Monte Carlo methods~\citep{lindvall2002lectures},
and has been used by the programming languages community for
relational reasoning about probabilistic programs~\citep{barthe2015relational,barthe2017coupling},
including in the higher-order context~\citep{gregersen2024asynchronous,haselwarter2025approximate,sato2019formal,tassarotti2019separation}.
In \systemname{} we use coupling for a different purpose, namely programmable
variance reduction.
Specifically, rather than proving a relational property between programs,
we aim to form low-variance estimators of the difference in expectation of a single probabilistic
program evaluated at two different inputs.
The insights from couplings used in the relational context could prove
complementary to our goal, where techniques in the literature such as shift
couplings~\citep{aguirre2018relational} may be repurposed to produce
lower variance gradient estimators.

\noindentparagraph{\normalfont\bfseries Differentiating Probabilistic Inference Algorithms.}
AD techniques~\citep{tjoa2025,pfanschilling2022} have been
applied to optimize the likelihoods from probabilistic programming
languages with exact inference~\citep{saad2021,holtzen2020}.
Other works apply gradient estimation to optimize the parameters of proposal
distributions specified by inference algorithms such as sequential Monte
Carlo (SMC)~\citep{le2018autoencoding,maddison2017,naesseth2018variational}.
In contrast, our work is about using probabilistic inference to construct
new gradient estimation schemes.
Our inference algorithms, which operate on coupled and
factorized programs, are designed to admit a simple interchange
of expectation and derivative with respect to the finite perturbation
$\varepsilon$ in the final AD step~\cref{eq:overview-ad-Ediff}.
We do not aim to learn the parameters of the probabilistic inference
schemes themselves.

\noindentparagraph{\normalfont\bfseries Noninterference in Probabilistic Programs.}
Prior work has used information-flow typing
to automatically detect conditional independence in user models and exploit it
for improved posterior inference in Bayesian models, e.g., the imperative
while-loop language SlicStan~\citep{gorinova2019probabilistic,gorinova2022conditional},
and the higher-order language MaPPL~\citep{li2024compiling}.
MaPPL performs automated variable elimination (VE) via a language based on security labels,
in which a candidate variable to be eliminated is labeled ``high'' and the labels of
all other variables are inferred accordingly.
In contrast, $\langPP$ builds on the syntax of the dependency core calculus~\citep{abadi1999core}
to provide constructs for
a probabilistic version
of \emph{partial evaluation}, in which a subset of random variables in a coupled program
are sampled and fixed so that the remainder can be targeted by a generic probabilistic
inference algorithm.
While VE is one such algorithm, its use is as a generic inference algorithm
applied subsequent to partial evaluation,
of which \cref{tab:gradient-estimation-schemes} gives several other examples.


\section{Limitations}
\label{sec:limitations}

\noindentparagraph{\normalfont\bfseries System Expressiveness.}
\systemname{} supports programs featuring higher-order functions, finite iteration,
and both discrete and continuous probability distributions, matching the expressiveness of
prominent techniques in the PL literature such as ADEV~\citep{lew2023adev}.
These programs already cover many challenging problems;
\cref{tab:gradient-estimation-schemes} covers several key classes of gradient
estimators~\citep{mohamed2020monte},
including specialized gradient estimators
used for applications such as discrete variational autoencoders.
However, certain applications still pose a challenge to gradient inference:
\begin{itemize}[wide=0pt, leftmargin=*, itemsep=5pt]
    \item Differentiating programs featuring \emph{unbounded recursion} or \emph{infinite data structures},
    which arise in simulations for particle showers in high-energy physics~\citep{kagan2023branches}.
    A challenge for such programs is
    forming statistically effective couplings that handle
    variable-size loops and data structures.
    Addressing this challenge would require extending our coupling transformation
    $\transformCoupling{\cdot}$ with more sophisticated coupling strategies
    for stochastic control flow.
    \item Handling \emph{parametric discontinuities} involving continuous random variables~\citep{michel2024distributions},
    which arise in queuing theory~\citep{fu1997conditional} and
    differentiable rendering~\citep{li2018differentiable,bangaru2022differentiable}.
    The challenge here is that there is no direct combination of coupling,
    factorization, and probabilistic inference algorithm to which standard AD
    can be correctly applied (cf.~\cref{fig:inference-and-ad}).
    \item Expressing gradient estimation strategies based on
    \emph{sophisticated control variates}, which are common in
    reinforcement learning~\citep{mohamed2020monte}.
    These control variates are often learned during optimization using approaches such as
    actor-critic methods~\citep{sutton1998reinforcement},
    which \systemname{} does not support.
\end{itemize}
\noindentparagraph{\normalfont\bfseries Automating Primitives.}
\systemname{} requires users to annotate
each primitive in their program to specify a suitable factorized coupling.
While this design enables flexibility, automating the selection of factorized couplings,
or even the design of factorized coupling primitives themselves
(e.g., by leveraging work on automatic symbolic disintegration~\citep{narayanan2020}),
could improve automation.
\noindentparagraph{\normalfont\bfseries GPU Acceleration and Batching.}
The prototype implementation of \systemname{} does not support
GPU-accelerated execution and batched evaluation.
It would be useful to implement support for these features,
following the approach of systems such as Storchastic~\citep{krieken2021storchastic} and
ADEV JAX~\citep{becker2024probabilistic}.

\noindentparagraph{\normalfont\bfseries Higher-Order Gradients.}
This work has focused on estimating first-order gradients.
The problem of estimating higher-order gradients
has been studied
for several classes of gradient estimators~\citep{mohamed2020monte}.
Generalizing these ideas to the gradient inference setting
is a promising future direction.


\section{Conclusion}
We present \textit{gradient inference}, a new approach to gradient estimation
that enables the modular construction of sound and efficient gradient
estimators.
Our technique converts a gradient estimation problem into a
probabilistic inference problem, to which
powerful probabilistic inference algorithms can be applied, followed by
standard automatic differentiation.
This reduction rests on synthesizing an intermediate probabilistic program
that serves as an inference target, leveraging provably sound program
transformations for coupling and factorization to reduce variance.
We evaluate our system, $\systemname{}$, on challenging case studies,
showing it can express diverse estimators and enables the construction of
new estimators that outperform state-of-the-art baselines.

Designing effective couplings, factorizations, and probabilistic inference
algorithms is key to our approach to gradient estimation.
In this paper, we demonstrate several examples of how these techniques
can be combined to arrive at effective estimators.
Because gradient inference is compositional, future
improvements to any one of these three components will automatically
deliver benefits for the entire workflow.

\section*{Data-Availability Statement}
A Haskell library with all the algorithms described in this article
is available at \url{https://github.com/probsys/grad-inf}.
A reproducible artifact for the evaluation in \cref{sec:applications}
is available on Zenodo~\citep{arya_2026_19673572}.

\begin{acks}
The authors thank the referees for helpful feedback.
This material is based upon work supported by the
\grantsponsor{NSF}{National Science Foundation}{https://doi.org/10.13039/100000001}
under Grant No.~\grantnum{NSF}{2311983}.
Any opinions, findings, and conclusions or recommendations
in this material are those of the authors and do not necessarily
reflect the views of the NSF.
\end{acks}

\bibliographystyle{ACM-Reference-Format}
\citestyle{acmnumeric}
\bibliography{paper}

\AtEndDocument{\clearpage

\appendix

\section*{Supplementary Material}

The supplementary material contains the following appendices.

\begin{itemize}[topsep=0pt, wide=0pt]
  \item \Cref{appendix:gradinf-usage}:
    Demonstration of the \systemname{} Haskell prototype on an example problem.
  \item \Cref{appendix:gradient-inference}:
    Explanation of how each estimator in \cref{tab:gradient-estimation-schemes} is expressed
    using gradient inference.
  \item \Cref{appendix:deferred-preliminaries,appendix:deferred-coupling,appendix:deferred-partial-evaluation,appendix:deferred-gradient-inference}:
    Full definitions and deferred proofs for the technical core.
  \item \Cref{appendix:deferred-applications}:
    Full list of numerical parameters for the case studies.
\end{itemize}


\section{End-to-End Example Usage of the \systemname{} Haskell Prototype}
\label{appendix:gradinf-usage}

In this section, we provide an end-to-end walkthrough of using the \systemname{} Haskell prototype
to perform gradient estimation via gradient inference.

\subsection{High-Level API}

The main entry point to our \systemname{} Haskell prototype is the function \texttt{gradInfAD}
for computing gradient estimates.
A typical usage of \texttt{gradInfAD} takes the following shape:

\noindent
\begin{minipage}{\linewidth}
\begin{lstlisting}
do
    -- define the model program...
    gradientEstimates <- replicateM numSamples $
                         gradInfAD program inferenceAlg theta
    -- use the gradient estimates...
\end{lstlisting}
\end{minipage}

The three inputs to the \texttt{gradInfAD} function are:
\begin{itemize}[wide,leftmargin=*,topsep=5pt,itemsep=5pt]
    \item \emph{An Input Probabilistic Program.}
    The user-written probabilistic program, with Haskell type of the form
    \texttt{f d -> m d}. Here, \texttt{d} represents the type of real numbers,
    \texttt{f} is a functor that allows for more flexible differentiation parameters,
    and \texttt{m} is the probability monad.
    An important consideration is that the input program must be written polymorphically
    in the monad \texttt{m} and the numerical type \texttt{d}.
    The \systemname{} Haskell prototype leverages this polymorphism
    to implement \systemname{}'s source-to-source transformations \emph{within} the language.
    \item
    \begin{sloppypar}
    \emph{An Inference Algorithm.}
    The inference algorithm to be used for gradient estimation.
    The \systemname{} Haskell prototype uses the monad-bayes library~\citep{scibior2018functional}
    to provide predefined inference algorithms,
    such as \texttt{stratifiedImportanceResamplingInferenceAlg}, \texttt{variableEliminationInferenceAlg}, and
    \texttt{twistedSMCInferenceAlg}.
    \end{sloppypar}
    \item \emph{A Differentiation Parameter.}
    The input point at which to compute the gradient, of type \texttt{f Double}.
    Here, the functor \texttt{f} can be any functor supporting the Haskell \texttt{Traversable} and \texttt{Applicative} classes
    (e.g., \texttt{Identity} for a single parameter, \texttt{ZipList} for lists of parameters, and so forth).
\end{itemize}
The $\texttt{gradInfAD}$ function employs the \textsf{ad}~\citep{ad2025} library
for forward-mode automatic differentiation.
A separate function \texttt{gradInfReverseAD} of similar signature is provided for reverse-mode automatic differentiation
using the \textsf{backprop}~\citep{backprop2025} library.

\subsection{Writing the Input Probabilistic Program}

Let us implement the queuing model from \cref{sec:overview}.
The code below is complete runnable Haskell
for the version of the \systemname{} prototype provided
in the artifact~\citep{arya_2026_19673572},
making use of the following imports.

\noindent
\begin{minipage}{\linewidth}
\begin{lstlisting}
{-# LANGUAGE RebindableSyntax #-}
{-# LANGUAGE ScopedTypeVariables #-}

import Prelude
import Numeric.GradInf.Primitives.DeterministicPrimitives
import Numeric.GradInf.Primitives.FlipCRN
import Numeric.GradInf.Primitives.IterateP
import Numeric.GradInf
import Control.Monad
import Control.Monad.Bayes.Sampler.Lazy (Sampler)
import Data.Functor.Identity
\end{lstlisting}
\end{minipage}

We begin by defining the Markov kernel for the model, which simulates a single queuing event.

\noindent
\begin{minipage}{\linewidth}
\begin{lstlisting}
queueKernel :: forall m d i b mat.
    (DeterministicPrimitives d i b mat, FlipCRN m d b)
    => d -> i -> m i
queueKernel theta x = do
    let p = theta /
            (theta + if (isGreater x 25) :: b then 25 else toDouble x)
    b :: b <- flipCRN p
    let x' = if b then x + 1 else x - 1
    return x'
\end{lstlisting}
\end{minipage}

Here, note that
the program is polymorphic in the type variables
\texttt{m}, \texttt{d}, \texttt{i}, \texttt{b}, and \texttt{mat},
which are generic stand-ins for the probability monad, doubles, integers,
booleans, and matrices, respectively.
The typeclass \texttt{DeterministicPrimitives}
provides a number of standard numerical operations in Haskell,
as well as new primitives in cases where the existing Haskell
options are not sufficiently generic.
The typeclass \texttt{FlipCRN} provides the $\codeFlipCRN$ primitive
from \cref{sec:overview}.
We now complete the model by iterating the kernel to
simulate $n$ queuing events.

\noindent
\begin{minipage}{\linewidth}
\begin{lstlisting}
queueModel :: forall m d i b mat.
    (DeterministicPrimitives d i b mat, FlipCRN m d b, IterateP m i)
    => Int -> d -> m d
queueModel n theta = do
    x <- iterateP (queueKernel theta) 0 !\LstBangBang! n
    return (toDouble x)
\end{lstlisting}
\end{minipage}

We can sample the model at $\theta = 15$ and $n = 50$,
corresponding
to the right panel of \cref{fig:overview-sample-traces}.

\noindent
\begin{minipage}{\linewidth}
\begin{lstlisting}
getSamples :: Sampler [Double]
getSamples = replicateM 5 (queueModel (50 :: Int) (15.0 :: Double))

-- >>> sampler getSamples
-- [12.0, 10.0, 10.0, 18.0, 18.0]
\end{lstlisting}
\end{minipage}

\subsection{Performing Gradient Estimation}

We are now ready to perform gradient estimation with \texttt{gradInfAD}.
First, we call \texttt{gradInfAD} with \texttt{stratifiedImportanceResamplingInferenceAlg},
corresponding to the estimator $Z_{\mathrm{SIR}}$ in \cref{sec:overview}.
We report the sample mean and variance of $100$ Monte Carlo gradient estimates at $n = 50$ and $\theta = 15$.

\noindent
\begin{minipage}{\linewidth}
\begin{lstlisting}
getGradientEstimateSIR :: Int -> Double -> Sampler Double
getGradientEstimateSIR n theta = fmap runIdentity $
      gradInfAD (queueModel n . runIdentity)
      stratifiedImportanceResamplingInferenceAlg (Identity theta)

getMeanAndVarianceSIR :: IO ()
getMeanAndVarianceSIR = do
    samples <- sampler (replicateM 100 (getGradientEstimateSIR 50 15.0))
    let n = length samples
    let mu = sum samples / fromIntegral n
    let var = sum (map (\x -> (x - mu) * (x - mu)) samples)
              / fromIntegral (n - 1)
    print (mu, var)

-- >>> getMeanAndVarianceSIR
-- (0.6459758088306011,0.5114738156744449)
\end{lstlisting}
\end{minipage}

We can swap out the inference algorithm for
\texttt{variableEliminationInferenceAlg},
corresponding to the estimator $Z_{\mathrm{VE}}$ in \cref{sec:overview}.

\noindent
\begin{minipage}{\linewidth}
\begin{lstlisting}
getGradientEstimateVE :: Int -> Double -> Sampler Double
getGradientEstimateVE n theta = fmap runIdentity $
      gradInfAD (queueModel n . runIdentity)
      variableEliminationInferenceAlg (Identity theta)

getMeanAndVarianceVE :: IO ()
getMeanAndVarianceVE = do
    samples <- sampler (replicateM 100 (getGradientEstimateVE 50 15.0))
    let n = length samples
    let mu = sum samples / fromIntegral n
    let var = sum (map (\x -> (x - mu) * (x - mu)) samples)
              / fromIntegral (n - 1)
    print (mu, var)

-- >>> getMeanAndVarianceVE
-- (0.6532304786017948,2.211323360216621e-2)
\end{lstlisting}
\end{minipage}
The estimator remains unbiased, but the variance is significantly reduced.

\clearpage
\section{Expressing Gradient Estimation Strategies in \systemname{}}
\label{appendix:gradient-inference}

In this appendix, we give further details on how the existing
and novel estimators in \cref{tab:gradient-estimation-schemes}
are realized via gradient inference.
For each estimator, we define its factorized coupling
primitive
and explain the inference algorithm that, when differentiated with AD
in the manner formalized in \cref{sec:gradient-inference},
yields the desired estimator.

\subsection{Pathwise Estimator}
\label{appendix:gradient-inference-pathwise}

The pathwise estimator is a widely-used gradient estimator
for continuous probability distributions.
When it is applicable,
it has often been observed to be low-variance~\citep{mohamed2020monte}.
(The estimator is not applicable in any of the challenging cases
considered in \cref{sec:applications}.)
Existing AD tools can readily express the pathwise estimator
by differentiating a probabilistic program at a fixed random seed.

In \systemname{}, the pathwise estimator is expressed using
a common-random-number coupling for each continuous
probabilistic primitive, with the selection of the common random
number $\omega$ marked as primal.
The $\codeNormalCRN$ primitive below gives an example.
\noindent
\begin{plainCodeBox}[left=-2mm,box align=top]{\textwidth}
\small
\begin{algorithm}[H]
    $\transformModCoupling{\codeNormalCRN} \defas
    \lambda \left(((\mu_A, \bar{\mu}_B), (\sigma_A, \bar{\sigma}_B)) : (\typeReal \typeProduct \typeResidual~\typeReal)^2\right).$
    \Block{
        $\omega \codePPGets \codePPPrimal~(\codeNormal~(0,1))$\;
        $\codeLet~x_A = \mu_A + \sigma_A \cdot \omega$\;
        $\codeLet~\bar{x}_B = \{\mu_B \codeRGets \bar{\mu}_B;~\sigma_B \codeRGets \bar{\sigma}_B;~\codeRHide~(\mu_B + \sigma_B \cdot \omega)\}$\;
        $\codePPReturn~(x_A, \bar{x}_B)$ \;
    }
\end{algorithm}
\end{plainCodeBox}
In an arbitrary $\langP$ program composed of such primitives,
the inference target deterministically evaluates the program
on a perturbed input $\theta + \mathbf{\varepsilon}$, using the same
underlying random numbers as for the original input $\theta$.
Applying AD to this target (with the inference transformation set to a trivial no-op)
yields the pathwise estimator for the program.
Informally, we may say that the pathwise estimator is expressed in \systemname{}
as a \emph{differentiated coupling}.

As an example, we consider the continuous Black-Scholes option pricing model,
based on a geometric Brownian motion
$\diff S_t  = S_t \left[r \diff t + \sigma \diff W_t \right]$
that is sampled from using the
Euler-Maruyama time-discretization.
The same parameters and setup are used
as for the discrete trinomial option pricing model
in \cref{sec:applications-option-pricing}.
\cref{fig:option-pricing-pathwise-visualize}
shows a sample from the coupled program produced by \systemname{},
where we make a finite perturbation $\varepsilon = 10^{-2}$ to the parameter $\sigma$.
The coupled residual trace (blue) closely tracks the primal trace (black):
the pathwise estimator can be understood as differentiating the final value of the
residual trace with respect to the perturbation $\varepsilon$, conditional upon
the fixed primal trace.

\begin{figure}[H]
    \footnotesize
    \centering
    \includegraphics[width=\textwidth]{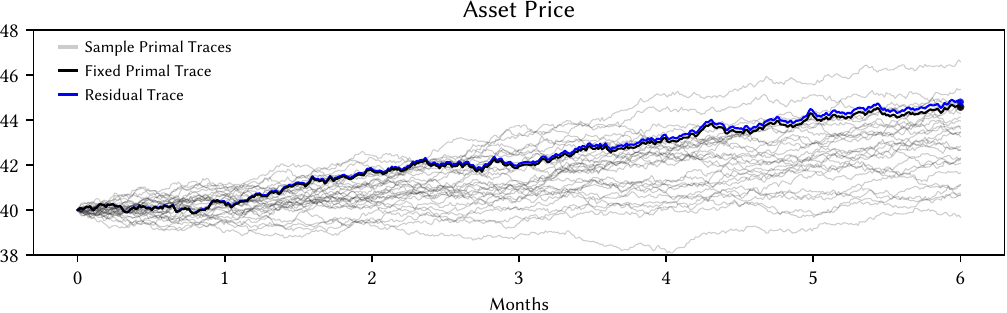}
    \caption{The coupled inference target synthesized by \systemname{} for a Black-Scholes option pricing model.
    Applying AD to this target yields the pathwise estimator for the program.}
    \label{fig:option-pricing-pathwise-visualize}
\end{figure}

\subsection{Phantom Estimators from Smoothed Perturbation Analysis (SPA)}
\label{appendix:gradient-inference-spa}

The phantom estimator from SPA~\citep{fu1997conditional}
is expressed in \systemname{}
using a common-random-number coupling for each discrete
probabilistic primitive in the program.
In these couplings, all sampling decisions that determine
the first components of the returned coupled pairs are
marked as primal.
Other sampling decisions can be marked
as either primal or residual:
this design freedom mirrors the
manual choice of a $\sigma$-algebra to condition upon
in the conventional formation of an SPA estimator~\citep{fu1997conditional}.
The $\codeFlipCRN$ and $\codeCategoricalCRN$ primitives below give
two examples.
\noindent
\begin{plainCodeBox}[left=-2mm,box align=top]{\textwidth}
\small
\begin{algorithm}[H]
    $\transformModCoupling{\codeFlipCRN} \defas
    \lambda (p_A: \typeReal, \bar{p}_B: \typeResidual~\typeReal).$
    \Block{
        $b_A \codePPGets \codePPPrimal~(\codeFlip~p_A)$\;
        $\bar{b}_B \codePPGets \codePPResidual$
        \Block{
            $p_B \codeRGets \bar{p}_B$\;
            $\codeLet~r = \codeIfThenElse{b_A}
                {\min\left(\frac{p_B}{p_A}, 1\right)}
                {\max\left(\frac{p_B - p_A}{1 - p_A}, 0\right)}$\;
            $\codeRHide~\left(
                \codeFlipEnum
                ~r\right)$
        }
        $\codePPReturn~(b_A, \bar{b}_B)$ \;
    }
\end{algorithm}
\end{plainCodeBox}
\label{appendix:gradient-inference-categorical-crn}
\noindent
\begin{plainCodeBox}[left=-2mm,box align=top]{\textwidth}
\small
\begin{algorithm}[H]
    $\transformModCoupling{\codeCategoricalCRN} \defas$ \;
    $\;\lambda (z : (\alpha \typeProduct \typeResidual~\alpha)^n, p : (\typeReal \typeProduct \typeResidual~\typeReal)^n).$
    \Block{
    $\codeLet~z_A = \codeMap~\codeFst~z$\;
    $\codeLet~p_A = \codeMap~\codeFst~p$\;
    $i_A \codePPGets \codePPPrimal~\left(\codeCategorical~((1, \dots, n), p_A)\right)$\;
    $\codeLet~x_A = \codeIndex~z_A~i_A$\;
    $\codeLet~s_A = \codeScanl~(+)~0~p_A$\;
    $\codeLet~s_{A,l} = \codeIndex~s_A~(i_A-1);~
        \codeLet~s_{A,h} = \codeIndex~s_A~i_A$\;
    $\bar{x}_B \codePPGets \codePPResidual$
    \Block{
        $p_{B} \codeRGets \codeSequenceR~(\codeMap~\codeSnd~p)$\;
        $z_B \codeRGets \codeSequenceR~(\codeMap~\codeSnd~z)$\;
        $\codeLet~s_B = \codeScanl~(+)~0~p_B$\;
        $\codeLet~f = \lambda(s_{B,l}, s_{B,h}). \max\big(0, (\min(s_{A,h}, s_{B,h}) - \max(s_{A,l}, s_{B,l}))/(s_{A,h} - s_{A,l})\big)$\;
        $\codeLet~r = \codeMap~(\lambda i. f~(\codeIndex~s_{B}~(i-1), \codeIndex~s_B~i))~(1, \dots, n)$\;
        $\codeRHide~(\codeCategoricalENUM~(z_B, r))$\;
    }
    $\codePPReturn~(x_A, \bar{x}_B)$\;
    }
\end{algorithm}
\end{plainCodeBox}
These primitives make use of syntactic sugar for
standard functional operations on $n$-tuples, which are defined in \cref{notation:syntactic-sugar}.
Observe that, rather than sampling a precise seed $\omega$ as in the implementations
of $\transformCoupling{\codeFlipCRN}$ and $\transformCoupling{\codeCategoricalCRN}$,
the factorized couplings
$\transformModCoupling{\codeFlipCRN}$ and $\transformModCoupling{\codeCategoricalCRN}$
sample only from the first marginal in the primal,
so that the conditional distribution of the second component is formed in the residual.

Equipped with these factorized couplings,
the phantom estimator can be expressed
by applying stratified importance (re)sampling (SIR)
to the inference target synthesized by \systemname{}.
As described in \cref{sec:gradient-inference}, SIR
partitions the probability space into events $A_0, A_1, \dots, A_n$,
and estimates an expectation $\mathbb{E}[X]$
by sampling $X_i \sim X \mid A_i$ and computing
$\sum_{i=0}^n X_i \mathbb{P}(A_i)$.
To produce the phantom estimator,
we let $A_0$ be the event where the residual trace
is equal to the primal trace, while $A_1, A_2, \dots$
correspond to the events where the residual trace first diverges from
the primal trace at a particular step of the program (i.e.,
the coupled pair(s) returned by a primitive have differing first
and second components).

In the SPA literature, the
additional samples $X_1 \sim X \mid A_1, X_2 \sim X \mid A_2, \dots$
are known as \emph{phantoms},
and the event $A_1 \cup A_2 \cup \dots$ is known as
the \emph{critical event}~\citep{fu1997conditional,gong1987}.
Unlike in SPA, these objects are formed independently
of the differentiation logic: the inference algorithm is valid
for any finite choice of perturbation $\varepsilon$.
\systemname{} recovers the phantom estimator from SPA
by applying standard AD to the inference algorithm itself.

A \emph{randomized} phantom estimator has also been proposed in the SPA literature~\citep[Corollary 3.1]{fu1997conditional},
in which only one stratified event is sampled, significantly reducing computational cost.
The randomized phantom estimator can be expressed by performing a Monte Carlo subsampling
step across the $n$ phantom events.
Specifically, we can sample an index $i \sim q(\cdot)$ from a proposal distribution $q$,
and replace the estimate $\sum_{i=0}^n X_i \mathbb{P}(A_i)$ with
the (also unbiased, but usually higher variance) estimate
$X_0 \mathbb{P}(A_0) + \frac{1}{q(i)} X_i \mathbb{P}(A_i)$.
This approach is illustrated in \cref{fig:overview-inference} for the M/M/c queue model,
and also applied in \cref{sec:applications-option-pricing} for the trinomial option pricing model.

\subsection{Measure-Valued Differentiation}
\label{appendix:gradient-inference-mvd}

The approach of measure-valued differentiation (MVD)~\citep{heidergott2008measurevalued}
is traditionally viewed as distinct from SPA~\citep{mohamed2020monte}.
However, in \systemname{}, we can actually express MVD schemes using a similar workflow to SPA,
but with a different choice of coupling, whose soundness can be directly verified.
The $\codeFlipAlwaysFlip$ and $\codePoissonFixedProp$ primitives below are factorized couplings
that reproduce MVDs for Bernoulli and Poisson distributions, respectively.
\noindent
\begin{plainCodeBox}[left=-2mm,box align=top]{\textwidth}
\small
\begin{algorithm}[H]
    $\transformModCoupling{\codeFlipAlwaysFlip} \defas
    \lambda (p_A: \typeReal, \bar{p}_B: \typeResidual~\typeReal).$
    \Block{
        $b_A \codePPGets \codePPPrimal~\left(\codeFlip~p_A\right)$\;
        $\codeLet~b^* = \neg~b_A$\;
        $\bar{b}_B \codePPGets \codePPResidual$
        \Block{
            $p_B \codeRGets \bar{p}_B$\;
            $\codeLet~\Delta p = \codeIfThenElse{b_A}{p_B - p_A}{p_A - p_B}$\;
            $\codeRHide~\left(\codeCategoricalENUM~\left(
                (b_A, b^*), (1 + \Delta p, -\Delta p)\right)
                \right)$\;
        }
        $\codePPReturn~(b_A, \bar{b}_B)$\;
    }
\end{algorithm}
\end{plainCodeBox}
\noindent
\begin{plainCodeBox}[left=-2mm,box align=top]{\textwidth}
\small
\begin{algorithm}[H]
    $\transformModCoupling{\codePoissonFixedProp} \defas
    \lambda (\lambda_A: \typeReal, \bar{\lambda}_B: \typeResidual~\typeReal).$
    \Block{
        $x_A \codePPGets \codePPPrimal~\left(\codePoisson~\lambda_A\right)$\;
        $\codeLet~x^* = x_A + 1$\;
        $\bar{x}_B \codePPGets \codePPResidual$
        \Block{
            $\lambda_B \codeRGets \bar{\lambda}_B$\;
            $\codeLet~w = \bigl(\textstyle\sum_{i=0}^{x_A}\left(\mathrm{Poisson}(\lambda_B)(i) - \mathrm{Poisson}(\lambda_A)(i)\right)\bigr) \div \mathrm{Poisson}(\lambda_A)(x_A)$\;
            $\codeRHide~\left(\codeCategoricalENUM~\left(
                (x_A, x^*), (1+w, -w)
            \right)\right)$\;
        }
        $\codePPReturn~(x_A, \bar{x}_B)$\;
    }
\end{algorithm}
\end{plainCodeBox}
Note that the calls to $\codeCategoricalENUM$ in these factorized coupling primitives
construct measures that may potentially be signed (see \cref{remark:measure-monad-on-qbs}).
However, these primitives nevertheless represent
sound factorized couplings of the original probability distributions, according
to the condition in \cref{thm:factorized-coupling-correctness}.
Let us understand the soundness of the $\codeFlipAlwaysFlip$ primitive,
in which a primal sample $b_A$ is drawn from $\codeFlip~p_A$,
while the residual sample $\bar{b}_B$ is drawn from a measure with two atoms:
$\bar{b}_B = b_A$, with weight $1 + \Delta p$, and $\bar{b}_B = \neg b_A$, with weight $-\Delta p$.
We verify that $\transformErasure{\transformModCoupling{\codeFlipAlwaysFlip}}$ is a valid coupling:
the first marginal is correct
since $b_A$ is drawn from $\codeFlip~p_A$, while the second marginal is correct because
$\bar{b}_B$ is set to true with expected weight
\begin{equation*}
    p_A \times (1 + (p_B - p_A)) + (1 - p_A) \times (p_B - p_A) = p_A + p_A p_B - p_A^2 + p_B - p_A - p_A p_B + p_A^2 = p_B.
\end{equation*}
Phantom estimators based on MVDs have been proposed in the literature~\citep{heidergott2008measurevalued}.
These gradient estimators are expressed in \systemname{}
using the same SIR inference algorithm as for SPA (\cref{appendix:gradient-inference-spa}), followed
by standard AD.

\subsection{Score Method}
\label{appendix:gradient-inference-score}

While \systemname{} estimators are based on couplings,
the vanilla score-function estimator only takes a \emph{single} sample from the model,
which is scored by a weight.
Expressing the score-function estimator in \systemname{} requires a lightweight extension
to the system.
Specifically, we extend the grammar of $\langP$ with a new constant $\colforsyntax{0}$,
i.e. $c \defas \cdots \mid \colforsyntax{0}$,
and add a typing rule $\Gamma \vdash \colforsyntax{0} : \tau$ for each base type $\tau$.
To adjust the denotational model, we augment the interpretation of each
base type $\sigma$ with the constant $\semP{\colforsyntax{0}} = \mathbf{0}$,
e.g., $\semP{\typeReal} \defas \mathbb{R} \cup \{\mathbf{0}\}$,
and adjust the semantics of terms to preserve the constant $\mathbf{0}$.
In this manner, $\colforsyntax{0}$ behaves
similarly to a \emph{bottom element}.
We modify the final subtraction map in \cref{eq:grad-inf-fd} to replace this constant with an arbitrary constant
control variate $\mathit{CV}$,
i.e., we replace ``$x_B - x_A$'' with
$
(\codeIfThenElse{x_B = \colforsyntax{0}}{\mathit{CV}}{x_B}) - (\codeIfThenElse{x_A = \colforsyntax{0}}{\mathit{CV}}{x_A}),
$
where $\mathit{CV}$ is to be chosen as part of the inference algorithm.
That gradient inference with this extension remains sound follows from the fact that, in expectation, the events
$x_A = \mathbf{0}$ and $x_B = \mathbf{0}$ must both have zero weight for
the factorized and coupled program to be sound.
The primitives $\codeNormalWeighted$ and $\codeFlipWeighted$ below give \systemname{} primitives implementing the
score method for Gaussian and Bernoulli distributions.
\noindent
\begin{plainCodeBox}[left=-2mm,box align=top]{\textwidth}
\small
\begin{algorithm}[H]
    $\transformModCoupling\set{\codeNormalWeighted} \defas
    \lambda ((\mu_A, \mu_B) : \typeReal^2, (\sigma_A, \sigma_B): \typeReal^2).$
    \Block{
        $x_A \codePPGets \codePPPrimal~(\codeNormal~(\mu_A, \sigma_A))$\;
        $x_B \codePPGets \codePPResidual$
        \Block{
            $\codeRHide~\bigg(\codeCategoricalENUM~\bigg(((x_A, x_A), (\colforsyntax{0}, x_A)), \left(1, \frac{N(\mu_B, \sigma_B)(x_A)}{N(\mu_A, \sigma_A)(x_A)} - 1\right)\bigg)\bigg)$\;
        }
        $\codePPReturn~(x_A, x_B)$\;
    }
\end{algorithm}
\end{plainCodeBox}
\noindent
\begin{plainCodeBox}[left=-2mm,box align=top]{\textwidth}
\small
\begin{algorithm}[H]
    $\transformModCoupling\set{\codeFlipWeighted} \defas
    \lambda (p_A : \typeReal, p_B: \typeReal).$
    \Block{
        $b_A \codePPGets \codePPPrimal~(\codeFlip~p_A)$\;
        $b_B \codePPGets \codePPResidual$
        \Block{
            $\codeRHide~\bigg(\codeCategoricalENUM~\bigg(((b_A, b_A), (\colforsyntax{0}, b_A)), \left(1, \frac{\codeIfThenElse{b_A}{p_B}{1-p_B}}{\codeIfThenElse{b_A}{p_A}{1-p_A}} - 1\right)\bigg)\bigg)$\;
        }
        $\codePPReturn~(b_A, b_B)$\;
    }
\end{algorithm}
\end{plainCodeBox}
The soundness of these (somewhat unconventional)
factorized coupling primitives can be directly verified.
Considering the $\codeFlipWeighted$ primitive for concreteness, note that the primitive
takes a single sample $b_A$ from the first marginal of the output
coupled pair, which is then used to form the pairs
$(b_A, b_A)$ and $(\colforsyntax{0}, b_A)$.
The first pair $(b_A, b_A)$, with weight 1, ensures that the first marginal is correct.
To ensure the correctness of the second marginal, the second pair $(\colforsyntax{0}, b_A)$ is weighted
(i.e., \emph{scored})
by
$\frac{\mathbb{P}_{b \sim \mathrm{Ber}(p_B)}[b = b_A]}{\mathbb{P}_{b \sim \mathrm{Ber}(p_A)}[b = b_A]} - 1$,
so that the total weight assigned to the sample $b_A$ for the second component is precisely
the likelihood ratio
$\frac{\mathbb{P}_{b \sim \mathrm{Ber}(p_B)}[b = b_A]}{\mathbb{P}_{b \sim \mathrm{Ber}(p_A)}[b = b_A]}$.

We use a stratified IS inference algorithm that maintains
two unique weighted particles, one a pair of the form $(x, x)$
and the other a pair of the form $(\colforsyntax{0}, x)$.
Since all sampled values in the score-based primitives
take this form, this inference algorithm is \emph{deterministic}, and
hence amenable to standard AD.
The result reproduces the single-sample REINFORCE estimator:
for instance, in the case of $\codeFlipWeighted$,
observe that the derivative with respect to $p_B$ of the weight
$\frac{\mathbb{P}_{b \sim \mathrm{Ber}(p_B)}[b = b_A]}{\mathbb{P}_{b \sim \mathrm{Ber}(p_A)}[b = b_A]} - 1$,
taken at $p_B = p_A$,
is precisely the score $\frac{\mathrm{d}}{\mathrm{d} p_A} \log \mathbb{P}_{b \sim \mathrm{Ber}(p_A)}[b = b_A]$.

\subsection{Discrete Variational Autoencoder Estimators}
\label{appendix:gradient-inference-vae}

We consider expressing four prominent gradient estimators from the literature
on discrete variational autoencoders (VAEs)~\citepapp{rolfe2017discrete}:
REINFORCE Leave-One-Out
(RLOO), DisARM~\citep{dong2020disarm}, BitFlip1~\citep{kunes2023gradient}, and
straight-through~\citep{bengio2013estimating}.
These estimators apply to models that
generate ``layers'' of $n$ independent Bernoulli variables.
\Cref{fig:bernoulli-latents-toy-model,fig:bernoulli-latents-linear-vae}
show two such models.
\Cref{fig:bernoulli-latents-toy-model} specifies a challenge
problem designed by \citet{kunes2023gradient} using a logistic model.
\Cref{fig:bernoulli-latents-linear-vae} specifies a full discrete VAE model
with a linear encoder and decoder~\citep{dong2020disarm,kunes2023gradient}.

To express the gradient estimation schemes,
we register new factorizing coupling primitives
for sampling a Bernoulli layer with $n$ variables.
The $\codeFlipLayerSameOrIndep{,n}$ primitive
used for 2-sample RLOO is defined below.
\noindent
\begin{plainCodeBox}[left=-2mm,box align=top]{\textwidth}
\small
\begin{algorithm}[H]
    $\transformModCoupling{\codeFlipLayerSameOrIndep{,n}} \defas
        \lambda (p: (\typeReal \times \typeResidual~\typeReal)^n).$
    \Block{
        $\codeLet~p_A = \codeMap~\codeFst~p$\;
        $b_1 \codePPGets \codePPPrimal~(\codeFlipLayerSameOrIndep~p_A)$\;
        $b_2 \codePPGets \codePPPrimal~(\codeFlipLayerSameOrIndep~p_A)$\;
        $\codePPResidual$
        \Block{
            $p_B \codeRGets \codeSequenceR~(\codeMap~\codeSnd~p)$\;
            $\codeLet~f = \lambda ((b_1, b_2), (p_A, p_B)).$
            \Block{
                $\codeLet~(p, \Delta p) = \codeIfThenElse{b_1}{(p_A, p_B - p_A)}{(1 - p_A, p_A - p_B)}$\;
                $\left(\frac{\Delta p}{2p}, -\frac{\Delta p \cdot [b_1 \neq b_2]}{2p(1-p)}\right)$\;
            }
            $\codeLet~w_1 = \codeSum~(\codeMap~\codeFst~(\codeMap~f~(\codeZip~(\codeZip~b_1~b_2)~(\codeZip~p_A~p_B))))$\;
            $\codeLet~w_2 = \codeSum~(\codeMap~\codeSnd~(\codeMap~f~(\codeZip~(\codeZip~b_1~b_2)~(\codeZip~p_A~p_B))))$\;
            $\codeLet~b_{11} = \codeZip~b_1~b_1$\;
            $\codeLet~b_{12} = \codeZip~b_1~b_2$\;
            $\codeRHide~\left(\codeCategoricalENUM~\left(\left(b_{11}, b_{12}\right),
            (1 + w_1, w_2)\right)\right)$\;
        }
    }
\end{algorithm}
\end{plainCodeBox}
The primitive $\transformModCoupling{\codeFlipLayerSameOrAnti{,n}}$ for DisARM is defined below.
\noindent
\begin{plainCodeBox}[left=-2mm,box align=top]{\textwidth}
\small
\begin{algorithm}[H]
    $\transformModCoupling{\codeFlipLayerSameOrAnti{,n}} \defas
        \lambda (p: (\typeReal \times \typeResidual~\typeReal)^n).$
    \Block{
        $\codeLet~p_A = \codeMap~\codeFst~p$\;
        $\omega \codePPGets \codePPPrimal~(\codeSequenceP~(\codeMap~(\lambda \_. \codeUniform)~(1, \dots, n)))$\;
        $\codeLet~b_1 = \codeMap~(\lambda (p_A, \omega). \omega < p_A)~(\codeZip~p_A~\omega)$\;
        $\codeLet~b_2 = \codeMap~(\lambda (p_A, \omega). \omega > 1 - p_A)~(\codeZip~p_A~\omega)$\;
        $\codePPResidual$
        \Block{
            $p_B \codeRGets \codeSequenceR~(\codeMap~\codeSnd~p)$\;
            $\codeLet~f = \lambda ((b_1, b_2), (p_A, p_B)).$
            \Block{
                $\codeLet~(p, \Delta p) = \codeIfThenElse{b_1}{(p_A, p_B - p_A)}{(1 - p_A, p_A - p_B)}$\;
                $\left(\frac{\Delta p}{2p}, -\frac{\Delta p \cdot [b_1 \neq b_2]}{2\min(p, 1-p)}\right)$\;
            }
            $\codeLet~w_1 = \codeSum~(\codeMap~\codeFst~(\codeMap~f~(\codeZip~(\codeZip~b_1~b_2)~(\codeZip~p_A~p_B))))$\;
            $\codeLet~w_2 = \codeSum~(\codeMap~\codeSnd~(\codeMap~f~(\codeZip~(\codeZip~b_1~b_2)~(\codeZip~p_A~p_B))))$\;
            $\codeLet~b_{11} = \codeZip~b_1~b_1$\;
            $\codeLet~b_{12} = \codeZip~b_1~b_2$\;
            $\codeRHide~\left(\codeCategoricalENUM~\left(\left(b_{11}, b_{12}\right),
            (1 + w_1, w_2)\right)\right)$\;
        }
    }
\end{algorithm}
\end{plainCodeBox}
These factorized couplings (which do not appear directly in the original works)
may be proven sound in \systemname{} by verifying, via a direct computation
of the marginals, that
they denote sound $\langP$ couplings once the factorization annotations are erased.
As for the primitive $\codeFlipLayerAlwaysFlip{,n}$,
we can define it compositionally using the $\codeFlipAlwaysFlip$
primitive that we used to express the MVD estimator in \cref{appendix:gradient-inference-mvd}.
In particular,
we define $\codeFlipLayerAlwaysFlip{,n}$
as \emph{sugar} for
$
\lambda (p : \typeReal^n). \codeSequenceP~(\codeMap~\codeFlipAlwaysFlip~p).
$
(Here, $\codeSequenceP$ sequences a collection of measures in the standard way,
see \cref{notation:syntactic-sugar}.)

For the primitives $\codeFlipLayerSameOrIndep{,n}$ and $\codeFlipLayerSameOrAnti{,n}$,
we employ
a stratified IS probabilistic inference algorithm, followed
by standard reverse-mode AD.
We recover the 2-sample RLOO estimator
from $\codeFlipLayerSameOrIndep{,n}$ and
the DisARM estimator from $\codeFlipLayerSameOrAnti{,n}$.
To express the BitFlip1 estimator of \citet{kunes2023gradient}, we pair the $\codeFlipLayerAlwaysFlip{,n}$
primitive with stratified importance resampling, i.e., the same inference scheme that is used to express
the randomized phantom estimator (\cref{sec:overview,appendix:gradient-inference-spa}).

Finally, we express the straight-through estimator~\citep{bengio2013estimating},
which is biased but has exhibited good empirical performance in many
settings~\citepapp{shekhovtsov2021reintroducing}.
The estimator arises from the same $\codeFlipLayerAlwaysFlip{,n}$ primitive,
but a different
\emph{approximate}
inference algorithm based on a first-order moment closure~\citepapp{schnoerr2017approximation},
wherein each primitive distribution is replaced by its mean
(under composition, we approximate $\expect[XY] \approx \expect[X] \expect[Y]$ for random variables $X$ and $Y$,
and $\expect[f(X)] \approx f(\expect[X])$ for a random variable $X$ and function $f$).
When applying this inference scheme to the residual distribution constructed in
$\codeFlipAlwaysFlip$,
the expectation of the second marginal of the
$\codeCategoricalENUM$ primitive when $b_A = \semFalse$ is
computed as
$
   0 \cdot (1 + p_A - p_B) + 1 \cdot (p_B - p_A)
   = p_B - p_A
$
and when $b_A = \semTrue$ is computed as
$
   1 \cdot (1 + p_B - p_A) + 0 \cdot (p_A - p_B)
    = 1 + p_B - p_A,
$
which in both cases is equal to $(\semIfThenElse{b_A}{1}{0}) + (p_B - p_A)$.
Applying reverse-mode AD to this inference scheme yields the straight-through estimator
of \citet{bengio2013estimating}.
Thus, in gradient inference, the \emph{biased} straight-through estimator can be understood as the composition of
an \emph{exact} factorized coupling with an \emph{approximate} inference algorithm.

\cref{fig:bernoulli-latents-optimization} shows training loss curves
for the two models (where the \texttt{digits} dataset from scikit-learn~\citepapp{pedregosa2011scikit}
is used for the VAE model),
using the implementation of the four estimators in \systemname{}.

\begin{figure}[H]
\captionsetup[subfigure]{skip=-5pt}
\footnotesize

\begin{minipage}[t]{0.44\textwidth}
\begin{subfigure}[t]{\linewidth}
\caption{Logistic model}
\begin{plainCodeBox}[colback=gray!5]{\linewidth}
$\begin{aligned}
    &p_i \defas \sigma(\eta_i) \defas 1/(1 + e^{-\eta_i})\\
    &\highlightBox{highlightColorBernoulliDeep}{b_i  \sim \mathrm{Ber}(p_i)} \\
    &\quad (i = 1, \ldots, n) \\
    &\mathrm{loss} \defas \left(\textstyle\sum_i b_i - 0.499\right)^2 \\
\end{aligned}$
\end{plainCodeBox}
\captionsetup{skip=0pt}
\label{fig:bernoulli-latents-toy-model}
\end{subfigure}

\bigskip

\begin{subfigure}[t]{\linewidth}
\caption{Discrete VAE model}
\begin{plainCodeBox}[colback=gray!5]{\linewidth}
$\begin{aligned}
    &\mbox{\underline{training data} } (x_1, \dots, x_M) : ([0,1]^n)^M \\
    &\mbox{\underline{encoder} } E_\theta: \setReal^{d \times n},
        \mbox{\underline{decoder} } D_\theta : \setReal^{n \times d}\\
    &k \sim \mathrm{Unif}(0,M),\; x \defas x_k,\; \eta \defas E_\theta x \\
    &p_i \defas \sigma(\eta_i) \defas 1/(1 + e^{-\eta_i}),
        \highlightBox{highlightColorBernoulliDeep}{b_i  \sim \mathrm{Ber}(p_i)} \\
    &\quad (i = 1, \ldots, n); \mu \defas D_\theta b \\
    &\mathrm{loss} \begin{aligned}[t]
        &\defas \textstyle\sum_j \log \left(1/\mathrm{Ber}(\sigma(\mu_j))(x)\right) + \\
        &\textstyle\sum_{i=1}^{n}
            \left(\log \left(\mathrm{Ber}(0.5)(b_i)/\mathrm{Ber}(p_i)(b_i)\right)\right)
        \end{aligned}
\end{aligned}$
\end{plainCodeBox}
\captionsetup{skip=0pt}
\label{fig:bernoulli-latents-linear-vae}
\end{subfigure}
\end{minipage}%
\hfill
\begin{minipage}[t]{0.54\textwidth}
\begin{subfigure}[t]{\linewidth}
\caption{Training loss curves}
\includegraphics[width=\linewidth]{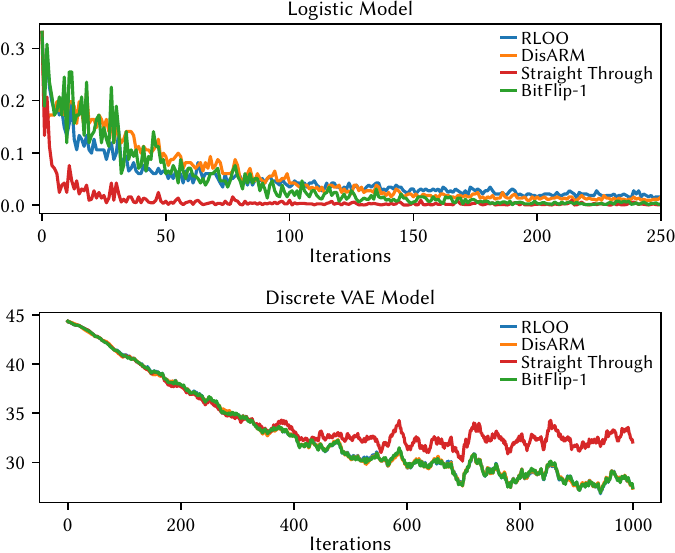}
\label{fig:bernoulli-latents-optimization}
\end{subfigure}
\end{minipage}

\caption{Discrete-latent benchmark models and training loss curves for VAE gradient inference experiments.
The sample from the Bernoulli latent layer in each model is highlighted using \highlightBernoulliName,
which is implemented using \systemname{} probabilistic primitives.}
\label{fig:bernoulli-latents}
\end{figure}

\subsection{Estimator \GradInfA{}}

The \GradInfA{} estimator makes use of the same $\codeFlipCRN$ primitive
that is used for SPA phantom estimators (\cref{sec:overview,appendix:gradient-inference-spa}).
Variable elimination is used to exactly compute the expected value of the
residual distribution.
Applied to the Markovian M/M/c queue model (\cref{fig:overview-inference}),
this inference algorithm enumerates
all possible values of the residual variable $x_B$ at each stage of the program.
The algorithm runs in linear time via dynamic programming
whenever the number of possible values of $x_B$ conditional on a
fixed primal trace is bounded by a constant
(in the M/M/c queue model, this constant is $2$).

\subsection{Estimator \GradInfB{}}
\label{appendix:gradient-inference-gradinfb}

The \GradInfB{} estimator makes use of the $\codeCategoricalCRN$ primitive
(\cref{appendix:gradient-inference-spa}).
For this estimator, we leverage a
twisted Sequential Monte Carlo (twisted SMC) algorithm~\citep[\S10.3.3]{chopin2020introduction}.
This inference algorithm is a variant of SMC that incorporates
\emph{twist} functions $\gamma_k : U_k \to \setReal$,
where $\gamma_k$ is used to scale the weights of particles used in the $k$th resampling step of SMC,
and where $U_k$ is the space of possible particle values at the $k$th step.
For the \systemname{}-synthesized inference target for the
trinomial option pricing model, which is a Markov chain on the state space $\setInteger \times \setInteger$,
we use the twist function $\gamma_k(x_{A,1:k}, x_{B,1:k}) = |x_{B,k} - x_{A,k}|$,
which increases the probability of resampling residual traces that diverge by a large amount from the primal trace.

\subsection{Estimator \GradInfC{}}
\label{appendix:gradient-inference-gradinfc}

The \GradInfC{} estimator makes use of the
$\codeCategoricalMaxIndep$ primitive below, which implements a factorized
maximal coupling with independent residuals~\citep{wang2021maximal}
for categorical distributions.
\noindent
\begin{plainCodeBox}[left=-2mm,box align=top]{\textwidth}
\small
\begin{algorithm}[H]
    $\transformModCoupling{\codeCategoricalMaxIndep} \defas$ \;
    $\;\lambda (z : (\alpha \typeProduct \typeResidual~\alpha)^n, p : (\typeReal \typeProduct \typeResidual~\typeReal)^n).$
    \Block{
    $\codeLet~(z_A,p_A) = (\codeMap~\codeFst~z, \codeMap~\codeFst~p)$\;
    $i_A \codePPGets \codePPPrimal~\left(\codeCategorical~((1, \dots, n), p_A)\right)$\;
    $\codeLet~x_A = \codeIndex~z_A~i_A$\;
    $\bar{x}_B \codePPGets \codePPResidual$
    \Block{
    $p_{B} \codeRGets \codeSequenceR~(\codeMap~\codeSnd~p)$\;
    $z_B \codeRGets \codeSequenceR~(\codeMap~\codeSnd~z)$\;
    $b \codePGets \codeFlip~(\min(1, (\codeIndex~p_B~i_A) / (\codeIndex~p_A~i_A)))$\;
    $\codeIf~b~\codeThen~(\codeRHide~(\codePReturn~(\codeIndex~z_B~i_A)))~\codeElse$
    \Block{
        $\codeLet{}~f = (\lambda(p_A, p_B). \max(0, p_B - p_A))$\;
        $\codeLet~r = \codeMap~f~(\codeZip~p_A~p_B)$\;
        $\codeRHide~(\codeCategoricalENUM~(z_B, \codeNormalize~r))$\;
    }
    }
    $\codePPReturn~(x_A, \bar{x}_B)$ \;
    }
\end{algorithm}
\end{plainCodeBox}

Stratified importance resampling is used for inference, as in the randomized phantom estimator
(\cref{sec:overview,appendix:gradient-inference-spa}).

\subsection{Estimator \GradInfD{}}
\label{appendix:gradient-inference-gradinfd}

As with \GradInfC{}, the \GradInfD{} estimator makes use of a
$\codeCategoricalMaxIndep$ primitive.
For inference, we employ
twisted SMC, as described in \cref{appendix:gradient-inference-gradinfb} for \GradInfB{}.
For the \systemname{}-synthesized inference target for the gene transcription model,
note that each sampled trace
is a Markov chain on the state space $\setInteger^2
\times \setInteger^2 \times \setReal$.
We use a linear twist function of the form
\begin{align}
&\gamma_k((M_{A,1:k}, M_{B,1:k}), (P_{A,1:k}, P_{B,1:k}), (t_{A,1:k}, t_{B,1:k}))
\\
&\quad =
\begin{aligned}[t]
&\frac{1}{\hat{M}}\left(\sum_{i=1}^{k-1} |M_{B,i} - M_{A,i}| \cdot (t_{B,i} - t_{A,i}) + c_1(M_{B,k} - M_{A,k})\right)
\\
&+ \frac{1}{\hat{P}}\left(\sum_{i=1}^{k-1} |P_{B,i} - P_{A,i}| \cdot (t_{B,i} - t_{A,i}) + c_1(P_{B,k} - P_{A,k})\right)
\\
&+ c_2(t_{B,k} - t_{A,k}).
\end{aligned}
\end{align}
The idea is that $\gamma_k((M_{A,1:k}, M_{B,1:k}), (P_{A,1:k}, P_{B,1:k}), (t_{A,1:k}, t_{B,1:k}))$
serves as a first-order approximation of the expected final perturbation to the mean mRNA and protein counts,
conditional upon the partial trace sampled thus far.
Traces with larger expected perturbations are thus more likely to be resampled, which reduces
the variance of the final estimate.
Note that although the twist function is based on a heuristic,
the twist function is only used to improve SMC \emph{performance}.
The overall inference algorithm and thus the gradient estimator remain unbiased.

\clearpage
\section{Deferred Material for \texorpdfstring{\hyperref[sec:preliminaries]{\S\ref*{sec:preliminaries}}}{\S\ref{sec:preliminaries}}: Full Language Definition, Syntactic Sugar, and \texorpdfstring{$\catQBS$}{QBS} Details}
\label{appendix:deferred-preliminaries}

In this appendix, we provide formal definitions and additional category-theoretic detail
for the objects introduced in \cref{sec:preliminaries}.

\begin{definition}[Full Definition of $\langP$]
    We define $\langP$ as the language generated by the grammar rules in
    \cref{def:core-language}.
    The full typing rules for $\langP$ are given as follows:
    \begin{gather}
        \Gamma \vdash \codeTrue : \typeBool
        \quad
        \Gamma \vdash \codeFalse : \typeBool
        \quad
        \Gamma \vdash n : \typeInteger
        \quad
        \Gamma \vdash r : \typeReal
        \quad
        \Gamma \vdash \codeUnit : \typeUnit
        \quad
        \Gamma \vdash p : \tau_p
        \\
        \begin{array}{c}
            x:\tau \in \Gamma \\
            \hline
            \Gamma \vdash x: \type
        \end{array}
        \quad
        \begin{array}{c}
            \extend{\Gamma}{x}{\type} \vdash \ter : \type'
            \\
            \hline
            \Gamma \vdash \lambda (x : \type). \ter : \type \typeFunction \type'
        \end{array}
        \quad
        \begin{array}{c}
            \Gamma \vdash t_1: \type_1
            \quad \Gamma \vdash t_2 : \type_2
            \\
            \hline
            \Gamma \vdash (t_1, t_2) : \type_1 \typeProduct \type_2
        \end{array}
        \\
        \begin{array}{c}
            \Gamma \vdash t': \type \typeFunction \type'
            \quad \Gamma \vdash t: \type
            \\
            \hline
            \Gamma \vdash t'~t : \type'
        \end{array}
        \quad
        \begin{array}{c}
            \Gamma \vdash t: \type
            \quad \extend{\Gamma}{x}{\type} \vdash t' : \type'
            \\
            \hline
            \Gamma \vdash \codeLet~x = t;~t' : \type'
        \end{array}
        \\
        \begin{array}{c}
            \Gamma \vdash t: \type_1 \typeProduct \type_2
            \\
            \hline
            \Gamma \vdash \codeFst~t: \type_1
        \end{array}
        \quad
        \begin{array}{c}
            \Gamma \vdash t: \type_1 \typeProduct \type_2
            \\
            \hline
            \Gamma \vdash \codeSnd~t: \type_2
        \end{array}
        \quad
        \begin{array}{c}
            \Gamma \vdash t: \typeBool
            \quad \Gamma \vdash t_1 : \type
            \quad \Gamma \vdash t_2: \type
            \\
            \hline
            \Gamma \vdash \codeIfThenElse{t}{t_1}{t_2} : \type
        \end{array}
        \\
        \begin{array}{c}
            \Gamma \vdash t: \type
            \\
            \hline
            \Gamma \vdash \codePReturn~\ter : \typeProb~\type
        \end{array}
        \quad
        \begin{array}{c}
            \Gamma \vdash t: \typeProb~\type
            \quad \extend{\Gamma}{x}{\type} \vdash t': \typeProb~\type'
            \\
            \hline
            \Gamma \vdash x \codePGets t;~t' : \typeProb~\type'
        \end{array}.
    \end{gather}
\end{definition}
\begin{remark}[Review of $\catQBS$]
    A quasi-Borel space is a pair $(U, M_U)$,
    consisting of a carrier set $U$ and a set of functions $M_U
    \subset [\setReal \to U]$
    representing all possible random variables $\alpha \in [\setReal \to U]$
    that can be defined on $U$.
    The set $M_U$ must satisfy the following three conditions:
    \begin{itemize}
        \item $\alpha \in M_U$ for all constant functions $\alpha : \setReal \to U$;
        \item $\alpha \circ f \in M_U$ for all $\alpha \in M_U$ and measurable
              $f : \setReal \to \setReal$; and
        \item $\beta \in M_U$ for all countable partitions $\setReal = S_1 \sqcup S_2 \sqcup \dots$
              into Borel sets and functions $\alpha_1, \alpha_2, \dots \in M_U$,
              where $\beta(u) = \alpha_i(u)$ for $u \in S_i$.
    \end{itemize}
    A morphism $f : (U, M_U) \to (V, M_V)$
    of $\catQBS$ is a set-theoretic function $f \in  [U \to V]$, which in the following
    sense respects the sets of random variables:
    \begin{equation}
        \alpha \in M_U \implies f \circ \alpha \in M_V.
    \end{equation}
    When there is no ambiguity,
    we refer to a quasi-Borel space $(U, M_U)$ by
    just its carrier set $U$.
    Additionally, since the morphisms of $\catQBS$ are
    functions, for $U : \catQBS$
    we identify a \emph{value} $u : U$
    (i.e., a morphism $u : \semUnit \to U$)
    with an element of the set $U$.
    The category $\catQBS$ is Cartesian-closed: for $U, V : \catQBS$,
    we denote their product by $U \times V : \catQBS$, and the exponential
    of $V$ by $U$ by $U \to V : \catQBS$.
    We direct the reader to \citet{heunen2017convenient} for the full treatment of $\catQBS$.
\end{remark}
\begin{remark}[Measure Monad on $\catQBS$]
    \label{remark:measure-monad-on-qbs}
    \citet{scibior2018denotational} define a monad of measures on $\catQBS$, denoted
    by $P$ in the main text.
    Our use of the measure monad, rather than the probability monad,
    allows more flexible coupling primitives that, e.g., make use of $\codeCategorical$ primitives
    with unnormalized weights (see \cref{appendix:gradient-inference}).
    In some cases, these weights may even be signed, which can be justified by augmenting the
    monad of \citet{scibior2018denotational} with a sign bit $b \in \set{-1, 1}$ via the writer monad transformer.
    Here these sign bits are accumulated by multiplication, and the integration of a function $f$ against a signed
    measure $\mu: P(\setReal \times \{-1, 1\})$ is defined as $\int (f(x) \cdot b) \mu(\diff{(x, b)})$.
\end{remark}
\begin{definition}[Quasi-Borel Spaces for Reals, Booleans, Integers, and Unit]
    We define quasi-Borel spaces for the carrier sets
    $\setReal$, $\setBool = \{\semTrue, \semFalse\}$,
    $\setInteger$, and $\setUnit = \{\semUnit\}$.
    For $S \in \set{\setReal, \setBool, \setInteger, \setUnit}$,
    we define the random variables $M_S$
    as the measurable functions from $\setReal$ to $S$,
    where $\setReal$ is endowed with the Borel $\sigma$-algebra
    $\mathcal{B}(\setReal)$,
    and countable sets are endowed with the power set $\sigma$-algebra.
    Precisely,
    we define
    \begin{align}
        M_{\setReal} & \defas \set{ \alpha \in [\setReal \to \setReal] \mid \forall A \in \mathcal{B}(\setReal).~\alpha^{-1}(A) \in \mathcal{B}(\setReal)} \\
        M_{\setBool} & \defas \set{ \alpha \in [\setReal \to \setBool] \mid \forall A \subseteq \setBool.~\alpha^{-1}(A) \in \mathcal{B}(\setReal)} \\
        M_{\setInteger} & \defas \set{ \alpha \in [\setReal \to \setInteger] \mid \forall A \subseteq \setInteger.~\alpha^{-1}(A) \in \mathcal{B}(\setReal)} \\
        M_{\setUnit} & \defas \set{ \_ \mapsto \semUnit }.
    \end{align}
\end{definition}
\begin{definition}[Quasi-Borel Spaces on Subsets]
    \label{def:subspace}
    We allow use of set-builder notation
    to construct quasi-Borel spaces on subsets,
    e.g. $\{u : \setReal \mid u \leq 3\}$
    constructs a quasi-Borel space on
    $(-\infty, 3] \subset \setReal$
    whose random variables are induced
    by those of $\setReal$.
    Formally, for $U : \catQBS$
    and a subset $U' \subset U$,
    we define the $\catQBS$ space
    induced on $U'$ by
    \begin{equation}
        M_{U'} \defas
        \set{
           \alpha \in M_U \mid
           \alpha(\setReal) \subseteq U'
        }.
    \end{equation}
\end{definition}
The following elementary lemma will prove useful to our logical relation proofs.
\begin{lemma}[$\catQBS$-Morphisms into Subsets]
    \label{lemma:morphism-into-subspace}
    Suppose $f \in [U \to V]$ is a $\catQBS$-morphism
    $f : U \to V$, and suppose $V' \subset V$
    with $M_{V'}$ defined as in \cref{def:subspace}.
    Then, if $f(U) \subseteq V'$, it holds that $f$ is also a valid
    $\catQBS$-morphism $f : U \to V'$.
\end{lemma}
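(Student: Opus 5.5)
The plan is to unfold the two $\catQBS$ conditions directly; no result about the language semantics is needed. Here $f : U \to V$ is a morphism, $V' \subset V$ carries the induced structure $M_{V'}$ of \cref{def:subspace}, and $f(U) \subseteq V'$.

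First I will note that $f$ restricts to a well-defined set-theoretic function $f \in [U \to V']$. This is immediate from the hypothesis $f(U) \subseteq V'$.

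The remaining step is the morphism condition: for every $\alpha \in M_U$ we must show $f \circ \alpha \in M_{V'}$. Fix $\alpha \in M_U$. Because $f : U \to V$ is a $\catQBS$-morphism, we have $f \circ \alpha \in M_V$. Its image satisfies
\[
(f\circ\alpha)(\setReal) \subseteq f(U) \subseteq V'.
\]
These two facts are exactly the two clauses defining membership in $M_{V'}$ in \cref{def:subspace}, so $f \circ \alpha \in M_{V'}$.

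If one also wants to confirm that $(V', M_{V'})$ is a genuine quasi-Borel space, so that the word ``morphism'' is meaningful, I would check the three axioms. For constants: a constant map into $V'$ lies in $M_V$ and has image in $V'$. For precomposition: precomposing with a measurable map preserves membership in $M_V$ and can only shrink the image. For gluing: a countable Borel-indexed gluing of maps in $M_{V'}$ lies in $M_V$, and its image is contained in the union of the pieces' images, hence in $V'$.

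There is no real obstacle. The only point requiring care is recognising that the induced structure on $V'$ is defined by the image condition together with membership in $M_V$, so both conditions are inherited verbatim.
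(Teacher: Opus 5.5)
Your proposal is correct and follows essentially the same argument as the paper: you use the morphism property to get $f \circ \alpha \in M_V$, and the image containment $(f\circ\alpha)(\setReal) \subseteq f(U) \subseteq V'$ to conclude $f \circ \alpha \in M_{V'}$. Your additional check that $(V', M_{V'})$ satisfies the quasi-Borel axioms is sound; the paper leaves it implicit in \cref{def:subspace}.
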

\begin{proof}
   Take any $\alpha \in M_U$.
   Since $f: U \to V$ is a $\catQBS$-morphism,
    we have that $f \circ \alpha \in M_V$.
   By assumption, we also have that $f(U) \subseteq V' \implies (f \circ \alpha)(\setReal) \subseteq V'$.
   Thus $f \circ \alpha \in M_{V'}$.
   We conclude that $f : U \to V'$ is a $\catQBS$-morphism.
\end{proof}
To speak precisely about the denotational models and program transformations
defined in the main text,
we use \emph{syntactic categories}.
\begin{definition}[Syntactic Category]
    For a language $\lang_0$, we let $\catSyntax(\lang_0)$
    denote its syntactic category.
    The objects of $\catSyntax(\lang_0)$ are the types $\tau$ of $\lang_0$.
    A morphism of $\catSyntax(\lang_0)$ is an equivalence class of well-typed terms
    $\Gamma \vdash t : \tau$ of $\lang_0$ under $\alpha$-equivalence, i.e., variable renaming.
    The target object of such a morphism is $\tau$.
    The source object of such a morphism, for $\Gamma = x_1 : \tau_1, \dots, x_n : \tau_n$,
    is $\tau_1 \typeProduct \dots \typeProduct \tau_n$, where the types are assumed to be ordered
    using de Bruijn indices~\citepapp{debruijn1972lambda}.
    For example, a well-typed term $x_1 : \setBool, x_2 : \setReal \vdash (\codeIfThenElse{x_1}{x_2}{-x_2}) : \setReal$
    corresponds to a morphism from $\setBool \typeProduct \setReal$ to $\setReal$ in the syntactic category.
    Composition of two morphisms in the syntactic category is given by substituting one well-typed term into
    the context of the other.
\end{definition}
Using the syntactic category,
the denotational model $\sem{\cdot}$ of $\langP$
is a functor $\sem{\cdot}: \catSyntax(\langP) \longrightarrow \catQBS$,
and a program transformation $\transform{\cdot}$
from one language $\lang_0$ to another language $\lang_1$
is a functor $\transform{\cdot}: \catSyntax(\lang_0) \longrightarrow \catSyntax(\lang_1)$.

\begin{remark}[Syntactic Definitions of Models and Transformations]
When we use concrete term syntax to define denotational models and program transformations in the main text,
this notation should be understood as formally defining a functor in the syntactic category.
For example, the syntax $\transform{c} = ((c, c), (c, c))$ defines $\transform{\cdot}$
on each well-typed term $\Gamma \vdash c : \sigma$ (up to $\alpha$-equivalence)
in the source language, where $\transform{\Gamma \vdash c : \sigma}$
is a morphism $\transform{\Gamma} \vdash ((c, c), (c, c)) : \transform{\sigma}$
in the target language (again taken up to $\alpha$-equivalence).
For this definition to ``type check'', it must be that
$\transform{\sigma} = (\sigma \typeProduct \sigma) \typeProduct (\sigma \typeProduct \sigma)$.
In such definitions, the introduction of fresh variables should be assumed to be hygienic, i.e., not in conflict
with variables in the subterms (since morphisms are taken up
to $\alpha$-equivalence, this problem disappears in the syntactic category).
\end{remark}
\begin{definition}[Full Denotational Model of $\langP$]
    \label{def:model-langP}
    The denotational model
    \begin{equation}
        \semP{\cdot} : \catSyntax(\langP) \longrightarrow \catQBS
    \end{equation}
    is given as follows on types:
    \begin{gather}
        \semP{\typeBool} \defas \setBool
        \quad
        \semP{\typeInteger} \defas \setInteger
        \quad
        \semP{\typeReal} \defas \setReal
        \quad
        \semP{\typeUnit} \defas \setUnit
        \\
        \semP{\type_1 \typeProduct \type_2} \defas \semP{\type_1} \times \semP{\type_2}
        \quad
        \semP{\type_1 \typeFunction \type_2} \defas \semP{\type_1} \to \semP{\type_2}
        \quad
        \semP{\typeProb~\type} \defas P\left(\semP{\type}\right),
    \end{gather}
    and as follows on terms:
    \begin{align}
         & \semP{x}(\gamma) \defas \gamma(x)
         &
         & \semP{r}(\gamma) \defas r
        \\
         & \semP{n}(\gamma) \defas n
         &
         & \semP{\codeUnit}(\gamma) \defas \semUnit
        \\
         & \semP{\codeTrue}(\gamma) \defas \semTrue
         &
         & \semP{\codeFalse}(\gamma) \defas \semFalse
        \\
         & \semP{(t_1, t_2)}(\gamma) \defas (\semP{t_1}(\gamma), \semP{t_2}(\gamma))
         &
         & \semP{\lambda(x:\tau).t}(\gamma) \defas
        u \mapsto \semP{t}(\gamma[x \setvar u])
        \\
         & \semP{t'~t}(\gamma) \defas \semP{t'}(\gamma)\left(\semP{t}(\gamma)\right)
         &
         & \semP{\codeLet~x = t;~t'}(\gamma) \defas
        \semP{t'}\left(\gamma[x \setvar \semP{t}(\gamma)]\right)
        \\
         & \semP{\codeFst~t}(\gamma) \defas \semFst(\semP{t}(\gamma))
         &
         & \semP{\codeSnd~t}(\gamma) \defas \semSnd(\semP{t}(\gamma))
        \\
         & \mathrlap{\semP{\codeIfThenElse{t}{t_1}{t_2}}(\gamma) \defas
            \semIfThenElse{\semP{t}(\gamma)}{\semP{t_1}(\gamma)}
            {\semP{t_2}(\gamma)}}
        \\
         & \mathrlap{\semP{\codePReturn~t}(\gamma) \defas
            \semPReturn\left(\semP{t}(\gamma)\right)}
        \\
         & \mathrlap{\semP{x \codePGets t;~t'}(\gamma) \defas
            \semP{t}(\gamma) \semPBind{}
            \left(u \mapsto \semP{t'}(\gamma[x \setvar u])\right)}
        \\
         & \semP{p}(\gamma) \defas \semP{p}.
    \end{align}
\end{definition}

\begin{definition}[Explicit Definition of Homomorphic Action]
    \label{def:homomorphic-action}
    The homomorphic action of a transformation $\transform{\cdot}$
    on each type and term of $\langP$ is given explicitly as follows:
    \begin{align}
         & \transform{\type_1 \typeProduct \type_2} \defas \transform{\type_1} \typeProduct \transform{\type_2}
         &
         & \transform{\type_1 \typeFunction \type_2} \defas \transform{\type_1} \typeFunction \transform{\type_2}
        \\
         & \transform{\typeProb~\type} \defas \typeProb~\transform{\type}
        \\
         & \transform{x} \defas x
         &
         & \transform{c} \defas c
        \\
         & \transform{(t_1, t_2)} \defas \left(\transform{t_1}, \transform{t_2}\right)
         &
         & \transform{\lambda(x:\tau).t} \defas
        \lambda (x: \tau). \transform{t}
        \\
         & \transform{t'~t} \defas \transform{t'}~\transform{t}
         &
         & \transform{\codeLet~x = t;~t'} \defas
        \codeLet~x = \transform{t};~\transform{t'}
        \\
         & \transform{\codeFst~t} \defas \codeFst~\transform{t}
         &
         & \transform{\codeSnd~t} \defas \codeSnd~\transform{t}
        \\
         & \mathrlap{\transform{\codeIfThenElse{t}{t_1}{t_2}} \defas
            \codeIfThenElse{\transform{t}}{\transform{t_1}}{\transform{t_2}}
        }
        \\
         & \mathrlap{\transform{\codePReturn~t} \defas
            \codePReturn~\transform{t}}
        \\
         & \mathrlap{\transform{x \codePGets t;~t'} \defas
            x \codePGets \transform{t};~\transform{t'}.}
    \end{align}
\end{definition}
\begin{definition}[Arithmetic primitives]
    We register standard arithmetic primitives for $\langP$. For example,
    we make the extension
    \begin{equation}
        p \Coloneq\, \cdots \mid +_{\setReal} \mid +_{\setInteger} \mid -_{\setReal} \mid -_{\setInteger}
        \mid \times_{\setReal} \mid \times_{\setInteger} \mid /_{\setReal}
        \mid \mathrm{pow}_{\setReal,\setReal} \mid \mathrm{pow}_{\setInteger,\setInteger}
    \end{equation}
    where these primitives have types
    \begin{align}
         \tau_{+_{\setReal}} & \defas (\typeReal \times \typeReal) \typeFunction \typeReal
        &
         \tau_{+_{\setInteger}} & \defas (\typeInteger \times \typeInteger) \typeFunction \typeInteger
        \\
         \tau_{-_{\setReal}} & \defas (\typeReal \times \typeReal) \typeFunction \typeReal
        &
         \tau_{-_{\setInteger}} & \defas (\typeInteger \times \typeInteger) \typeFunction \typeInteger
        \\
         \tau_{\times_{\setReal}} & \defas (\typeReal \times \typeReal) \typeFunction \typeReal
        &
         \tau_{\times_{\setInteger}} & \defas (\typeInteger \times \typeInteger) \typeFunction \typeInteger
        \\
         \tau_{/_{\setReal}} & \defas (\typeReal \times \typeReal) \typeFunction \typeReal
        & \tau_{\mathrm{pow}_{\setReal,\setReal}} & \defas (\typeReal \times \typeReal) \typeFunction \typeReal
        \\
        \tau_{\mathrm{pow}_{\setInteger,\setInteger}} & \defas (\typeInteger \times \typeInteger) \typeFunction \typeInteger.
    \end{align}
    As explained in \cref{notation:syntactic-sugar}, we permit standard infix use of such operations.
\end{definition}
\begin{notation}[Syntactic Sugar]
    \label{notation:syntactic-sugar}
    The following list describes syntactic sugar used in $\langP$ and $\langPP$ programs.
    \begin{itemize}[wide,leftmargin=*,topsep=5pt,itemsep=5pt]
        \item \textbf{Infix use of mathematical operations.} As explained in
              the main text, we permit infix use of mathematical notation,
              with the understanding that this desugars into a formal grammar in the expected
              way.
              For example, $3.0 + 4.5$ desugars to $+_{\setReal}~(3.0, 4.5)$,
              and $3^4$ desugars to $\mathrm{pow}_{\setInteger,\setInteger}~(3,4)$.
        \item \textbf{Omitting subscripts on families of primitives.} While $\langP$
              lacks polymorphism, we will often define a family of type-indexed primitives,
              such as the primitive $+_\alpha$ for addition, where $\alpha \in \set{\setReal, \setInteger}$.
              When the type is clear from context, we may drop the subscript.
        \item \textbf{Code blocks.} When chaining bind expressions together into
              blocks of code,
              we either use newlines and indentation to
              make the grouping of terms clear, or
              curly braces
              $\set{\dots}$ in the case of inline code.
              Let binds and/or monadic binds
              may be unambiguously sequenced together using a set of
              curly braces: e.g. $\{\codeLet~x = 3;~y \codePGets \codeFlip~0.5;~\codePReturn~(x + y)\}$.
        \item \textbf{Unpacking and $n$-tuples.}
              We allow for $n$-tuples, whose contents may be extracted
              via tuple unpacking, e.g. we may write
              $\lambda (z : \typeReal \typeProduct \typeReal \typeProduct \typeReal). \{\codeLet~(a, b, c) = z;~a + b + c\}$
              or even just $\lambda (a: \typeReal, b: \typeReal, c : \typeReal).~a + b + c$.
              When the type $\tau$ of an $n$-tuple is uniform, we may abbreviate to $\tau^n$,
              so the above can also be written as
              $\lambda (z : \typeReal^3). \{\codeLet~(a, b, c) = z;~a + b + c\}$.
        \item \textbf{Unrolled operations on $n$-tuples.}
              On these $n$-tuples, we allow use of standard Haskell-inspired
              functional constructs, such as $\codeMap_n$,
              $\codeZip_n$, $\codeScanl_n$, and $\codeSum_n$, as well as indexing via $\codeIndex_n$:
              all of these constructs can be defined as \emph{sugar} in $\langP$ by unrolling the operation
              over the input $n$-tuple.
              We also define ${\codeSequenceP}_{\alpha,n}$ and ${\codeSequenceR}_{\alpha,n}$ for lifting a type constructor
              out of an $n$-tuple, where
              $\Gamma \vdash {\codeSequenceP}_{\alpha,n} : (\typeProb~\alpha)^n \typeFunction \typeProb~(\alpha^n)$
              and $\Gamma \vdash {\codeSequenceR}_{\alpha,n} : (\typeResidual~\alpha)^n \typeFunction \typeResidual~(\alpha^n)$.
              These sequencing primitives are implemented as sugar for an unwrapped fold
              over an $n$-tuple, e.g., for ${\codeSequenceP}_{\alpha,n}$ the fold function would be
              $\lambda (\mu, \nu). \set{x \codePGets \mu;~y \codePGets \nu;~\codePReturn~(x,y)}$.
              For iteration, we define ${\codePIterate}_{n,\alpha}$ as sugar for repeatedly applying a measure kernel $n$ times,
              where $\Gamma \vdash {\codePIterate}_{n,\alpha} : (\alpha \typeFunction \typeProb~\alpha) \typeFunction \alpha \typeFunction \typeProb~\alpha$
              and $\semP{{\codePIterate}_{n,\alpha}}(f, x) = (\mu \mapsto \mu \semPBind{} f)^n(\semPReturn(x))$.
              Finally, we include $\codeNormalize_n$ as sugar for the operation that normalizes the values in a real-valued
              $n$-tuple so that they sum to 1.
        \item \textbf{Omitting type annotation on $\lambda$-abstractions}.
              When the type is clear from context, we may omit the type annotation
              on the argument of a $\lambda$-abstraction.
              For example, we may define a term $\emptyEnv \vdash t : \typeReal \typeFunction \typeReal$
              by $t \defas \lambda(x).~x + x$.
    \end{itemize}
\end{notation}

\clearpage
\section{Deferred Material for \texorpdfstring{\hyperref[sec:coupling]{\S\ref*{sec:coupling}}}{\S\ref{sec:coupling}}: Proof of Soundness of \texorpdfstring{$\transformCoupling{\cdot}$}{C\{·\}}}

In this appendix, we provide
full definitions and additional categorical theoretical detail for
the objects introduced in \cref{sec:coupling},
and supply the proof of \cref{thm:coupling-correctness}.

\label{appendix:deferred-coupling}

\begin{definition}[Coupling Transformation]
    \label{def:coupling-transformation-full}
    The coupling program transformation is a functor
    \begin{equation}
        \transformCoupling{\cdot} : \catSyntax(\langP) \longrightarrow \catSyntax(\langP),
    \end{equation}
    whose action on types and terms is given in \cref{def:coupling-transformation}.
\end{definition}

\begin{definition}[Coupling Logical Relation]
    \label{def:logical-relation-coupling}
    The proof-relevant coupling logical relation $\relationCouplingProof$
    associates each type $\tau$ with a quasi-Borel space $\relationCouplingProof(\tau)$
    of $\langP$,
    where
    \begin{equation}
        \relationCouplingProof(\tau) \subset
        \semP{\tau} \times \semP{\tau} \times \semP{\transformCoupling{\tau}} \times \proofCoupling(\tau).
    \end{equation}
    In words, $\relationCouplingProof(\tau)$ relates tuples
    $(x_A, x_B, x_{\transformCoupling}, \tilde{x})$,
    where $x_A : \semP{\tau}$,
    $x_B : \semP{\tau}$,
    $x_{\transformCoupling} : \semP{\transformCoupling{\tau}}$,
    and $\tilde{x} : \proofCoupling(\tau)$.
    Here, the fourth component $\tilde{x}$ is a \emph{proof value} belonging to the space $\proofCoupling(\tau)$,
    where $\proofCoupling$ itself associates types of $\langP$ with quasi-Borel spaces.
    %
\begin{listing}[t]
\small

\setlength{\abovedisplayshortskip}{0pt}
\setlength{\belowdisplayshortskip}{0pt}
\setlength{\abovedisplayskip}{0pt}
\setlength{\belowdisplayskip}{0pt}

\begin{syntaxbox}[title=\textbf{Proof Space $\proofCoupling$ on Types of $\langP$}]{\linewidth}
 \begin{align*}
        \proofCoupling(\sigma)  &\defas \setUnit
        &\qquad
        \proofCoupling(\tau_1 \typeProduct \tau_2) &\defas
            \proofCoupling(\tau_1) \times \proofCoupling(\tau_2)
        \\
        \proofCoupling(\tau_1 \typeFunction \tau_2) &\defas
            \relationCouplingProof(\tau_1) \to \proofCoupling(\tau_2)
        &\qquad
        \proofCoupling(\typeProb~\type) &\defas P\left(\relationCouplingProof(\type)\right)
 \end{align*}
\end{syntaxbox}

\begin{syntaxbox}[title=\textbf{Proof-Relevant Coupling Logical Relation $\relationCouplingProof$ on Types of $\langP$}]{\linewidth}
 \begin{align*}
        \relationCouplingProof(\sigma)                  & \defas
        \{(x_A, x_B, x_{\transformCoupling}, \semUnit)
        \mid x_{\transformCoupling} = (x_A, x_B)\}
        \\
        \relationCouplingProof(\tau_1 \typeProduct \tau_2) &\defas
        \Big\{(x_A, x_B, x_{\transformCoupling}, \tilde{x}) \mathrel{\big|}
        \begin{aligned}[t]
            &(\semFst(x_A), \semFst(x_B), \semFst(x_{\transformCoupling}), \semFst(\tilde{x})) \in \relationCouplingProof(\tau_1)
            \\
            &\hspace{-1em}\land (\semSnd(x_A), \semSnd(x_B), \semSnd(x_{\transformCoupling}), \semSnd(\tilde{x})) \in \relationCouplingProof(\tau_2)\Big\}
        \end{aligned}
        \\
        \relationCouplingProof(\tau_1 \typeFunction \tau_2) &\defas
        \Big\{(f_A, f_B, f_{\transformCoupling}, \tilde{f}) \mathrel{\big|}
        \begin{aligned}[t]
            &\forall (x_A, x_B, x_{\transformCoupling}, \tilde{x}) \in \relationCouplingProof(\tau_1).
            \\
            &\left(f_A(x_A), f_B(x_B), f_{\transformCoupling}(x_{\transformCoupling}), \tilde{f}(x_A, x_B, x_{\transformCoupling}, \tilde{x})\right)
            \in \relationCouplingProof(\tau_2)\Big\}
        \end{aligned}
        \\
        \relationCouplingProof(\typeProb~\tau) &\defas
        \Big\{\left(\mu_A, \mu_B, \mu_{\transformCoupling}, \tilde{\mu}\right) \mathrel{\big|}
        \begin{aligned}[t]
        &\smallint \tilde{\mu}(\diff x) = 1 \\
        &\hspace{-2em}\land
        \left(P(\semProj_1)(\tilde{\mu}), P(\semProj_2)(\tilde{\mu}), P(\semProj_3)(\tilde{\mu})\right)
        = (\mu_A, \mu_B, \mu_{\transformCoupling})\Big\}
        \end{aligned}
 \end{align*}
\end{syntaxbox}

\captionsetup{aboveskip=5pt,belowskip=-10pt}
\caption{The proof-relevant coupling logical relation on $\langP$ types.
}
\label{def:coupling-logical-relation}
\end{listing}

    \cref{def:coupling-logical-relation} gives the definition of the proof-relevant coupling logical relation
    on the types of $\langP$.
    We define the proof-irrelevant relation by
    \begin{equation}
        \relationCoupling(\tau) \defas
        \set{(x_A, x_B, x_{\transformCoupling}) \mid \exists \tilde{x}.
            (x_A, x_B, x_{\transformCoupling}, \tilde{x}) \in \relationCouplingProof(\tau)}.
    \end{equation}
\end{definition}

\thmCouplingCorrectness*
\begin{proof}
    We extend $\proofCoupling$ and $\relationCouplingProof$ into functors
    \begin{align}
        \proofCoupling         & : \catSyntax(\langP) \longrightarrow \catQBS
        \\
        \relationCouplingProof & : \catSyntax(\langP) \longrightarrow \catQBS
    \end{align}
    where we define $\relationCouplingProof$ on a term $\Gamma \vdash t : \tau$ by
    \begin{equation}
        \relationCouplingProof(t)(\gamma_A, \gamma_B, \gamma_{\transformCoupling}, \tilde{\gamma})
        \defas
        \left(\semP{t}(\gamma_A),
        \semP{t}(\gamma_B),
        \semP{\transformCoupling{t}}(\gamma_{\transformCoupling}),
        \proofCoupling(t)(\gamma_A, \gamma_B, \gamma_{\transformCoupling}, \tilde{\gamma})\right).
        \label{eq:proof-coupling-relation-defn}
    \end{equation}
    It remains to define the morphism $\proofCoupling(t) : \proofCoupling(\Gamma) \to \proofCoupling(\tau)$
    that constructs the proof value
    for each term $\Gamma \vdash t : \tau$,
    in such a way
    that $\relationCouplingProof(t)$ preserves the logical relation, i.e. $\relationCouplingProof(t)$
    is a morphism from $\relationCouplingProof(\Gamma)$ to $\relationCouplingProof(\tau)$.
    As is typical for a logical relation proof, we proceed by induction
    on the typing derivation of $\Gamma \vdash t : \tau$:
    in each case of the induction, we will first define
    $\proofCoupling(t)$,
    and then verify that $\relationCouplingProof(t)$
    preserves the logical relation.

    In the following, let $(\gamma_A, \gamma_B, \gamma_{\transformCoupling}, \tilde{\gamma})$
    denote an arbitrary tuple in $\relationCouplingProof(\Gamma)$.
    Note that $\relationCouplingProof(t)$ is a morphism from $\relationCouplingProof(\Gamma)$ to
    $\semP{\tau} \times \semP{\tau} \times \semP{\transformCoupling{\tau}} \times \proofCoupling(\tau)$
    by its definition in \cref{eq:proof-coupling-relation-defn}: by \cref{lemma:morphism-into-subspace},
    showing that $\relationCouplingProof(t)$ is a morphism from $\relationCouplingProof(\Gamma)$ to $\relationCouplingProof(\tau)$
    is equivalent to showing that $\relationCouplingProof(t)(\gamma_A, \gamma_B, \gamma_{\transformCoupling}, \tilde{\gamma}) \in \relationCouplingProof(\tau)$
    for each $(\gamma_A, \gamma_B, \gamma_{\transformCoupling}, \tilde{\gamma}) \in \relationCouplingProof(\Gamma)$.

    \begin{itemize}[leftmargin=*,topsep=10pt,itemsep=10pt]
        \item \textbf{Cases } \(
              \boxed{
                  \Gamma \vdash \codeTrue : \typeBool
              }
              \hspace{0.5em}
              \boxed{
                  \Gamma \vdash \codeFalse : \typeBool
              }
              \hspace{0.5em}
              \boxed{
                  \Gamma \vdash n : \typeInteger
              }
              \hspace{0.5em}
              \boxed{
                  \Gamma \vdash r : \typeReal
              }
              \hspace{0.5em}
              \boxed{
                  \Gamma \vdash \codeUnit : \typeUnit
              }
              \)
              \textbf{: }
              \par\vskip0.5em\relax

              For a constant $c$, we define
              \begin{equation}
                  \proofCoupling(c)(\gamma_A, \gamma_B, \gamma_{\transformCoupling}, \tilde{\gamma}) \defas \semUnit,
              \end{equation}
              i.e., $\proofCoupling(c)$ is the constant function to $\semUnit$.
              Since $\transformCoupling{c} = (c, c)$ for constants $c$
              and constants are of base type $\sigma$,
              it follows that
              \begin{equation}
                  \relationCouplingProof(c)(\gamma_A, \gamma_B, \gamma_{\transformCoupling}, \tilde{\gamma}) =
                  \left(c, c, (c, c), \semUnit\right)
                  \in \relationCouplingProof(\sigma).
              \end{equation}
        \item \textbf{Case } \(
              \boxed{
                  \Gamma \vdash p : \tau_p
              }
              \)
              \textbf{: }
              \par\vskip0.5em\relax

              By assumption, we have
              \begin{equation}
                  \left(\semP{p}(\gamma_A), \semP{p}(\gamma_B),
                  \semP{\transformCoupling{p}}(\gamma_{\transformCoupling})\right) =
                  \left(\semP{p}, \semP{p}, \semP{\transformCoupling{p}}\right)
                  \in \relationCoupling(\tau_p).
              \end{equation}
              By the definition of $\relationCoupling$, there exists an environment-independent proof value $\tilde{x}$ such that
              \begin{equation}
                  \left(\semP{p}(\gamma_A), \semP{p}(\gamma_B),
                  \semP{\transformCoupling{p}}(\gamma_{\transformCoupling}), \tilde{x}\right) \in
                  \relationCouplingProof(\tau_p),
              \end{equation}
              and hence we can ensure $\relationCouplingProof(p)$ preserves the logical relation by defining
              \begin{equation}
                  \proofCoupling(p)(\gamma_A, \gamma_B, \gamma_{\transformCoupling}, \tilde{\gamma}) \defas \tilde{x}.
              \end{equation}
              Note that $\proofCoupling(p)$ is not responsible for \emph{choosing} $\tilde{x}$ as a function
              of its input.
              Rather, $\proofCoupling(p)$ is a constant function for a fixed $\tilde{x}$ chosen in the meta-language,
              and thus a $\catQBS$-morphism.
        \item \textbf{Cases } \(
              \boxed{
                  \begin{array}{c}
                      \extend{\Gamma}{x}{\type} \vdash \ter : \type'
                      \\
                      \hline
                      \Gamma \vdash \lambda (x : \type). \ter : \type \typeFunction \type'
                  \end{array}
              }
              \quad
              \boxed{
                  \begin{array}{c}
                      \Gamma \vdash t': \type \typeFunction \type'
                      \quad \Gamma \vdash t: \type
                      \\
                      \hline
                      \Gamma \vdash t'~t : \type'
                  \end{array}
              }
              \)
              \textbf{: }
              \par\vskip0.5em\relax

              For $\lambda$-abstraction,
              we define
              \begin{align}
                  \proofCoupling(\lambda (x: \tau). t)
                  (\gamma_A, \gamma_B, \gamma_{\transformCoupling}, \tilde{\gamma})
                   & \defas
                  (u_A, u_B, u_{\transformCoupling}, \tilde{u}) \mapsto
                  \proofCoupling(t)
                  \begin{aligned}[t]
                      \big(
                       & \gamma_A[x \mapsto u_A], \gamma_B[x \mapsto u_B],
                      \\
                       & \gamma_{\transformCoupling}[x \mapsto u_{\transformCoupling}],
                      \tilde{\gamma}[x \mapsto \tilde{u}]\big).
                  \end{aligned}
              \end{align}
              For application, we define
              \begin{align}
                  \proofCoupling(t'~t)(\gamma_A, \gamma_B, \gamma_{\transformCoupling}, \tilde{\gamma})
                  \defas
                  \proofCoupling(t')(\gamma_A, \gamma_B, \gamma_{\transformCoupling}, \tilde{\gamma})\left(\relationCouplingProof(t)(\gamma_A, \gamma_B, \gamma_{\transformCoupling}, \tilde{\gamma})\right)
              \end{align}
              It follows from the induction hypotheses and the definition of $\relationCouplingProof(\tau \typeFunction \tau')$
              in \cref{def:coupling-logical-relation}
              that $\relationCouplingProof(\lambda (x: \tau). t)(\gamma_A, \gamma_B, \gamma_{\transformCoupling}, \tilde{\gamma})
                  \in \relationCouplingProof(\tau \typeFunction \tau')$
              and $\relationCouplingProof(t'~t)(\gamma_A, \gamma_B, \gamma_{\transformCoupling}, \tilde{\gamma})
                  \in \relationCouplingProof(\tau')$.
        \item \textbf{Cases } \(
              \boxed{
                  \begin{array}{c}
                      \Gamma \vdash t_1 : \type_1
                      \quad
                      \Gamma \vdash t_2 : \type_2
                      \\
                      \hline
                      \Gamma \vdash (t_1, t_2) : \type_1 \typeProduct \type_2
                  \end{array}
              }
              \quad
              \boxed{
                  \begin{array}{c}
                      \Gamma \vdash t: \type_1 \typeProduct \type_2
                      \\
                      \hline
                      \Gamma \vdash \codeFst~t: \type_1
                  \end{array}
              }
              \quad
              \boxed{
                  \begin{array}{c}
                      \Gamma \vdash t: \type_1 \typeProduct \type_2
                      \\
                      \hline
                      \Gamma \vdash \codeSnd~t: \type_2
                  \end{array}
              }
              \)
              \textbf{: }

              \par\vskip0.5em\relax
              For pair introduction, we define
              \begin{align}
                   \proofCoupling((t_1, t_2))(\gamma_A, \gamma_B, \gamma_{\transformCoupling}, \tilde{\gamma})
                   \defas
                  \left(\proofCoupling(t_1)(\gamma_A, \gamma_B, \gamma_{\transformCoupling}, \tilde{\gamma}),
                  \proofCoupling(t_2)(\gamma_A, \gamma_B, \gamma_{\transformCoupling}, \tilde{\gamma})\right).
              \end{align}
              For pair elimination, we define
              \begin{align}
                  \proofCoupling(\codeFst~t)
                   & \defas
                  \semFst \circ \proofCoupling(t),
                  \\
                  \proofCoupling(\codeSnd~t)
                   & \defas
                  \semSnd \circ \proofCoupling(t).
              \end{align}
              \sloppy
              It follows from the induction hypotheses and
              the definition of $\relationCouplingProof(\tau_1 \typeProduct \tau_2)$
              in \cref{def:coupling-logical-relation} that
              $\relationCouplingProof((t_1, t_2))(\gamma_A, \gamma_B, \gamma_{\transformCoupling}, \tilde{\gamma})
                  \in \relationCouplingProof(\tau_1 \typeProduct \tau_2)$,
              and that
              $\relationCouplingProof(\codeFst~t)(\gamma_A, \gamma_B, \gamma_{\transformCoupling}, \tilde{\gamma})
                  \in \relationCouplingProof(\tau_1)$
              and $\relationCouplingProof(\codeSnd~t)(\gamma_A, \gamma_B, \gamma_{\transformCoupling}, \tilde{\gamma})
                  \in \relationCouplingProof(\tau_2)$.
        \item \textbf{Case } \(
              \boxed{
                  \begin{array}{c}
                      \Gamma \vdash t: \type
                      \quad \extend{\Gamma}{x}{\type} \vdash t' : \type'
                      \\
                      \hline
                      \Gamma \vdash \codeLet~x = t;~t' : \type'
                  \end{array}
              }
              \)
              \textbf{: }
              \par\vskip0.5em\relax

              We define
              \begin{align}
                   & \proofCoupling(\codeLet~x = t;~t')(\gamma_A, \gamma_B, \gamma_{\transformCoupling}, \tilde{\gamma})
                  \\
                  \nonumber
                   & \defas
                  \proofCoupling(t')\Big(
                  \begin{aligned}[t]
                       & \gamma_A[x \mapsto \sem{t}(\gamma_A)], \gamma_B[x \mapsto \sem{t}(\gamma_B)], \gamma_{\transformCoupling}[x \mapsto \sem{\transformCoupling{t}}(\gamma_{\transformCoupling})], \\
                       & \tilde{\gamma}\left[x \mapsto \proofCoupling(t)(\gamma_A, \gamma_B, \gamma_{\transformCoupling}, \tilde{\gamma})\right]\Big)
                  \end{aligned}
              \end{align}
              It follows from the induction hypotheses and the semantics of the let bind that
              $\relationCouplingProof(\codeLet~x = t;~t')(\gamma_A, \gamma_B, \gamma_{\transformCoupling}, \tilde{\gamma})
                  \in \relationCouplingProof(\tau')$.
        \item \textbf{Case } \(
              \boxed{
                  \begin{array}{c}
                      \Gamma \vdash t: \type
                      \\
                      \hline
                      \Gamma \vdash \codePReturn~\ter : \typeProb~\type
                  \end{array}
              }
              \)
              \textbf{: }
              \par\vskip0.5em\relax

              We define
              \begin{align}
                   & \proofCoupling(\codePReturn~t)(\gamma_A, \gamma_B, \gamma_{\transformCoupling}, \tilde{\gamma})
                  \\
                  \nonumber
                   & \defas
                  \semPReturn\left(\semP{t}(\gamma_A), \semP{t}(\gamma_B), \semP{\transformCoupling{t}}(\gamma_{\transformCoupling}), \proofCoupling(t)(\gamma_A, \gamma_B, \gamma_{\transformCoupling}, \tilde{\gamma})\right),
              \end{align}
              i.e., the coupling proof value is itself a Dirac distribution over the underlying related values and their proof.
              Let $(x_A, x_B, x_{\transformCoupling}, \tilde{x}) \defas
                  \relationCouplingProof(\codePReturn~t)(\gamma_A, \gamma_B, \gamma_{\transformCoupling}, \tilde{\gamma})$.
              We have
              \begin{align}
                  P(\semProj_1)(\tilde{x}) & = \semPReturn(x_A)                     \\
                  P(\semProj_2)(\tilde{x}) & = \semPReturn(x_B)                     \\
                  P(\semProj_3)(\tilde{x}) & = \semPReturn(x_{\transformCoupling}).
              \end{align}
              It follows from the induction hypothesis and the definition of $\relationCouplingProof(\typeProb~\tau)$
              in \cref{def:coupling-logical-relation} that
              $\relationCouplingProof(\codePReturn~t)(\gamma_A, \gamma_B, \gamma_{\transformCoupling}, \tilde{\gamma})
                  \in \relationCouplingProof(\typeProb~\tau)$.
        \item \textbf{Case } \(
              \boxed{
                  \begin{array}{c}
                      \Gamma \vdash t: \typeProb~\type
                      \quad \extend{\Gamma}{x}{\type} \vdash t': \typeProb~\type'
                      \\
                      \hline
                      \Gamma \vdash x \codePGets t;~t' : \typeProb~\type'
                  \end{array}
              }
              \)
              \textbf{: }
              \par\vskip0.5em\relax

              We define
              \begin{align}
                   & \proofCoupling(x \codePGets t;~t')(\gamma_A, \gamma_B, \gamma_{\transformCoupling}, \tilde{\gamma})
                  \\
                  \nonumber
                   & \qquad\defas
                  \begin{aligned}[t]
                       & \semLet~\tilde{\mu} = \proofCoupling(t)(\gamma_A, \gamma_B, \gamma_{\transformCoupling}, \tilde{\gamma}) \\
                       & \semLet~g = (u_A, u_B, u_{\transformCoupling}, \tilde{u}) \mapsto \proofCoupling(t')
                      \Big(
                      \begin{aligned}[t]
                           & \gamma_A[x \mapsto u_A], \gamma_B[x \mapsto u_B],                                                        \\
                           & \gamma_{\transformCoupling}[x \mapsto u_{\transformCoupling}], \tilde{\gamma}[x \mapsto \tilde{u}]\Big)
                      \end{aligned}
                      \\
                       & \tilde{\mu} \semPBind{} g.
                  \end{aligned}
              \end{align}
              Let $(x_A, x_B, x_{\transformCoupling}, \tilde{x}) \defas
                  \relationCouplingProof(x \codePGets t;~t')(\gamma_A, \gamma_B, \gamma_{\transformCoupling}, \tilde{\gamma})$.
              We have
              \begin{align}
                   & P(\semProj_1)(\tilde{x})
                  \\
                   & \quad =
                  \begin{aligned}[t]
                       & \proofCoupling(t)(\gamma_A, \gamma_B, \gamma_{\transformCoupling}, \tilde{\gamma})
                      \semPBind{}
                      \\
                       & \quad(u_A, u_B, u_{\transformCoupling}, \tilde{u}) \mapsto
                      P(\semProj_1)\big(
                      \proofCoupling(t')(
                      \begin{aligned}[t]
                           & \gamma_A[x \mapsto u_A],
                          \gamma_B[x \mapsto u_B],
                          \\
                           & \gamma_{\transformCoupling}[x \mapsto u_{\transformCoupling}],
                          \tilde{\gamma}[x \mapsto \tilde{u}])\big)
                      \end{aligned}
                  \end{aligned}
                  \\
                   & \quad =
                  \proofCoupling(t)(\gamma_A, \gamma_B, \gamma_{\transformCoupling}, \tilde{\gamma})
                  \semPBind{}
                  \left((u_A, u_B, u_{\transformCoupling}, \tilde{u}) \mapsto
                  \semP{t'}(\gamma_A[x \mapsto u_A])\right)
                  \label{eq:coupling-proof-bind-proj-output}
                  \\
                   & \quad =
                  \semP{t}(\gamma_A)
                  \semPBind{}
                  \left(u_A \mapsto
                  \semP{t'}(\gamma_A[x \mapsto u_A])\right)
                  \label{eq:coupling-proof-bind-proj-intermediate}
                  \\
                   & \quad =
                  x_A,
              \end{align}
              where \cref{eq:coupling-proof-bind-proj-output} follows from the induction hypothesis
              on $\Gamma, x : \tau \vdash t' : \typeProb~\tau'$,
              and \cref{eq:coupling-proof-bind-proj-intermediate} follows from the induction hypothesis
              on $\Gamma \vdash t : \typeProb~\tau$.
              The identities $P(\semProj_2)(\tilde{x}) = x_B$ and $P(\semProj_3)(\tilde{x}) = x_{\transformCoupling}$
              follow by analogous logic.
              It follows from the definition of $\relationCouplingProof(\typeProb~\tau)$
              that $\relationCouplingProof(x \codePGets t;~t')(\gamma_A, \gamma_B, \gamma_{\transformCoupling}, \tilde{\gamma})
                  \in \relationCouplingProof(\typeProb~\tau)$.
        \item \textbf{Case } \(
              \boxed{
                  \begin{array}{c}
                      \Gamma \vdash t: \typeBool
                      \quad \Gamma \vdash t_1 : \type
                      \quad \Gamma \vdash t_2: \type
                      \\
                      \hline
                      \Gamma \vdash \codeIfThenElse{t}{t_1}{t_2} : \type
                  \end{array}
              }
              \)
              \textbf{: }
              \par\vskip0.5em\relax

              To propagate the proof value, we define a helper for merging proof values that behaves
              similarly to the original merge helper.
              Letting $\semMerge_\tau(u, v) \defas \semP{\transformMerge_\tau\set{x}\set{y}}([x \mapsto u, y \mapsto v])$
              be the semantic-space analog of the original merge helper, we define
              \begin{align}
                  \semMergeProof_{\sigma}(\tilde{x}_1, \tilde{x}_2)                      & \defas \semUnit
                  \\
                  \semMergeProof_{\tau_1 \typeProduct \tau_2}(\tilde{x}_1, \tilde{x}_2)  & \defas \left(\semMergeProof_{\tau_1}(\semFst(\tilde{x}_1), \semFst(\tilde{x}_2)), \semMergeProof_{\tau_2}(\semSnd(\tilde{x}_1), \semSnd(\tilde{x}_2))\right)
                  \\
                  \semMergeProof_{\tau_1 \typeFunction \tau_2}(\tilde{x}_1, \tilde{x}_2) & \defas (x_A, x_B, x_{\transformCoupling}, \tilde{x}) \mapsto \semMergeProof_{\tau_2}\left(\tilde{x}_1(x_A, x_B, x_{\transformCoupling}, \tilde{x}), \tilde{x}_2(x_A, x_B, x_{\transformCoupling}, \tilde{x})\right)
                  \\
                  \semMergeProof_{\typeProb~\tau}(\tilde{x}_1, \tilde{x}_2)              & \defas
                  \Big\{
                  \begin{aligned}[t]
                       & (x_{A,1}, x_{B, 1}, x_{\transformCoupling, 1}, \tilde{x}_1) \semPGets \tilde{x}_1;~(x_{A,2}, x_{B, 2}, x_{\transformCoupling, 2}, \tilde{x}_2) \semPGets \tilde{x}_2; \\
                       & \semPReturn\left(x_{A,1}, x_{B, 2}, \semMerge_\tau(x_{\transformCoupling, 1}, x_{\transformCoupling, 2}), \semMergeProof_\tau(\tilde{x}_1, \tilde{x}_2)\right)
                      \Big\}
                  \end{aligned}
              \end{align}
              We now define
              \begin{align}
                   & \proofCoupling(\Gamma \vdash \codeIfThenElse{t}{t_1}{t_2} : \tau)
                  (\gamma_A, \gamma_B, \gamma_{\transformCoupling}, \tilde{\gamma})
                  \\
                  \nonumber
                   & \quad\defas
                  \begin{aligned}[t]
                       & \semLet~(b_A, b_B) = \semP{\transformCoupling{t}}(\gamma_{\transformCoupling})
                      \\
                       & \semLet~\tilde{x}_A = \semIfThenElse{b_A}{
                          \proofCoupling(t_A)(\gamma_A, \gamma_B, \gamma_{\transformCoupling}, \tilde{\gamma})
                      }{
                          \proofCoupling(t_B)(\gamma_A, \gamma_B, \gamma_{\transformCoupling}, \tilde{\gamma})
                      }                                                                                 \\
                       & \semLet~\tilde{x}_B = \semIfThenElse{b_B}{
                          \proofCoupling(t_A)(\gamma_A, \gamma_B, \gamma_{\transformCoupling}, \tilde{\gamma})
                      }{
                          \proofCoupling(t_B)(\gamma_A, \gamma_B, \gamma_{\transformCoupling}, \tilde{\gamma})
                      }                                                                                 \\
                       & \semMergeProof_\tau(\tilde{x}_A, \tilde{x}_B).
                  \end{aligned}
              \end{align}
              As before, since $\proofCoupling(\codeIfThenElse{t}{t_1}{t_2})$
              is defined above using semantic-space analogs
              of core $\langP$ syntax, it is a $\catQBS$-morphism by construction.
              To conclude, we wish to justify that $\relationCouplingProof(\codeIfThenElse{t}{t_1}{t_2})$ preserves the logical relation.
              Define $t_A$ to be the term $t_1$ if $\semP{t}(\gamma_A) = \semTrue$, and $t_2$ if $\semP{t}(\gamma_A) = \semFalse$.
              Similarly, define $t_B$ to be the term $t_1$ if $\semP{t}(\gamma_B) = \semTrue$, and $t_2$ if $\semP{t}(\gamma_B) = \semFalse$.
              Then, we have
              \begin{align}
                   & \relationCouplingProof(\codeIfThenElse{t}{t_1}{t_2})(\gamma_A, \gamma_B, \gamma_{\transformCoupling}, \tilde{\gamma})
                  \\
                   & =
                  \Big(
                  \begin{aligned}[t]
                       & \semP{t_A}(\gamma_A), \semP{t_B}(\gamma_B), \semP{\transformMerge_\tau\set{\transformCoupling{t_A}}\set{\transformCoupling{t_B}}}(\gamma_{\transformCoupling}), \\
                       & \semMergeProof_\tau\left(\proofCoupling(t_A)(\gamma_A, \gamma_B, \gamma_{\transformCoupling}, \tilde{\gamma}),
                      \proofCoupling(t_B)(\gamma_A, \gamma_B, \gamma_{\transformCoupling}, \tilde{\gamma})\right) \Big),
                  \end{aligned}
              \end{align}
              whose membership in $\relationCouplingProof(\tau)$
              follows by induction on the definitions of $\transformMerge_\tau\set{\cdot}\set{\cdot}$
              and $\semMergeProof_\tau$,
              applying the induction hypothesis on $\Gamma \vdash t_1 : \tau$ and $\Gamma \vdash t_2 : \tau$.
    \end{itemize}
    To conclude the proof, consider a term $\Gamma \vdash t : \tau$.
    Suppose $(\gamma_A, \gamma_B, \gamma_{\transformCoupling}) \in \relationCoupling(\Gamma)$.
    Then there exists a $\tilde{\gamma}$ such that $(\gamma_A, \gamma_B, \gamma_{\transformCoupling}, \tilde{\gamma}) \in \relationCouplingProof(\Gamma)$.
    Hence
    \begin{equation}
        \left(\semP{t}(\gamma_A),
        \semP{t}(\gamma_B),
        \semP{\transformCoupling{t}}(\gamma_{\transformCoupling}),
        \proofCoupling(t)(\gamma_A, \gamma_B, \gamma_{\transformCoupling}, \tilde{\gamma})\right)
        =
        \relationCouplingProof(t)(\gamma_A, \gamma_B, \gamma_{\transformCoupling}, \tilde{\gamma})
        \in \relationCouplingProof(\tau),
    \end{equation}
    and thus
    \begin{equation}
        \left(\semP{t}(\gamma_A),
        \semP{t}(\gamma_B),
        \semP{\transformCoupling{t}}(\gamma_{\transformCoupling})\right)
        \in \relationCoupling(\tau),
    \end{equation}
    as desired.
\end{proof}

\clearpage
\section{Deferred Material for \texorpdfstring{\hyperref[sec:partial-evaluation]{\S\ref*{sec:partial-evaluation}}}{\S\ref{sec:partial-evaluation}}: Proofs of Soundness of \texorpdfstring{$\transformModCoupling{\cdot}$}{C*\{·\}}
  and \texorpdfstring{$\transformPPEval{\cdot}$}{P\{·\}}}
\label{appendix:deferred-partial-evaluation}

In this appendix, we provide
full definitions and additional categorical theoretical detail for
the objects introduced in \cref{sec:partial-evaluation},
and supply the proofs of \cref{thm:factorized-coupling-correctness,thm:langPP-realization}.

\begin{definition}[Factorized Coupling Transformation]
    \label{def:factorized-coupling-transformation-full}
    The factorized coupling program transformation is a functor
    \begin{equation}
        \transformModCoupling{\cdot} :
        \catSyntax(\langP)
        \longrightarrow
        \catSyntax(\langPP)
    \end{equation}
    whose action on types and terms is given in \cref{def:factorized-coupling},
    utilizing the modified merge macro $\transformModMerge_\tau\set{\cdot}\set{\cdot}$
    defined as follows:
    \begin{align}
         & \transformModMerge_\sigma\set{t_1}\set{t_2} \defas
        \begin{aligned}[t]
             & \codeLet~((x_A : \sigma, \bar{x}_B : \typeResidual~\sigma),
            \bar{y} : \typeResidual~(\sigma \typeProduct \typeResidual~\sigma)) = (t_1,t_2)                                      \\
             & (x_A, \{(y_A, \bar{y}_B) \codeRGets \bar{y};~y_B \codeRGets \bar{y}_B;~\codeRHide~y_B\})
        \end{aligned}
        \\
         & \transformModMerge_{\tau_1 \typeProduct \tau_2}\set{t_1}\set{t_2} \defas
         \big(\begin{aligned}[t]
         &\transformModMerge_{\tau_1}\set{\codeFst~t_1}\set{y \codeRGets t_2;~\codeRHide~(\codeFst~y)}, \\
         &\transformModMerge_{\tau_2}\set{\codeSnd~t_1}\set{y \codeRGets t_2;~\codeRHide~(\codeSnd~y)}\big)
         \end{aligned}
         \\
         & \transformModMerge_{\tau_1 \typeFunction \tau_2}\set{t_1}\set{t_2} \defas
        \lambda(x : \transformModCoupling{\tau_1}).
        \transformModMerge_{\tau_2}\set{t_1~x}\set{y \codeRGets t_2;~\codeRHide~(y~x)} \\
         & \transformModMerge_{\typeProb~\tau}\set{t_1}\set{t_2}\defas
        \begin{aligned}[t]
        &x \codePPGets t_1\\
        &\bar{y} \codePPGets \{z \codeRGets t_2;~y \codePPGets z;~\codePPReturn~(\codeRHide~y)\}\\
        &\codePPReturn~\transformModMerge_\tau\set{x}\set{\bar{y}}.
        \end{aligned}
    \end{align}
\end{definition}

\thmFactorizedCouplingCorrectness*
\begin{proof}
    We first show that $\transformErasure{\cdot} \circ \transformModCoupling{\cdot}$
    satisfies the same inductive definitions on types and terms as
    $\transformCoupling{\cdot}$ in \cref{def:coupling-transformation}.
    We have
    \begin{align}
         & \transformErasure{\transformModCoupling{\base}}
        = \transformErasure{\base \typeProduct \typeResidual~\base}
        = \base \typeProduct \transformErasure{\typeResidual~\base}
        = \base \typeProduct \base
        \\
         & \transformErasure{\transformModCoupling{c}} = (c, \transformErasure{\codeRHide~c})
        = (c, c)
        \\
         & \transformErasure{\transformModCoupling{\typeProb~\tau}}  =
        \transformErasure{\typePProb~(\transformModCoupling{\tau})}
        = \typeProb~(\transformErasure{\transformModCoupling{\tau}})
        \\
         &
        \transformErasure{\transformModCoupling{\codePReturn~t}}      =
        \transformErasure{\codePPReturn~(\transformModCoupling{t})}
        = \codePReturn~(\transformErasure{\transformModCoupling{t}})
        \\
         & \transformErasure{\transformModCoupling{x \codePGets t;~t'}} =
        \transformErasure{\set{x \codePPGets \transformModCoupling{t};~\transformModCoupling{t'}}}
        = \set{x \codePGets \transformErasure{\transformModCoupling{t}};~\transformErasure{\transformModCoupling{t'}}}.
    \end{align}
    For the case of $\codeIfThenElse{t}{t_1}{t_2}$, by comparing the definitions of $\transformMerge_\tau\set{\cdot}\set{\cdot}$
    and $\transformModMerge_\tau\set{\cdot}\set{\cdot}$ we can verify that $\transformErasure{\transformModMerge_\tau\set{t_1}\set{t_2}}
        = \transformMerge_\tau\set{\transformErasure{t_1}}\set{\transformErasure{t_2}}$, and thus
    \begin{align}
         & \transformErasure{\transformModCoupling{\Gamma \vdash \codeIfThenElse{t}{t_1}{t_2} : \tau}} \\
         & \quad =
        \begin{aligned}[t]
             & \codeLet~(b_A, b^*_B) = \transformErasure{\transformModCoupling{t}};                                                                               \\
             & \hspace{-2.5em} \transformErasure\big\{\transformModMerge_{\tau}\set{\codeIfThenElse{b_A}{\transformModCoupling{t_1}}{\transformModCoupling{t_2}}}
            \set{b_B \codeRGets b_B^*;~\codeRHide~(\codeIfThenElse{b_B}{\transformModCoupling{t_1}}{\transformModCoupling{t_2}})}\big\}
        \end{aligned}
        \\
         & \quad =
        \begin{aligned}[t]
             & \codeLet~(b_A, b^*_B) = \transformErasure{\transformModCoupling{t}}; \\
             &
            \transformMerge_\tau\set{\codeIfThenElse{b_A}{\transformErasure{\transformModCoupling{t_1}}}{\transformErasure{\transformModCoupling{t_2}}}}
            \set{\codeIfThenElse{b_B^*}{\transformErasure{\transformModCoupling{t_1}}}{\transformErasure{\transformModCoupling{t_2}}}}.
        \end{aligned}
    \end{align}
    Hence the inductive definitions match for all types and terms.
    Since
    $(\semP{p}, \semP{p}, \semP{\transformErasure{\transformModCoupling{p}}})
        \in \relationCoupling(\tau_p)$ for each primitive $p$ appearing in $t$
    by assumption,
    the desired follows from \cref{thm:coupling-correctness}.
\end{proof}
\begin{definition}[Partial Probability Evaluation Transformation]
    \label{def:partial-probability-evaluation-transformation-full}
    The partial probability evaluation program transformation is a functor
    \begin{equation}
        \transformPPEval{\cdot} : \catSyntax(\langPP) \longrightarrow \catSyntax(\langP),
    \end{equation}
    whose action on types and terms is given in \cref{def:factorized-coupling}.
\end{definition}
\begin{definition}[Partial Probability Evaluation Logical Relation]
    \label{def:logical-relation-realization}
    The proof-relevant logical relation for the partial probability evaluation of $\langPP$
    associates each type $\tau$ with a quasi-Borel space $\relationPPEvalProof(\tau)$
    of $\langPP$,
    where
    \begin{equation}
        \relationPPEvalProof(\tau) \subset \semP{\transformErasure{\tau}} \times \semP{\transformPPEval{\tau}} \times
        \proofPPEval(\tau).
    \end{equation}
    In words, $\relationPPEvalProof(\tau)$ relates tuples $(x_{\transformErasure}, x_{\transformPPEval}, \tilde{x})$,
    where $x_{\transformErasure} : \semP{\transformErasure{\tau}}$,
    $x_{\transformPPEval} : \semP{\transformPPEval{\tau}}$,
    and $\tilde{x} : \proofPPEval(\tau)$.
    Here, the third component $\tilde{x}$ is a \emph{proof value} belonging to the space $\proofPPEval(\tau)$,
    where $\proofPPEval$ itself associates types of $\langP$ with quasi-Borel spaces.
    %
\begin{listing}[t]
\small

\setlength{\abovedisplayshortskip}{0pt}
\setlength{\belowdisplayshortskip}{0pt}
\setlength{\abovedisplayskip}{0pt}
\setlength{\belowdisplayskip}{0pt}

\begin{syntaxbox}[title=\textbf{Proof Space $\proofPPEval$ on Types of $\langPP$}]{\linewidth}
 \begin{align*}
        \proofPPEval(\sigma)  &\defas \setUnit
        &\qquad
        \proofPPEval(\tau_1 \typeProduct \tau_2) &\defas
            \proofPPEval(\tau_1) \times \proofPPEval(\tau_2)
        \\
        \proofPPEval(\tau_1 \typeFunction \tau_2) &\defas
            \relationPPEvalProof(\tau_1) \to \proofPPEval(\tau_2)
        &\qquad
        \proofPPEval(\typeResidual~\tau) &\defas \proofPPEval(\tau)
        \\
        \proofPPEval(\typeProb~\type) &\defas P(\relationPPEvalProof(\type)) \times \left([0,1] \to \relationPPEvalProof(\type)\right)
        &\qquad
        \proofPPEval(\typePProb~\type) &\defas [0, 1] \to P\left(\relationPPEvalProof(\type)\right)
 \end{align*}
\end{syntaxbox}

\begin{syntaxbox}[title=\textbf{Proof-Relevant Partial Probability Evaluation Logical Relation $\relationPPEvalProof$ on Types of $\langPP$}]{\linewidth}
 \begin{align*}
        \relationPPEvalProof(\sigma)                  & \defas
        \{(x_{\transformErasure}, x_{\transformPPEval}, \semUnit)
        \mid x_{\transformErasure} = x_{\transformPPEval}\}.
        \\
        \relationPPEvalProof(\tau_1 \typeProduct \tau_2) &=
        \Big\{(x_{\transformErasure}, x_{\transformPPEval}, \tilde{x}) \mathrel{\big|}
        \begin{aligned}[t]
            &(\semFst(x_{\transformErasure}), \semFst(x_{\transformPPEval}), \semFst(\tilde{x})) \in \relationPPEvalProof(\tau_1)
            \\
            &\land (\semSnd(x_{\transformErasure}), \semSnd(x_{\transformPPEval}), \semSnd(\tilde{x})) \in \relationPPEvalProof(\tau_2)\Big\}
        \end{aligned}
        \\
        \relationPPEvalProof(\tau_1 \typeFunction \tau_2) &=
        \Big\{(f_{\transformErasure}, f_{\transformPPEval}, \tilde{f}) \mathrel{\big|}
        \begin{aligned}[t]
            &\forall (x_{\transformErasure}, x_{\transformPPEval}, \tilde{x}) \in \relationPPEvalProof(\tau_1).
            \\
            &\left(f_{\transformErasure}(x_{\transformErasure}), f_{\transformPPEval}(x_{\transformPPEval}), \tilde{f}(x_{\transformErasure}, x_{\transformPPEval}, \tilde{x})\right)
            \in \relationPPEvalProof(\tau_2)\Big\}
        \end{aligned}
        \\
        \relationPPEvalProof(\typeResidual~\type)     & \defas \relationPPEvalProof(\type)
        \\
        \relationPPEvalProof(\typeProb~\type)         & \defas
        \Big\{
        (x_{\transformErasure}, (x_{\transformPPEval,1}, x_{\transformPPEval,2}), (\tilde{x}_1, \tilde{x}_2))
        \mathrel{\big|}
        \begin{aligned}[t]
        \\
        &\hspace{-8em}
        \left(P(\semProj_1)(\tilde{x}_1), P(\semProj_2)(\tilde{x}_1)\right)
        = \left(x_{\transformErasure}, x_{\transformPPEval,1}\right)
        \\
        &\hspace{-8em}\land
        \left(P(\semProj_1 \circ \tilde{x}_2)(\semSeed), P(\semProj_2 \circ \tilde{x}_2)(\semSeed)\right)
        = \left(x_{\transformErasure}, P(x_{\transformPPEval,2})(\semSeed)\right)
        \Big\}
        \end{aligned}
        \label{eq:relation-realization-prob}
        \\
        \relationPPEvalProof(\typePProb~\type)       & \defas
        \Big\{
        (x_{\transformErasure}, x_{\transformPPEval}, \tilde{x})
        \mathrel{\big|}
        \begin{aligned}[t]
        \\
        &\hspace{-3em}\left(P(\semProj_1)(\semSeed \semPBind{} \tilde{x}),
        P(\semProj_2)(\semSeed \semPBind{} \tilde{x})\right)
        = \left(x_{\transformErasure}, \semSeed \semPBind{} x_{\transformPPEval}\right)
        \Big\}
        \end{aligned}
 \end{align*}
\end{syntaxbox}

\captionsetup{aboveskip=5pt,belowskip=-10pt}
\caption{The proof-relevant partial probability evaluation logical relation on $\langPP$ types.
}
\label{def:realization-logical-relation}
\end{listing}

    \cref{def:realization-logical-relation} gives the definition of the proof-relevant
    partial probability evaluation logical relation on the types of $\langPP$.
    We define the proof-irrelevant relation by
    \begin{equation}
        \relationPPEval(\tau) \defas
        \set{(x_{\transformErasure}, x_{\transformPPEval}) \mid
            \exists \tilde{x}. (x_{\transformErasure}, x_{\transformPPEval}, \tilde{x}) \in \relationPPEvalProof(\tau)}.
    \end{equation}
\end{definition}
\thmLangPPRealization*
\begin{proof}
    We extend $\proofPPEval$ and $\relationPPEvalProof$ into functors
    \begin{align}
        \proofPPEval         & : \catSyntax(\langPP) \longrightarrow \catQBS
        \\
        \relationPPEvalProof & : \catSyntax(\langPP) \longrightarrow \catQBS
    \end{align}
    where we define $\relationPPEvalProof$ on a term $\Gamma \vdash t : \tau$ by
    \begin{equation}
        \relationPPEvalProof(t)(\gamma_{\transformErasure}, \gamma_{\transformPPEval}, \tilde{\gamma})
        \defas
        \left(\semP{\transformErasure{t}}(\gamma_{\transformErasure}),
        \semP{\transformPPEval{t}}(\gamma_{\transformPPEval}),
        \proofPPEval(t)(\gamma_{\transformErasure}, \gamma_{\transformPPEval}, \tilde{\gamma})\right).
        \label{eq:proof-langpp-realization-relation-defn}
    \end{equation}
    It remains to define the morphism $\proofPPEval(t) : \proofPPEval(\Gamma) \to \proofPPEval(\tau)$
    that constructs the proof value
    for each term $\Gamma \vdash t : \tau$,
    in such a way
    that $\relationPPEvalProof(t)$ preserves the logical relation, i.e. $\relationPPEvalProof(t)$
    is a morphism from $\relationPPEvalProof(\Gamma)$ to $\relationPPEvalProof(\tau)$.
    As is typical for a logical relation proof, we proceed by induction
    on the typing derivation of $\Gamma \vdash t : \tau$:
    in each case of the induction, we will first define
    $\proofPPEval(t)$,
    and then verify that $\relationPPEvalProof(t)$
    preserves the logical relation.

    In the following, let $(\gamma_{\transformErasure}, \gamma_{\transformPPEval}, \tilde{\gamma})$
    denote an arbitrary tuple in $\relationPPEvalProof(\Gamma)$.
    Note that $\relationPPEvalProof(t)$ is a morphism from $\relationPPEvalProof(\Gamma)$ to
    $\semP{\transformErasure{\tau}} \times \semP{\transformPPEval{\tau}} \times \proofPPEval(\tau)$
    by its definition in \cref{eq:proof-langpp-realization-relation-defn}:
    by \cref{lemma:morphism-into-subspace},
    showing that $\relationPPEvalProof(t)$ is a morphism from $\relationPPEvalProof(\Gamma)$ to $\relationPPEvalProof(\tau)$
    is equivalent to showing that $\relationPPEvalProof(t)(\gamma_{\transformErasure}, \gamma_{\transformPPEval}, \tilde{\gamma}) \in \relationPPEvalProof(\tau)$
    for each $(\gamma_{\transformErasure}, \gamma_{\transformPPEval}, \tilde{\gamma}) \in \relationPPEvalProof(\Gamma)$.
    \begin{itemize}[leftmargin=*,topsep=10pt,itemsep=10pt]
        \item \textbf{Cases } \(
              \boxed{
                  \Gamma \vdash \codeTrue : \typeBool
              }
              \hspace{0.5em}
              \boxed{
                  \Gamma \vdash \codeFalse : \typeBool
              }
              \hspace{0.5em}
              \boxed{
                  \Gamma \vdash n : \typeInteger
              }
              \hspace{0.5em}
              \boxed{
                  \Gamma \vdash r : \typeReal
              }
              \hspace{0.5em}
              \boxed{
                  \Gamma \vdash \codeUnit : \typeUnit
              }
              \)
              \textbf{: }
              \par\vskip0.5em\relax

              For a constant $c$, we define
              \begin{equation}
                  \proofPPEval(c)(\gamma_{\transformErasure}, \gamma_{\transformPPEval}, \tilde{\gamma}) \defas \semUnit.
              \end{equation}
              Since $\transformErasure{c} = c$, $\transformPPEval{c} = c$, and constants $c$
              are of base type $\sigma$, it follows that
              \begin{equation}
                  \relationPPEvalProof(c)(\gamma_{\transformErasure}, \gamma_{\transformPPEval}, \tilde{\gamma})
                  = (c, c, \semUnit) \in \relationPPEvalProof(\sigma).
              \end{equation}
        \item \textbf{Case } \(
              \boxed{
                  \Gamma \vdash p : \tau_p
              }
              \)
              \textbf{: }
              \par\vskip0.5em\relax

              Follows from the precondition on primitives in the theorem statement, analogously to the corresponding case in \cref{thm:coupling-correctness}.
              Specifically, we define $\proofPPEval(p)$ to be a constant function,
              \begin{equation}
                  \proofPPEval(p)(\gamma_{\transformErasure}, \gamma_{\transformPPEval}, \tilde{\gamma}) \defas \tilde{x},
              \end{equation}
              where $\tilde{x}$ is chosen as an (environment-independent) proof value such that
              \begin{equation}
                  (\semP{\transformErasure{p}}, \semP{\transformPPEval{p}}, \tilde{x})
                  =
                  (\semP{p}, \semP{\transformPPEval{p}}, \tilde{x})
                  \in \relationPPEvalProof(\tau_p),
              \end{equation}
              whose existence is guaranteed by the assumption $(\semP{p}, \semP{\transformPPEval{p}}) \in \relationPPEval(\tau_p)$.
        \item \textbf{Cases} \(\boxed{\vphantom{(j_1)~t'}\lambda(x : \tau).t}\hspace{0.5em}
              \boxed{\vphantom{(j_1)~t'}t'~t}\hspace{0.5em}
              \boxed{\vphantom{(j_1)~t'}(t_1, t_2)}\hspace{0.5em}
              \boxed{\vphantom{(j_1)~t'}\codeFst~t}\hspace{0.5em}
              \boxed{\vphantom{(j_1)~t'}\codeSnd~t}\hspace{0.5em}
              \boxed{\vphantom{(j_1)~t'}\codeLet~x=t;~t'}\)
              \textbf{: }
              \par\vskip0.5em\relax

              These cases follow by analogous reasoning to the corresponding
              cases in \cref{thm:coupling-correctness}, as the inductive definitions we rely upon
              for these cases are the same for $\relationPPEvalProof$ and $\relationCouplingProof$.
        \item \textbf{Case } \(
              \boxed{
                  \begin{array}{c}
                      \Gamma \vdash t: \type
                      \\
                      \hline
                      \Gamma \vdash \codePReturn~\ter : \typeProb~\type
                  \end{array}
              }
              \)
              \textbf{: }
              \par\vskip0.5em\relax

              We define
              \begin{equation}
                  \proofPPEval(\codePReturn~t)(\gamma_{\transformErasure}, \gamma_{\transformPPEval}, \tilde{\gamma}) \defas
                  \begin{aligned}[t]
                      \\
                      \hspace{-8em}\left(\semPReturn(\relationPPEvalProof(t)(\gamma_{\transformErasure}, \gamma_{\transformPPEval}, \tilde{\gamma})),
                      \lambda\_. \relationPPEvalProof(t)(\gamma_{\transformErasure}, \gamma_{\transformPPEval}, \tilde{\gamma})\right).
                  \end{aligned}
              \end{equation}
              Let
              $(x_{\transformErasure}, (x_{\transformPPEval,1}, x_{\transformPPEval, 2}), (\tilde{x}_1, \tilde{x}_2)) \defas
                  \relationPPEvalProof(\codePReturn~t)(\gamma_{\transformErasure}, \gamma_{\transformPPEval}, \tilde{\gamma})$.
              We have
              \begin{align}
                  x_{\transformErasure}   & = \semPReturn(\semP{\transformErasure{t}}(\gamma_{\transformErasure}))
                  \label{eq:proof-langpp-realization-preturn-erasure}
                  \\
                  x_{\transformPPEval, 1} & = P(x_{\transformPPEval, 2})(\semSeed)
                  = \semPReturn(\semP{\transformPPEval{t}}(\gamma_{\transformPPEval}))
                  \label{eq:proof-langpp-realization-preturn-peval}
                  \\
                  \tilde{x}_{1}           & = P(\tilde{x}_2)(\semSeed)
                  = \semPReturn(\relationPPEvalProof(t)(\gamma_{\transformErasure}, \gamma_{\transformPPEval}, \tilde{\gamma})).
              \end{align}
              From the induction hypothesis on $\Gamma \vdash t : \tau$, it follows that
              $(x_{\transformErasure}, (x_{\transformPPEval,1}, x_{\transformPPEval, 2}), (\tilde{x}_1, \tilde{x}_2)) \in \relationPPEvalProof(\typeProb~\tau)$.

        \item \textbf{Case } \(
              \boxed{
                  \begin{array}{c}
                      \Gamma \vdash t: \typeProb~\type
                      \quad \extend{\Gamma}{x}{\type} \vdash t': \typeProb~\type'
                      \\
                      \hline
                      \Gamma \vdash x \codePGets t;~t' : \typeProb~\type'
                  \end{array}
              }
              \)
              \textbf{: }
              \par\vskip0.5em\relax

              We define
              \begin{equation}
                  \begin{aligned}[t]
                       & \proofPPEval(x \codePGets t;~t')(\gamma_{\transformErasure}, \gamma_{\transformPPEval}, \tilde{\gamma}) \defas
                      \\
                       & \quad
                      \begin{aligned}[t]
                          \Big( & \semFst(\proofPPEval(t)(\gamma_{\transformErasure}, \gamma_{\transformPPEval}, \tilde{\gamma}))
                          \semPBind{} \big((u_{\transformErasure}, u_{\transformPPEval}, \tilde{u}) \mapsto
                          \semFst(\proofPPEval(t')(
                          \begin{aligned}[t]
                               & \gamma_{\transformErasure}[x \setvar u_{\transformErasure}],\gamma_{\transformPPEval}[x \setvar u_{\transformPPEval}], \\
                               & \tilde{\gamma}[x \setvar \tilde{u}]))\big),
                          \end{aligned}
                          \\
                                & s \mapsto
                          \begin{aligned}[t]
                               & \semLet~(s_0, s_1) = \semSplit(s)
                              \\
                               & \semLet~(u_{\transformErasure}, u_{\transformPPEval}, \tilde{u}) = \semSnd(\proofPPEval(t)(\gamma_{\transformErasure}, \gamma_{\transformPPEval}, \tilde{\gamma}))(s_0)
                              \\
                               & \semSnd(\proofPPEval(t')(\gamma_{\transformErasure}[x \setvar u_{\transformErasure}], \gamma_{\transformPPEval}[x \setvar u_{\transformPPEval}], \tilde{\gamma}[x \setvar \tilde{u}]))(s_1)\Big).
                          \end{aligned}
                      \end{aligned}
                  \end{aligned}
              \end{equation}
              Let
              $(x_{\transformErasure}, (x_{\transformPPEval,1}, x_{\transformPPEval, 2}), (\tilde{x}_1, \tilde{x}_2)) \defas
                  \relationPPEvalProof(x \codePGets t;~t')(\gamma_{\transformErasure}, \gamma_{\transformPPEval}, \tilde{\gamma})$.
              We have that
              \begin{align}
                  x_{\transformErasure} & =
                  \semP{\transformErasure{t}}(\gamma_{\transformErasure}) \semPBind{} \left(u \mapsto
                  \semP{\transformErasure{t'}}(\gamma_{\transformErasure}[x \setvar u])\right)
                  \label{eq:proof-langpp-realization-pbind-erasure}
              \end{align}
              and
              \begin{align}
                  x_{\transformPPEval, 1} & =
                  \semFst(\semP{\transformPPEval{t}}(\gamma_{\transformPPEval})) \semPBind{} \left(u \mapsto
                  \semFst(\semP{\transformPPEval{t'}}(\gamma_{\transformPPEval}[x \setvar u]))\right)
                  \label{eq:proof-langpp-realization-pbind-peval1}
              \end{align}
              From the induction hypothesis on $\Gamma \vdash t : \typeProb~\tau$
              and $\Gamma, x : \tau \vdash t' : \typeProb~\tau'$,
              utilizing the \emph{first} condition imposed in the constructions of $\relationPPEvalProof(\typeProb~\tau)$
              and $\relationPPEvalProof(\typeProb~\tau')$ in \cref{def:realization-logical-relation},
              it follows that
              \begin{equation}
                  \left(P(\semProj_1)(\tilde{x}_1), P(\semProj_2)(\tilde{x}_1)\right)
                  = (x_{\transformErasure}, x_{\transformPPEval, 1}).
                  \label{eq:proof-langpp-realization-pbind-first-condition}
              \end{equation}
              Now, note that
              \begin{align}
                  P(x_{\transformPPEval, 2})(\semSeed) & =
                  \begin{aligned}[t]
                       & P\left(\semSnd(\semP{\transformPPEval{t}}(\gamma_{\transformPPEval}))\right)(\semSeed)
                      \semPBind{}
                      \\
                       & \quad\left(u \mapsto
                      P\left(\semSnd(\semP{\transformPPEval{t}}(\gamma_{\transformPPEval}))(u)\right)(\semSeed)
                      \right)
                  \end{aligned}
                  \label{eq:proof-langpp-realization-pbind-peval2}
                  \\
                  P(\tilde{x}_2)(\semSeed)             & =
                  \begin{aligned}[t]
                       & P\left(\semSnd(\proofPPEval(t)(\gamma_{\transformErasure}, \gamma_{\transformPPEval}, \tilde{\gamma}))\right)(\semSeed)
                      \semPBind{}
                      \\
                       & \hspace{-5em}\left((u_{\transformErasure}, u_{\transformPPEval}, \tilde{u}) \mapsto
                      P\left(\semSnd(\proofPPEval(t')(\gamma_{\transformErasure}[x \setvar u_{\transformErasure}], \gamma_{\transformPPEval}[x \setvar u_{\transformPPEval}], \tilde{\gamma}[x \setvar \tilde{u}]))\right)(\semSeed)
                      \right).
                  \end{aligned}
                  \label{eq:proof-langpp-realization-pbind-proof2}
              \end{align}
              From the induction hypothesis on $\Gamma \vdash t : \typeProb~\tau$
              and $\Gamma, x : \tau \vdash t' : \typeProb~\tau'$,
              utilizing the \emph{second} condition imposed in the constructions of
              $\relationPPEvalProof(\typeProb~\tau)$
              and $\relationPPEvalProof(\typeProb~\tau')$ in \cref{def:realization-logical-relation},
              it follows that
              \begin{equation}
                  \left(P(\semProj_1 \circ \tilde{x}_2)(\semSeed), P(\semProj_2 \circ \tilde{x}_2)(\semSeed)\right)
                  = (x_{\transformErasure}, P(x_{\transformPPEval, 2})(\semSeed)).
                  \label{eq:proof-langpp-realization-pbind-second-condition}
              \end{equation}
              By \cref{eq:proof-langpp-realization-pbind-first-condition,eq:proof-langpp-realization-pbind-second-condition},
              it follows that
              $(x_{\transformErasure}, (x_{\transformPPEval,1}, x_{\transformPPEval, 2}), (\tilde{x}_1, \tilde{x}_2)) \in \relationPPEvalProof(\typeProb~\tau')$.

        \item \textbf{Case } \(
              \boxed{
                  \begin{array}{c}
                      \Gamma \vdash t: \typeBool
                      \quad \Gamma \vdash t_1 : \type
                      \quad \Gamma \vdash t_2: \type
                      \\
                      \hline
                      \Gamma \vdash \codeIfThenElse{t}{t_1}{t_2} : \type
                  \end{array}
              }
              \)
              \textbf{: }
              \par\vskip0.5em\relax

              We define
              \begin{equation}
                  \begin{aligned}[t]
                       & \proofPPEval(\codeIfThenElse{t}{t_1}{t_2})(\gamma_{\transformErasure}, \gamma_{\transformPPEval}, \tilde{\gamma})
                      \defas
                      \\
                       & \quad\semIfThenElse{\semP{\transformErasure{t}}(\gamma_{\transformErasure})}
                      {\proofPPEval(t_1)(\gamma_{\transformErasure}, \gamma_{\transformPPEval}, \tilde{\gamma})}
                      {\proofPPEval(t_2)(\gamma_{\transformErasure}, \gamma_{\transformPPEval}, \tilde{\gamma})}
                  \end{aligned}
              \end{equation}
              Let $(x_{\transformErasure}, x_{\transformPPEval}, \tilde{x}) \defas
                  \relationPPEvalProof(\codeIfThenElse{t}{t_1}{t_2})(\gamma_{\transformErasure}, \gamma_{\transformPPEval}, \tilde{\gamma})$.
              Assume that $\semP{\transformErasure{t}}(\gamma_{\transformErasure}) = \semTrue$.
              By the induction hypothesis on $\Gamma \vdash t : \typeBool$, it holds that
              $\semP{\transformPPEval{t}}(\gamma_{\transformPPEval})
                  = \semP{\transformErasure{t}}(\gamma_{\transformErasure}) = \semTrue$.
              Thus, it holds that
              \begin{align}
                  (x_{\transformErasure}, x_{\transformPPEval}, \tilde{x})
                  = (\semP{\transformErasure{t_1}}(\gamma_{\transformErasure}),
                  \semP{\transformPPEval{t_1}}(\gamma_{\transformPPEval}),
                  \proofPPEval(t_1)(\gamma_{\transformErasure}, \gamma_{\transformPPEval}, \tilde{\gamma}))
                  \in \relationPPEvalProof(\tau),
              \end{align}
              where the final inclusion follows by the induction hypothesis on $\Gamma \vdash t_1 : \tau$.
              The case of $\semP{\transformErasure{t}}(\gamma_{\transformErasure}) = \semFalse$ follows
              by analogous reasoning, where the induction hypothesis on $\Gamma \vdash t_2 : \tau$ is applied instead
              in the final step.
        \item \textbf{Case } \(
              \boxed{
                  \begin{array}{c}
                      \Gamma \vdash t: \type
                      \\
                      \hline
                      \Gamma \vdash \codeRHide~t: \typeResidual~\type
                  \end{array}
              }
              \)
              \textbf{: }
              \par\vskip0.5em\relax

              We define
              \begin{equation}
                  \proofPPEval(\codeRHide~t)(\gamma_{\transformErasure}, \gamma_{\transformPPEval}, \tilde{\gamma}) \defas
                  \proofPPEval(t)(\gamma_{\transformErasure}, \gamma_{\transformPPEval}, \tilde{\gamma}).
              \end{equation}
              It follows
              that
              $\relationPPEvalProof(\codeRHide~t)(\gamma_{\transformErasure}, \gamma_{\transformPPEval}, \tilde{\gamma})
                  = \relationPPEvalProof(t)(\gamma_{\transformErasure}, \gamma_{\transformPPEval}, \tilde{\gamma})
                  \in \relationPPEvalProof(\tau)
                  = \relationPPEvalProof(\typeResidual~\tau)$.
        \item \textbf{Case } \(
              \boxed{
                  \begin{array}{c}
                      \Gamma \vdash t :\typeResidual~\type
                      \quad
                      \extend{\Gamma}{x}{\type} \vdash t' : \type'
                      \quad
                      \judgeHidden~\type'
                      \\
                      \hline
                      \Gamma \vdash x \codeRGets t;~t': \type'
                  \end{array}
              }
              \)
              \textbf{: }
              \par\vskip0.5em\relax

              We define
              \begin{equation}
                  \proofPPEval(x \codeRGets t;~t')(\gamma_{\transformErasure}, \gamma_{\transformPPEval}, \tilde{\gamma}) \defas
                  \proofPPEval(\codeLet~x = t;~t')(\gamma_{\transformErasure}, \gamma_{\transformPPEval}, \tilde{\gamma}).
              \end{equation}
              It follows
              that
              \begin{equation}
                  \relationPPEvalProof(x \codeRGets t;~t')(\gamma_{\transformErasure}, \gamma_{\transformPPEval}, \tilde{\gamma})
                  = \relationPPEvalProof(\codeLet~x = t;~t')(\gamma_{\transformErasure}, \gamma_{\transformPPEval}, \tilde{\gamma})
                  \in \relationPPEvalProof(\tau')
                  = \relationPPEvalProof(\typeResidual~\tau').
              \end{equation}
        \item \textbf{Case } \(
              \boxed{
                  \begin{array}{c}
                      \Gamma \vdash t: \type
                      \\
                      \hline
                      \Gamma \vdash \codePPReturn~t: \typePProb~\type
                  \end{array}
              }
              \)
              \textbf{: }
              \par\vskip0.5em\relax

              We define
              \begin{equation}
                  \proofPPEval(\codePPReturn~t)(\gamma_{\transformErasure}, \gamma_{\transformPPEval}, \tilde{\gamma}) \defas
                  \_ \mapsto \semPReturn\left(\relationPPEvalProof(t)(\gamma_{\transformErasure}, \gamma_{\transformPPEval}, \tilde{\gamma})\right).
              \end{equation}
              Let
              $(x_{\transformErasure}, x_{\transformPPEval}, \tilde{x}) \defas
                  \relationPPEvalProof(\codePPReturn~t)(\gamma_{\transformErasure}, \gamma_{\transformPPEval}, \tilde{\gamma})$.
              We have
              \begin{align}
                  x_{\transformErasure}                     & =
                  \semPReturn(\semP{\transformErasure{t}}(\gamma_{\transformErasure}))
                  \label{eq:proof-langpp-realization-ppreturn-erasure}
                  \\
                  \semSeed \semPBind{} x_{\transformPPEval} & =
                  \semPReturn(\semP{\transformPPEval{t}}(\gamma_{\transformPPEval}))
                  \label{eq:proof-langpp-realization-ppreturn-peval}
                  \\
                  \semSeed \semPBind{} \tilde{x}            & =
                  \semPReturn(\relationPPEvalProof(t)(\gamma_{\transformErasure}, \gamma_{\transformPPEval}, \tilde{\gamma})).
                  \label{eq:proof-langpp-realization-ppreturn-proof}
              \end{align}
              From the induction hypothesis on $\Gamma \vdash t : \tau$, it follows that
              $(x_{\transformErasure}, x_{\transformPPEval}, \tilde{x}) \in \relationPPEvalProof(\typePProb~\tau)$.
        \item \textbf{Case } \(
              \boxed{
                  \begin{array}{c}
                      \Gamma \vdash t: \typeProb~\type
                      \quad
                      \judgeTransparent~\type
                      \\
                      \hline
                      \Gamma \vdash \codePPPrimal~t: \typePProb~\type
                  \end{array}
              }
              \)
              \textbf{: }
              \par\vskip0.5em\relax

              We define
              \begin{equation}
                  \proofPPEval(\codePPPrimal~t)(\gamma_{\transformErasure}, \gamma_{\transformPPEval}, \tilde{\gamma}) \defas
                  s \mapsto
                  \semPReturn\left(
                  \semSnd\left(\proofPPEval(t)(\gamma_{\transformErasure}, \gamma_{\transformPPEval}, \tilde{\gamma})\right)(s)
                  \right).
              \end{equation}
              Let
              $(x_{\transformErasure}, x_{\transformPPEval}, \tilde{x}) \defas
                  \relationPPEvalProof(\codePPPrimal~t)(\gamma_{\transformErasure}, \gamma_{\transformPPEval}, \tilde{\gamma})$.
              We have
              \begin{align}
                  x_{\transformErasure}                     & =
                  \semP{\transformErasure{t}}(\gamma_{\transformErasure})
                  \label{eq:proof-langpp-realization-ppprimal-erasure}
                  \\
                  \semSeed \semPBind{} x_{\transformPPEval} & =
                  P(\semSnd(\semP{\transformPPEval{t}}(\gamma_{\transformPPEval})))(\semSeed)
                  \label{eq:proof-langpp-realization-ppprimal-peval}
                  \\
                  \semSeed \semPBind{} \tilde{x}            & =
                  P(\semSnd(\proofPPEval(t)(\gamma_{\transformErasure}, \gamma_{\transformPPEval}, \tilde{\gamma})))(\semSeed).
                  \label{eq:proof-langpp-realization-ppprimal-proof}
              \end{align}
              From the induction hypothesis on $\Gamma \vdash t : \typeProb~\tau$,
              utilizing the second condition imposed in the construction of $\relationPPEvalProof(\typeProb~\tau)$ in \cref{def:realization-logical-relation},
              it follows that
              $(x_{\transformErasure}, x_{\transformPPEval}, \tilde{x}) \in \relationPPEvalProof(\typePProb~\tau)$.
        \item \textbf{Case } \(
              \boxed{
                  \begin{array}{c}
                      \Gamma \vdash t: \typeResidual~(\typeProb~\type)
                      \quad
                      \judgeHidden~\type
                      \\
                      \hline
                      \Gamma \vdash \codePPResidual~t: \typePProb~\type
                  \end{array}
              }
              \)
              \textbf{: }
              \par\vskip0.5em\relax

              We define
              \begin{equation}
                  \proofPPEval(\codePPResidual~t)(\gamma_{\transformErasure}, \gamma_{\transformPPEval}, \tilde{\gamma}) \defas
                  \_ \mapsto
                  \semFst\left(\proofPPEval(t)(\gamma_{\transformErasure}, \gamma_{\transformPPEval}, \tilde{\gamma})\right).
              \end{equation}
              Let
              $(x_{\transformErasure}, x_{\transformPPEval}, \tilde{x}) \defas
                  \relationPPEvalProof(\codePPResidual~t)(\gamma_{\transformErasure}, \gamma_{\transformPPEval}, \tilde{\gamma})$.
              We have
              \begin{align}
                  x_{\transformErasure}                     & =
                  \semP{\transformErasure{t}}(\gamma_{\transformErasure})
                  \label{eq:proof-langpp-realization-ppresidual-erasure}
                  \\
                  \semSeed \semPBind{} x_{\transformPPEval} & =
                  \semFst(\semP{\transformPPEval{t}}(\gamma_{\transformPPEval}))
                  \label{eq:proof-langpp-realization-ppresidual-peval}
                  \\
                  \semSeed \semPBind{} \tilde{x}            & =
                  \semFst\left(\proofPPEval(t)(\gamma_{\transformErasure}, \gamma_{\transformPPEval}, \tilde{\gamma})\right).
                  \label{eq:proof-langpp-realization-ppresidual-proof}
              \end{align}
              From the induction hypothesis on $\Gamma \vdash t : \typeProb~\tau$,
              utilizing the first condition imposed in the construction of $\relationPPEvalProof(\typeProb~\tau)$ in \cref{def:realization-logical-relation},
              it follows that
              $(x_{\transformErasure}, x_{\transformPPEval}, \tilde{x}) \in \relationPPEvalProof(\typePProb~\tau)$.
        \item \textbf{Case } \(
              \boxed{
                  \begin{array}{c}
                      \Gamma \vdash t: \typePProb~\type
                      \quad
                      \extend{\Gamma}{x}{\type} \vdash t': \typePProb~\type'
                      \\
                      \hline
                      \Gamma \vdash x \codePPGets t;~t':
                      \typePProb~\type'
                  \end{array}
              }
              \)
              \textbf{: }
              \par\vskip0.5em\relax

              We define
              \begin{align}
                  \proofPPEval(x \codePPGets t;~t')(\gamma_{\transformErasure}, \gamma_{\transformPPEval}, \tilde{\gamma}) \defas
                  \begin{aligned}[t]
                       & s \mapsto
                      \semLet~(s_0, s_1) = \semSplit(s)
                      \\
                       & \hspace{-9em}\Big(\proofPPEval(t)(\gamma_{\transformErasure}, \gamma_{\transformPPEval}, \tilde{\gamma})(s_0) \semPBind{}
                      \\
                       & \hspace{-9em}\left((u_{\transformErasure}, u_{\transformPPEval}, \tilde{u}) \mapsto \proofPPEval(t')(
                      \gamma_{\transformErasure}[x \setvar u_{\transformErasure}],
                      \gamma_{\transformPPEval}[x \setvar u_{\transformPPEval}],
                      \tilde{\gamma}[x \setvar \tilde{u}])(s_1)\right)\Big).
                  \end{aligned}
              \end{align}
              Let
              $(x_{\transformErasure}, x_{\transformPPEval}, \tilde{x}) \defas
                  \relationPPEvalProof(x \codePPGets t;~t')(\gamma_{\transformErasure}, \gamma_{\transformPPEval}, \tilde{\gamma})$.
              From the induction hypothesis on $\Gamma \vdash t : \typePProb~\tau$, we have
              \begin{align}
                  P(\semProj_1)(\semSeed \semPBind{} \proofPPEval(t)(\gamma_{\transformErasure}, \gamma_{\transformPPEval}, \tilde{\gamma})) & =
                  \semP{\transformErasure{t}}(\gamma_{\transformErasure}) \quad
                  \label{eq:proof-langpp-realization-term-t-first-proj}
                  \\
                  P(\semProj_2)(\semSeed \semPBind{} \proofPPEval(t)(\gamma_{\transformErasure}, \gamma_{\transformPPEval}, \tilde{\gamma})) & =
                  \semSeed \semPBind{} \semP{\transformPPEval{t}}(\gamma_{\transformPPEval}).
                  \label{eq:proof-langpp-realization-term-t-second-proj}
              \end{align}
              From the induction hypothesis on $\Gamma, x : \tau \vdash t' : \typePProb~\tau'$,
              for all $(u_{\transformErasure}, u_{\transformPPEval}, \tilde{u}) \in \relationPPEvalProof(\tau)$
              we have
              \begin{align}
                   & P(\semProj_1)\left(\semSeed \semPBind{} \proofPPEval(t')(
                      \gamma_{\transformErasure}[x \setvar u_{\transformErasure}],
                      \gamma_{\transformPPEval}[x \setvar u_{\transformPPEval}],
                      \tilde{\gamma}[x \setvar \tilde{u}])\right)
                  \label{eq:proof-langpp-realization-term-tprime-first-proj}
                  \\
                   & \quad =
                  \semP{\transformErasure{t'}}(\gamma_{\transformErasure}[x \setvar u_{\transformErasure}])
                  \nonumber
                  \\
                   & P(\semProj_2)\left(\semSeed \semPBind{} \proofPPEval(t')(
                      \gamma_{\transformErasure}[x \setvar u_{\transformErasure}],
                      \gamma_{\transformPPEval}[x \setvar u_{\transformPPEval}],
                      \tilde{\gamma}[x \setvar \tilde{u}])\right)
                  \label{eq:proof-langpp-realization-term-tprime-second-proj}
                  \\
                   & \quad =
                  \semSeed \semPBind{} \semP{\transformPPEval{t'}}(\gamma_{\transformPPEval}[x \setvar u_{\transformPPEval}]).
                  \nonumber
              \end{align}
              Observe that
              \begin{align}
                   & P(\semProj_1)\left(\semSeed \semPBind{} \tilde{x}\right)
                  \\
                   & \quad =
                  \begin{aligned}[t]
                       & s_0 \semPGets \semSeed                                                                                                                                              \\
                       & s_1 \semPGets \semSeed                                                                                                                                              \\
                       & (u_{\transformErasure}, u_{\transformPPEval}, \tilde{u}) \semPGets \proofPPEval(t)(\gamma_{\transformErasure}, \gamma_{\transformPPEval}, \tilde{\gamma})(s_0) \\
                       & P(\semProj_1)\left(\proofPPEval(t')(
                          \gamma_{\transformErasure}[x \setvar u_{\transformErasure}],
                          \gamma_{\transformPPEval}[x \setvar u_{\transformPPEval}],
                          \tilde{\gamma}[x \setvar \tilde{u}])(s_1)\right)
                  \end{aligned}
                  \label{eq:proof-langpp-realization-first-proj-initial}
                  \\
                   & \quad =
                  \begin{aligned}[t]
                       & s_0 \semPGets \semSeed                                                                                                                                              \\
                       & (u_{\transformErasure}, u_{\transformPPEval}, \tilde{u}) \semPGets \proofPPEval(t)(\gamma_{\transformErasure}, \gamma_{\transformPPEval}, \tilde{\gamma})(s_0) \\
                       & \semP{\transformErasure{t'}}(\gamma_{\transformErasure}[x \setvar u_{\transformErasure}])
                  \end{aligned}
                  \label{eq:proof-langpp-realization-first-proj-handle-s1}
                  \\
                   & \quad =
                  \begin{aligned}[t]
                       & u_{\transformErasure} \semPGets \semP{\transformErasure{t}}(\gamma_{\transformErasure})   \\
                       & \semP{\transformErasure{t'}}(\gamma_{\transformErasure}[x \setvar u_{\transformErasure}])
                  \end{aligned}
                  \label{eq:proof-langpp-realization-first-proj-handle-s0}
                  \\
                   & \quad =
                  x_{\transformErasure},
                  \label{eq:proof-langpp-realization-first-proj-conclude}
              \end{align}
              where \cref{eq:proof-langpp-realization-first-proj-handle-s1} follows from
              \cref{eq:proof-langpp-realization-term-tprime-first-proj}
              and \cref{eq:proof-langpp-realization-first-proj-handle-s0} follows from
              \cref{eq:proof-langpp-realization-term-t-first-proj}.
              Additionally,
              \begin{align}
                   & P(\semProj_2)(\semSeed \semPBind{} \tilde{x})
                  \\
                   & \quad=
                  \begin{aligned}[t]
                       & s_0 \semPGets \semSeed                                                                                                                                              \\
                       & s_1 \semPGets \semSeed                                                                                                                                              \\
                       & (u_{\transformErasure}, u_{\transformPPEval}, \tilde{u}) \semPGets \proofPPEval(t)(\gamma_{\transformErasure}, \gamma_{\transformPPEval}, \tilde{\gamma})(s_0) \\
                       & P(\semProj_2)\left(\proofPPEval(t')(
                          \gamma_{\transformErasure}[x \setvar u_{\transformErasure}],
                          \gamma_{\transformPPEval}[x \setvar u_{\transformPPEval}],
                          \tilde{\gamma}[x \setvar \tilde{u}])(s_1)\right)
                  \end{aligned}
                  \label{eq:proof-langpp-realization-second-proj-initial}
                  \\
                   & \quad= \begin{aligned}[t]
                                 & s_0 \semPGets \semSeed                                                                                                                                              \\
                                 & s_1 \semPGets \semSeed                                                                                                                                              \\
                                 & (u_{\transformErasure}, u_{\transformPPEval}, \tilde{u}) \semPGets \proofPPEval(t)(\gamma_{\transformErasure}, \gamma_{\transformPPEval}, \tilde{\gamma})(s_0) \\
                                 & \semP{\transformPPEval{t'}}(\gamma_{\transformPPEval}[x \setvar u_{\transformPPEval}])(s_1)
                            \end{aligned}
                  \label{eq:proof-langpp-realization-second-proj-handle-s1}
                  \\
                   & \quad= \begin{aligned}[t]
                                 & s_0 \semPGets \semSeed                                                                      \\
                                 & s_1 \semPGets \semSeed                                                                      \\
                                 & u_{\transformPPEval} \semPGets \semP{\transformPPEval{t}}(\gamma_{\transformPPEval})(s_0)          \\
                                 & \semP{\transformPPEval{t'}}(\gamma_{\transformPPEval}[x \setvar u_{\transformPPEval}])(s_1)
                            \end{aligned}
                  \label{eq:proof-langpp-realization-second-proj-handle-s0}
                  \\
                   & \quad= x_{\transformPPEval},
                  \label{eq:proof-langpp-realization-second-proj-conclude}
              \end{align}
              where \cref{eq:proof-langpp-realization-second-proj-handle-s1} follows from
              \cref{eq:proof-langpp-realization-term-tprime-second-proj}
              and \cref{eq:proof-langpp-realization-second-proj-handle-s0} follows from
              \cref{eq:proof-langpp-realization-term-t-second-proj}.
              By \cref{eq:proof-langpp-realization-first-proj-conclude,eq:proof-langpp-realization-second-proj-conclude},
              we conclude that
              $(x_{\transformErasure}, x_{\transformPPEval}, \tilde{x}) \in \relationPPEvalProof(\typePProb~\tau')$.
    \end{itemize}
    To conclude the proof, consider a term $\Gamma \vdash t : \typePProb~\tau$ of the form specified in the theorem statement.
    Suppose $(\gamma_{\transformErasure}, \gamma_{\transformPPEval}) \in \relationPPEval(\Gamma)$.
    Then there exists a $\tilde\gamma$ such that
    $(\gamma_{\transformErasure}, \gamma_{\transformPPEval}, \tilde{\gamma}) \in \relationPPEvalProof(\Gamma)$.
    Hence
    \begin{equation}
        \left(\semP{\transformErasure{t}}(\gamma_{\transformErasure}),
        \semP{\transformPPEval{t}}(\gamma_{\transformPPEval}),
        \proofPPEval(t)(\gamma_{\transformErasure}, \gamma_{\transformPPEval}, \tilde{\gamma})\right)
        =
        \relationPPEvalProof(t)(\gamma_{\transformErasure}, \gamma_{\transformPPEval}, \tilde{\gamma})
        \in \relationPPEvalProof(\tau),
        \label{eq:proof-langpp-realization-pprob-apply-functor}
    \end{equation}
    which implies that $(\semP{\transformErasure{t}}(\gamma_{\transformErasure}),
        \semP{\transformPPEval{t}}(\gamma_{\transformPPEval})) \in \relationPPEval(\typePProb~\tau)$.
    By assumption, $\tau$ is constructed from base types, products, functions,
    and the $\typeResidual~\tau$ constructor, which implies that
    $\relationPPEvalProof(\tau) = \{(x_{\transformErasure}, x_{\transformPPEval}, \semUnit) \mid x_{\transformErasure} = x_{\transformPPEval}\}$,
    and therefore
    \begin{equation}
        \relationPPEval(\typePProb~\tau) = \{(x_{\transformErasure}, x_{\transformPPEval}) \mid x_{\transformErasure} = \semSeed \semPBind{} x_{\transformPPEval}\}.
        \label{eq:proof-langpp-realization-pprob-relation-specialized}
    \end{equation}
    Substituting \cref{eq:proof-langpp-realization-pprob-relation-specialized} into \cref{eq:proof-langpp-realization-pprob-apply-functor}
    recovers the theorem statement.
\end{proof}

\clearpage
\section{Deferred Material for \texorpdfstring{\hyperref[sec:gradient-inference]{\S\ref*{sec:gradient-inference}}}{\S\ref{sec:gradient-inference}}: Proof of Soundness of Gradient Inference}
\label{appendix:deferred-gradient-inference}

In this appendix, we supply the proof of \cref{thm:grad-inference-correctness},
first recalling the main definitions needed.
\begin{definition}[Expectation map]
    We define $\semExpect : P(\setReal^n) \to \setReal^n$ by
    $\semExpect(\mu) \defas \int_{\setReal^n} x \mu(\diff x)$.
\end{definition}
\begin{definition}[Sound inference]
    \label{def:unbiased-inference}
    An inference macro
    $\transformInference{\cdot}$
    acts on a term
    $\Gamma \vdash t : \typeProb~\typeReal$
    of $\langP$
    and produces a term
    $\Gamma \vdash \transformInference{t} : \typeProb~\typeReal$.
    We say that $\transformInference{\cdot}$ is sound if
    for all $\gamma : \semP{\Gamma}$,
    \begin{align}
    \semExpect \left(\semP{\transformInference{t}}(\gamma)\right)
    = \semExpect \left(\semP{t}(\gamma)\right).
    \tag{\ref*{eq:inference-soundness}}
    \end{align}
\end{definition}
\begin{definition}[Sound AD macro]
    \label{def:unbiased-differentiation}
    An AD macro
    $\transformAD{\cdot}$
    acts on a term
    $\Gamma \vdash t : \typeReal^n \typeFunction \typeProb~\typeReal$
    of $\langP$
    and produces a term
    $\Gamma \vdash \transformAD{t} : \typeReal^n \typeFunction \typeProb~\typeReal^n$.
    We say that $\transformAD{\cdot}$ is sound if for all $\gamma : \semP{\Gamma}$ and $\theta : \setReal^n$,
    \begin{align}
        \semExpect\left(\semP{\transformAD{t}}(\gamma)(\theta)\right)
        = \nabla_\theta \left[\semExpect\left(\semP{t}(\gamma)(\theta)\right)\right].
        \tag{\ref*{eq:ad-soundness}}
    \end{align}
\end{definition}
\begin{definition}[Gradient inference macro]
    \label{def:system-transformations}
    Given an inference macro $\transformInference{\cdot}$,
    we define the difference macro
    $\transformGradInfFD{\cdot}$,
    acting on a term $\emptyEnv \vdash t : \typeReal^n \typeFunction \typeProb~\typeReal$
    and producing a term $\emptyEnv \vdash \transformGradInfFD{t} : (\typeReal\typeProduct\typeReal)^n \typeFunction \typeProb~\typeReal$,
    by
    \begin{align}
        \transformGradInfFD{t} \defas
        {\lambda \theta. \set{s \codePGets \codeSeed;~\transformInference{(x_A, x_B) \codePGets \transformPPEval{\transformModCoupling{t}}~\theta~s;~\codePReturn~(x_B - x_A)}}}
        \tag{\ref*{eq:grad-inf-fd}}
    \end{align}
    Given also an AD macro $\transformAD{\cdot}$,
    we define the gradient inference macro
    $\transformGradInfAD{\cdot}$,
    acting on a term $\emptyEnv \vdash t : \typeReal^n \typeFunction \typeProb~\typeReal$
    and producing a term $\emptyEnv \vdash \transformGradInfAD{t} : \typeReal^n \typeFunction \typeProb~\typeReal^n$, by
    \begin{align}
        \transformGradInfAD{t} \defas
        \lambda(\theta_1, \dots, \theta_n).
        \big(\transformAD{\lambda(\varepsilon_1, \dots, \varepsilon_n). \transformGradInfFD{t}~\left((\theta_1, \theta_1+\varepsilon_1), \dots, (\theta_n, \theta_n+\varepsilon_n)\right)}
        \,(0, \dots, 0)\big).
        \tag{\ref*{eq:grad-inf-ad}}
    \end{align}
\end{definition}

\thmGradInferenceCorrectness*
\begin{proof}
    We have
    \begin{align}
        &\semExpect\left(\semP{\transformGradInfFD{t}}(\theta_{\transformCoupling})\right)
        \nonumber
        \\
        &\quad=
        \semExpect\left(
        s \semPGets \semSeed;
        ~\transformInference{(x_A, x_B) \semPGets \semP{\transformPPEval{\transformModCoupling{t}}}(\theta)(s);
        ~\semPReturn~(x_B - x_A)}\right)
        \label{eq:proof-grad-inference-correctness-macro-def}
        \\
        &\quad=
        \semExpect\left(
        s \semPGets \semSeed;
        (x_A, x_B) \semPGets \semP{\transformPPEval{\transformModCoupling{t}}}(\theta)(s);
        ~\semPReturn~(x_B - x_A)\right)
        \label{eq:proof-grad-inference-correctness-sound-inference}
        \\
        &\quad=
        \semExpect\left((x_A, x_B) \semPGets
        \semSeed \semPBind{} \semP{\transformPPEval{\transformModCoupling{t}}}(\theta);
        ~\semPReturn~(x_B-x_A)\right)
        \label{eq:proof-grad-inference-correctness-move-randomize}
         \\
        &\quad=
        \semExpect\left((x_A, x_B) \semPGets
        \semP{\transformErasure{\transformModCoupling{t}}}(\theta);
        ~\semPReturn~(x_B-x_A)\right)
        \label{eq:proof-grad-inference-correctness-apply-erasure}
        \\
        &\quad=
        \semExpect\left(\semP{t}(\theta_B)\right)
        -
        \semExpect\left(\semP{t}(\theta_A)\right),
        \label{eq:proof-grad-inference-correctness-apply-coupling}
    \end{align}
    where \cref{eq:proof-grad-inference-correctness-macro-def} follows from the definition of $\transformGradInfFD{\cdot}$,
    \cref{eq:proof-grad-inference-correctness-sound-inference} follows from the soundness of the inference macro,
    \cref{eq:proof-grad-inference-correctness-move-randomize} follows from associativity
    of the monadic bind,
    \cref{eq:proof-grad-inference-correctness-apply-erasure} follows from \cref{thm:langPP-realization},
    and \cref{eq:proof-grad-inference-correctness-apply-coupling} follows from \cref{thm:factorized-coupling-correctness}.
    Now, for $\mathbf{0}$ the zero vector note that
    \begin{align}
         & \semExpect(\semP{\transformGradInfAD{t}}(\theta)) \nonumber
        \\
         & \quad =
        \semExpect\left(\semP{\transformAD{\lambda(\varepsilon_1, \dots, \varepsilon_n).
                    \transformGradInfFD{t}~\left((\theta_1, \theta_1+\varepsilon_1), \dots, (\theta_n, \theta_n+\varepsilon_n)\right)}}(\theta)(\mathbf{0})\right)
        \\
         & \quad = \nabla_\varepsilon \left[\semExpect\left(\semP{\lambda(\varepsilon_1, \dots, \varepsilon_n).
            \transformGradInfFD{t}~\left((\theta_1, \theta_1+\varepsilon_1), \dots, (\theta_n, \theta_n+\varepsilon_n)\right)}(\theta)(\varepsilon)\right)\right]_{\varepsilon = \mathbf{0}}
        \label{eq:grad-inference-correctness-sound-AD}
        \\
         & \quad = \nabla_\varepsilon \left[\semExpect(\semP{t}(\theta + \varepsilon)) - \semExpect(\semP{t}(\theta))\right]_{\varepsilon = \mathbf{0}}
        \\
         & \quad = \nabla_\theta \left[\semExpect(\semP{t}(\theta))\right],
    \end{align}
    where \cref{eq:grad-inference-correctness-sound-AD} follows from the soundness of AD (\cref{def:unbiased-differentiation}).
\end{proof}

\clearpage
\section{Deferred Material for \texorpdfstring{\hyperref[sec:applications]{\S\ref*{sec:applications}}}{\S\ref{sec:applications}}: Complete List of Numerical Parameters}
\label{appendix:deferred-applications}

In this appendix, we provide a complete list of the numerical parameters used for the case studies
in \cref{sec:applications}.

\noindentparagraph{\normalfont\bfseries M/M/c Queuing Model (\cref{sec:overview,sec:applications-queuing}).}

The generative model is given in \cref{fig:overview-mathematical-model}.
The experiment parameters are as follows:
\begin{itemize}
    \item Number of queuing events $n \in \set{25, 50, 75, 100, 125, 150, 175, 200}$
    \item Packet arrival rate $\theta = 15$
    \item Number of Monte Carlo samples per estimator per data point $s = 1000$
    \item Control variate for score method: $s$-sample empirical mean of $X_n$
\end{itemize}

\noindentparagraph{\normalfont\bfseries Option Pricing Model (\cref{sec:applications-option-pricing}).}
The generative model is given in \cref{fig:option-pricing-tree-model}.
The experiment parameters are as follows:
\begin{itemize}
    \item Expiration time of option
    $T \in \set{0.05,\allowbreak 0.10,\allowbreak 0.15,\allowbreak 0.20,\allowbreak
    0.25,\allowbreak 0.30,\allowbreak 0.35,\allowbreak 0.40,\allowbreak 0.45,\allowbreak
    0.50,\allowbreak 0.55,\allowbreak 0.60,\allowbreak 0.65,\allowbreak 0.70,\allowbreak
    0.75,\allowbreak 0.80,\allowbreak 0.85,\allowbreak 0.90,\allowbreak 0.95,\allowbreak 1.00}$ years
    \item Simulation time step $\Delta T = 0.01$ years
    \item Interest rate $r = 15\% / \mathrm{year}$
    \item Price volatility $\sigma = 5\% / \mathrm{year}$
    \item Initial asset price $S_0 = 40$
    \item Strike price $K = 41$
    \item Number of Monte Carlo samples per estimator per data point $s = 1000$
    \item Control variate for score method: $s$-sample empirical mean of $\ell$
    \item Number of particles for twisted SMC (for \GradInfB{}): 3
\end{itemize}

\noindentparagraph{\normalfont\bfseries Gene Transcription Model (\cref{sec:applications-gene-transcription}).}
The generative model is given in \cref{fig:gene-transcription-gillespie-algorithm}.
The experiment parameters are as follows:
\begin{itemize}
    \item Initial state: $M_0 = 5$, $P_0 = 40$
    \item Ground truth (hidden) parameters: $\alpha = 18$, $\beta = 8$, $\gamma = 1.5$, $\delta = 4$
    \item Initial optimization parameters: $\alpha = 1$, $\beta = 1$, $\gamma = 1$, $\delta = 1$
    \item Simulation time period $T = 2.5$
    \item Number of simulation time steps in generative model: $N = 1500$
    \item Number of sample traces to calculate relative MSE: $k = 100$
    \item Number of Monte Carlo samples per estimator per data point (for table) $s = 100$
    \item Control variate for score method: $s$-sample empirical mean of $\ell$
    \item Stochastic gradient descent learning rate $\eta$ tuned separately for each method
    in multiplicative steps of $\sqrt{10}$
    \item Number of particles for twisted SMC (for \GradInfD{}): 2
\end{itemize}

\clearpage

\makeatletter
\setcounter{NAT@ctr}{0}%
\begingroup
\renewcommand{\NAT@wrout}[5]{%
  \if@filesw
      {\let\protect\noexpand\let~\relax
       \immediate
       \write\@auxout{\string\bibcite{#5}{{S\the\c@NAT@ctr}{#2}{{#3}}{{#4}}}}}\fi
  \ignorespaces}%
\renewcommand{\@biblabel}[1]{[#1]}%
\bibliographystyleapp{ACM-Reference-Format}%
\bibliographyapp{paper}%
\endgroup
\makeatother
}

\end{document}